\DeclareMathOperator*{\argmax}{arg\,max}
\def\BibTeX{{\rm B\kern-.05em{\sc i\kern-.025em b}\kern-.08em
    T\kern-.1667em\lower.7ex\hbox{E}\kern-.125emX}}
\newcommand{\addb}[1]{{\color{black}{#1}}}
\newcommand{\kripke}{\mathcal{M}_t}
\newcommand{\everyK}{\mathcal{E}}
\newcommand{\everyA}{\mathcal{A}}
\newcommand{\timeD}{T}
\newcommand{\timeV}{t_v}
\newcommand{\trecE}{t_{\mathrm{rec}}^{\mathrm{epi}}}
\newcommand{\tdurE}{t_{\mathrm{dur}}^{\mathrm{epi}}}
\newcommand{\trecA}{t_{\mathrm{rec}}^{\mathrm{act}}}
\newcommand{\tdurA}{t_{\mathrm{dur}}^{\mathrm{act}}}
\newcommand{\agentSet}{N}
\newcommand{\agentSetdet}{N_\mathrm{det}}
\newcommand{\graph}{G}
\newcommand{\edges}{E}
\newcommand{\agentsize}{n}
\newcommand{\neighbors}{V}
\newcommand{\Xspace}{X}
\newcommand{\Xstate}{\boldsymbol{x}}
\newcommand{\Yspace}{Y}
\newcommand{\Mspace}{M}
\newcommand{\msg}{m}
\newcommand{\Ystate}{\boldsymbol{y}}
\newcommand{\Ystateh}{\hat{\boldsymbol{y}}}
\newcommand{\Ispace}{I}
\newcommand{\Istate}{\iota}
\newcommand{\uExt}{\boldsymbol{u}^{\mathrm{ext}}}
\newcommand{\UExt}{U^{\mathrm{ext}}}
\newcommand{\UExti}{U^{\mathrm{ext}}_i}
\newcommand{\uExtit}{u^{\mathrm{ext}}_{i,t}}
\newcommand{\uExto}{u^{\mathrm{ext}}_1}
\newcommand{\uExtt}{u^{\mathrm{ext}}_2}
\newcommand{\uExtn}{u^{\mathrm{ext}}_n}
\newcommand{\UEps}{U^{\mathrm{epi}}}
\newcommand{\uEpsit}{u^{\mathrm{epi}}_{i,t}}
\newcommand{\extmap}{f}
\newcommand{\senmap}{s}
\newcommand{\worlds}{W}
\newcommand{\val}{v}
\newcommand{\pSet}{\Omega}
\newcommand{\p}{p}
\newcommand{\sco}{\lambda}
\newcommand{\scos}{\Lambda}
\newcommand{\worldw}{w}
\newcommand{\worldv}{w'}
\newcommand{\worldwt}{w_t}
\newcommand{\rel}{R}
\newcommand{\run}{r}
\newcommand{\m}{r}
\newcommand{\etaE}{\eta^{\mathrm{epi}}}
\newcommand{\etaA}{\eta^{\mathrm{act}}}
\newcommand{\taumax}{\tau_{\mathrm{max}}}
\newcommand{\epPi}{\pi_i^{\text{epi}}}
\newcommand{\drate}{\delta}
\newcommand{\knows}{\mathcal{K}}
\newcommand{\possible}{\mathcal{P}}
\newcommand{\glob}{\mathcal{G}}
\newcommand{\until}{\mathcal{U}}
\newcommand{\lang}{\mathcal{L}}
\newcommand{\formula}{\varphi}
\newcommand{\Dspace}{D}
\newcommand{\dst}{d}
\newcommand{\conf}{\zeta}
\newcommand{\arm}{a}
\newcommand{\var}{\sigma}
\newcommand{\dstd}{d_o}
\newcommand{\tvio}{t_v}
\newcommand{\tdet}{t_\text{det}}
\newcommand{\ndct}{n_{\text{steps}}}
\newcommand{\lcomm}{l_{\text{comm}}}
\newcommand{\link}{l}
\newcommand{\mean}{\mu}
\newcommand{\maxRE}{\alpha_1}
\newcommand{\minDE}{\beta_1}
\newcommand{\maxRA}{\alpha_2}
\newcommand{\minDA}{\beta_2}
\newcommand{\kpros}{K}
\newcommand{\kpro}{k}
\newcommand{\lpro}{l}
\newcommand{\resS}{\mathcal{R}_{\mathrm{system}}}
\newcommand{\resE}{\mathcal{R}_{\mathrm{epistemic}}}
\newcommand{\resA}{\mathcal{R}_{\mathrm{action}}}
\newtheorem{definition}{Definition}
\newtheorem{remark}{Remark}
\newtheorem{assumption}{Assumption}
\newtheorem{example}{Example}
\newtheorem{theorem}{Theorem}
\newtheorem{corollary}{Corollary}
\def\BibTeX{{\rm B\kern-.05em{\sc i\kern-.025em b}\kern-.08em
    T\kern-.1667em\lower.7ex\hbox{E}\kern-.125emX}}
\begin{document}
%\receiveddate{XX Month, XXXX}
%\reviseddate{XX Month, XXXX}
%\accepteddate{XX Month, XXXX}
%\publisheddate{XX Month, XXXX}
%\currentdate{11 January, 2024}
%\doiinfo{OJCOMS.2024.011100}

\title{Logic-Driven Semantic Communication for Resilient Multi-Agent Systems}

\author{Tamara Alshammari, \textit{Student Member}, \textit{IEEE} and Mehdi Bennis, \textit{Fellow}, \textit{IEEE}
\thanks{All authors are from Centre for Wireless Communication, University of Oulu, Finland (emails: \{firstname.lastname\}@oulu.fi).}}
%\corresp{CORRESPONDING AUTHOR: Tamara Alshammari (e-mail: tamara.alshammari@ oulu.fi).}
%\authornote{This work was supported by the Natural Sciences and Engineering Research Council (NSERC) of Canada.}
%\markboth{Preparation of Papers for IEEE OPEN JOURNALS}{Author \textit{et al.}}

\maketitle

\begin{abstract}
The advent of 6G networks is accelerating autonomy, intelligence, and interconnectedness in large-scale, decentralized multi-agent systems (MAS). While this evolution enables highly adaptive behavior, it also heightens vulnerability to stressors such as environmental changes, faults, misinformation, and adversarial behavior. Existing literature on resilience in decentralized and multi-agent systems largely focuses on isolated aspects, such as fault tolerance or robustness, without offering a principled, unified definition of multi-agent resilience. This gap limits the ability to design systems that can continuously sense, adapt, and recover under dynamic conditions. This article proposes a formal definition of MAS resilience grounded in two complementary dimensions: epistemic resilience, wherein agents recover and sustain accurate knowledge of the environment, and action resilience, wherein agents leverage that knowledge to coordinate and sustain goals under disruptions. We formalize resilience via temporal epistemic logic and quantify it using recoverability time (how quickly desired properties are re-established after a disturbance) and durability time (how long accurate beliefs and goal-directed behavior are sustained after recovery). We design an agent architecture and develop decentralized algorithms to achieve both epistemic and action resilience. We further provide formal verification guarantees, showing that our specifications are sound with respect to the metric bounds and admit finite-horizon verification, enabling design-time certification and lightweight runtime monitoring. Through a case study on distributed multi-agent decision-making under abrupt network stressors, we show that our approach consistently outperforms baseline methods. Collectively, our formal verification analysis and simulation results highlight that the proposed framework enables resilient, knowledge-driven decision-making and sustained operation, laying the groundwork for resilient decentralized multi-agent systems in next-generation communication infrastructures.
\end{abstract}

\begin{IEEEkeywords}
Resilience, Semantic Communication, Multi-agent System, Decentralized Systems, 6G \& beyond Networks, Recoverability, Durability, Epistemic Logic, Temporal Logic, Modal Logic, Kripke Structures, Kripke Semantics. 
\end{IEEEkeywords}

\section{Introduction}
The advent of 6G envisions pervasive autonomy, distributed intelligence, and dense interconnection among devices, users, and infrastructure, forming large-scale, decentralized multi-agent systems (MAS) capable of collective sensing, reasoning, and adaptation. Yet this very complexity heightens exposure to stressors - ranging from environmental shifts and component faults to misinformation cascades and adversarial behavior - making \textbf{resilience} a first-class performance metric for future networks\cite{khaloopour2024resilience,vesterby_2022,9963527,weissberger2023imt2030,cioschina2023miit}. 

Achieving such resilience, however, cannot rely on traditional centralized architectures. As 6G systems evolve toward massive scale and heterogeneity, reliance on centralized control becomes increasingly impractical. Ultra-dense device deployments, intermittent connectivity, stringent latency constraints, and privacy or administrative boundaries all limit the feasibility of global orchestration~\cite{bennis_urllc}. Centralized architectures also introduce single points of failure and are inherently vulnerable to communication delays, congestion, and attacks~\cite{AI_dec}\cite{dec1}. In contrast, \emph{decentralized MAS} distribute intelligence and decision-making across agents that sense, reason, and act locally while coordinating through limited communication~\cite{MAS1}\cite{MAS2}. Such decentralization enables faster adaptation to local disturbances. Consequently, any rigorous notion of resilience for future generations of wireless networks must be grounded in decentralized operation, where agents maintain functionality and alignment despite partial observability and dynamic network conditions.

\textbf{Motivation}. This challenge extends beyond wireless systems, reflecting a broader, multidisciplinary difficulty in defining resilience rigorously~\cite{Royce,mitre,scheffer2009early,dai2012generic,barker2013resilience,fang2016resilience,liao2018resilience,hulse2022understanding,wu2024towards}.
In wireless systems, resilience has become a central concern for next-generation (6G) systems~\cite{khaloopour2024resilience,resil,brothersarms}, a shift underscored by initiatives such as the NSF Resilient \& Intelligent NextG Systems (RINGS I–II) programs~\cite{WinNT} and by the recognition that communication infrastructures are now critical societal utilities exposed to both natural hazards and human-induced disruptions~\cite{brothersarms}. \addb{This need for fast, distributed resilience arises prominently in several
emerging 6G scenarios. For example, in resilient vehicle platooning, unexpected changes in road conditions, sudden interference, or sensing faults
can momentarily degrade coordination performance, yet safe operation demands a
rapid return to nominal behavior. Likewise, distributed sensor networks deployed for
disaster response may experience sensor failures, signal blockage, or
environmental shocks that temporarily reduce sensing quality, requiring the system
to reorganize and restore effectiveness. Similarly, UAV swarms and robotic
teams can encounter dynamic obstacles, shifting propagation environments, or
unexpected load changes that momentarily impair their operational efficiency,
necessitating fast and sustained recovery.} However, resilience in networked systems still lacks a principled, unified definition, leading to broad interpretations and uneven engineering practices~\cite{khaloopour2024resilience,vesterby_2022,9963527,weissberger2023imt2030,cioschina2023miit}. 
Resilience is often conflated with robustness or reliability, necessitating their distinction. Reliability, prominent in 5G, quantifies the probability of meeting performance targets and focuses on tail-risk statistics for metrics like latency and rate~\cite{8472907,rel1}. Robustness approaches typically rely on predefined uncertainty models and aim to optimize performance under adverse conditions. This is often formalized through frameworks such as min–max optimization, stochastic modeling, or risk-sensitive games, where the system is designed to withstand worst-case or probabilistic scenarios~\cite{ieeespectrum,rel2}. Both are indispensable, but they presume stressors that are sufficiently specified a priori. By contrast, resilience posits that unexpected events will occur and emphasizes the capacity to detect, localize, adapt online so that the system sustains its goals despite model violations and shifting conditions~\cite{ieeespectrum,resil}. The persistence of this gap in resilience definition stems from the immaturity of the underlying mathematical foundations required to rigorously formalize and quantify resilience in networked systems, thereby limiting both design-time analysis and runtime guarantees~\cite{ieeespectrum,resil,brothersarms}.

\textbf{Challenges and Contributions}. Bridging this gap requires not only formal definitions but also operational principles for how multi-agent systems adapt and maintain functionality under uncertainty. Indeed, coordinated response to unknown stressors hinges on continuously updating a \emph{shared situational picture} where agents sense, communicate, and cross-check as conditions evolve, so that decisions are guided by a living internal model of the world rather than static plans~\cite{prorok2021beyond,resil}. In this work, \textbf{we propose a new system-wide definition of resilience in MAS} defined along two coupled dimensions: (i) \emph{epistemic resilience} which refers to the agents’ ability to detect knowledge gaps, take epistemic actions to address those gaps, recover from discrepancies between their internal models (beliefs) about the environment and its actual state, and sustain accurate internal models after recovery and (ii) \emph{action resilience} which concerns the agents’ capacity to rapidly adapt their external action policies in light of updated knowledge, by implementing recovery protocols that re-establish effective alignment between current beliefs and physical actions, and sustain optimal external action policies after recovery. To operationalize this definition, we introduce metrics that attach directly to the epistemic and action loops: \emph{epistemic recoverability time} (how quickly the MAS regains accurate shared/consistent internal model after a disturbance), \emph{epistemic durability time} (how long the MAS sustains accurate internal models after epistemic recovery), \emph{action recoverability time} (how quickly the MAS regains optimal external action policies after epistemic recovery), and \emph{action durability time} (how long the MAS sustains these optimal policies after action recovery). To the best of our knowledge, this two-fold formal definition of resilience in networked systems with quantifiable system-level metrics is absent from prior 6G/MAS resilience literature, which treats isolated mechanisms or offer broad, non-unified notions of resilience~\cite{khaloopour2024resilience,9963527,ieeespectrum,resil}. 

To formally define this system-wide resilience and make it \emph{quantifiable} in MAS, we need a formal language that (i) represents what each agent knows, does not know, and consider possible about the external world; (ii) reasons about mutual knowledge that captures \emph{what all agents know} as a notion of group alignment, (iii) captures how knowledge/belief~\footnote{In this paper, the terms \emph{knowledge}, \emph{belief}, and \emph{epistemic} are used interchangeably.} evolves under sensing, communication, and stressors, and (iv) specifies the execution of optimal action policies at both the individual agent level and the group level. In fact, \emph{temporal epistemic logic} provides operators for individual and mutual knowledge with rigorous semantics and update mechanisms for modeling information evolution~\cite{TempoEpsLogic1}\cite{Epistemic1}. Crucially, these semantics also enable \emph{formal verification} which captures the ability to check, at design time or runtime, whether a multi-agent system satisfies resilience properties under specified assumptions. This capability is essential for certifying that recovery protocols, knowledge-update rules, and coordination strategies meet resilience requirements before deployment, as well as for runtime monitoring in dynamic environments.

Concretely, we instantiate temporal epistemic logic with \emph{Kripke structures}, which provide the semantic backbone for representing and evolving multi-agent knowledge~\cite{FHMV1995,Kripke1963SC,modal1,modal2,modal3}. Intuitively, a Kripke model lists all possible worlds, which are plausible configurations of the environment. It annotates which facts hold in each world and specifies, for each agent, which worlds are indistinguishable based on the agent’s available information. An agent ``knows'' a fact if that fact holds in all worlds the agent considers possible; mutual knowledge arises when all agents are aligned on a fact. As the system operates under sensing, communication, failures, or attacks, the model itself updates: accurate observations or informative messages eliminate false possibilities (tightening beliefs), whereas misinformation, sensor faults, or adversarial actions can reintroduce ambiguity or contradictions (inaccurate beliefs). Sequencing these models over time yields a natural account of how knowledge changes as the system operates, which is precisely what temporal epistemic logic formalizes~\cite{FHMV1995,vanDitmarsch2007,TempoEpsLogic1,Epistemic1}. Finally, this Kripke-based perspective connects directly to \emph{semantic communication} in AI-native 6G: a message is valuable when it reduces the receiver’s epistemic ambiguity~\cite{bennis_vision,Seo_semcom,AInative,Yang_semcom,semcom1,semcom2,semcom3,semcom4}. This view aligns naturally with resilience because reducing ambiguity accelerates epistemic recovery and stabilizes shared situational awareness, reducing action recovery and improving durability. In other words, communication strategies that prioritize uncertainty reduction in task-relevant knowledge directly enhance epistemic and action resilience, by enabling faster, more coordinated adaptation under stressors.
%\textbf{Contributions}. The primary contribution of this paper is the \textbf{formalization of resilience} in networked multi-agent systems, including the development of \textbf{quantitative metrics, agent architectures}, and \textbf{algorithms} that enable resilient behavior under network disturbances. We further provide \textbf{formal verification guarantees} and \textbf{numerical evaluation} to demonstrate the validity and  performance of the proposed framework.

To this end, we outline our main contributions as follows:
\begin{itemize}
    \item We present a \textbf{formal definition of multi-agent resilience} and introduce quantifiable metrics grounded in two complementary dimensions: \emph{epistemic-based} and \emph{action-based} (\textbf{Section~\ref{sec:metrics}}).
    \item We \textbf{formalize} these metrics within a \textbf{temporal epistemic logic} framework, and \textbf{quantify} them through two parameters: \textbf{recoverability time} and \textbf{durability time}, which capture how quickly a system regains desired properties after disturbances and how long it sustains accurate beliefs and optimal performance (\textbf{Section~\ref{sec:metrics}}).
    \item To operationalize these metrics, we design a \textbf{semantic communication-enabled agent architecture} with coupled internal and external components. The internal component models each agent’s beliefs using Kripke structures, while enabling belief updates and alignment through logical message exchange and direct observations (\textbf{Section~\ref{sec:system_model}}). 
    %\item We \textbf{employ epistemic logic} as a medium for \textbf{semantic communication} among agents, enabling them to reduce uncertainty and better align their internal models with the external environment.
    \item We develop \textbf{decentralized algorithms} that achieve both epistemic and action resilience in decentralized multi-agent systems (\textbf{Section~\ref{sec:algorithms}}).
    \item We provide \textbf{formal verification guarantees}, proving that our resilience specifications are \textbf{sound} with respect to the defined metric bounds and admit \textbf{finite-horizon verification}. This enables both \textit{design-time certification} and \textit{lightweight runtime monitoring} following disturbances (\textbf{Theorem~\ref{thm:soundness}} and \textbf{Corollary~\ref{cor:completeness}}).
    \item Finally, we \textbf{demonstrate the effectiveness} of our framework through a distributed decision-making under abrupt network stressors case study, showing that it \textbf{outperforms existing baseline approaches} (\textbf{Section~\ref{sec:numerical}}).
\end{itemize}

%\noindent \textbf{Paper Organization}.\quad The rest of the paper is organized as follows. In \S~\ref{sec:kripke}, we give a background and introduce temporal epistemic logic and Kripke models. In \S~\ref{sec:system_model}, we we state our system model and proposed agents' architecture. In \S~\ref{sec:metrics}, we introduce our proposed resilience metric and its two dimensions along with their quantification. In \S~\ref{sec:algorithms}, we describe the proposed decentralized algorithms for achieving epistemic and action resilience. In \S~\ref{sec:analytical}, we provide analytical analysis and formal verification guarantees. \S~\ref{sec:numerical} introduce and discuss our simulation results.We conclude the paper in \S~\ref{sec:conclusion}. Finally, in the appendix~\ref{sec:math_proofs}, we provide the mathematical proofs for our main results.

\noindent \textbf{Paper Organization.}\quad
The remainder of this paper is organized as follows. Section~\ref{sec:system_model} presents the system model and the proposed agent architecture. Section~\ref{sec:kripke} introduces the temporal epistemic logic framework and Kripke structures. Section~\ref{sec:metrics} defines the proposed resilience metric, including its epistemic and action dimensions and their quantitative characterization in terms of recoverability and durability.
Section~\ref{sec:algorithms} details the decentralized algorithms designed to achieve epistemic and action resilience.
Section~\ref{sec:analytical} provides formal verification analysis of the proposed metric.
Section~\ref{sec:numerical} introduces and discusses the numerical simulations and performance evaluation.
Finally, Section~\ref{sec:conclusion} concludes the paper. %Appendix~\ref{sec:math_proofs} contains the mathematical proofs of the main theoretical results.

\section{System Model and Agent Architecture}
\label{sec:system_model}
\begin{table*}[!t]
  \caption{Notation used in the system model, temporal–epistemic framework, and resilience metrics (Sections~\ref{sec:system_model} - \ref{sec:metrics}).}
  \label{tab:notation}
  \centering
  \footnotesize
  \renewcommand{\arraystretch}{1.15}
  \begin{tabular*}{\textwidth}{@{\extracolsep{\fill}} l p{0.80\textwidth}}
    \hline
    \textbf{Symbol} & \textbf{Description} \\
    \hline
    $\agentSet$ & Set of agents, with $|\agentSet|=\agentsize$. \\
    $\graph=(\agentSet,\edges)$ & Undirected communication graph. \\
    $\edges$ & Set of (undirected) communication links. \\
    $\neighbors_i$ & Neighbor set of agent $i$: $\{\,j\in\agentSet:\; ij\in\edges,\ j\neq i\,\}$. \\
    $\Xspace$ & External (physical) state space; $\Xstate_t\in\Xspace$ at time $t$. \\
    $\Dspace$ & Disturbance space; $\dst_t\in\Dspace$ at time $t$; $\dstd$ default (nominal). \\
    $\UExti$ & External action space of agent $i$; $\UExt=\prod_{i=1}^{\agentsize}\UExti$. \\
    $\uExt=(\uExto,\uExtt,\ldots,\uExtn)$ & Joint external action; $\uExtit$ is agent $i$’s action at time $t$. \\
    $\extmap:\Xspace\times\UExt\times\Dspace\to\Xspace$ & External transition map. \\
    $\Yspace$ & Observation space; $\senmap_i:\Xspace\to\Yspace$; $\Ystate_{i,t}=\senmap_i(\Xstate_t)$. \\
    $\Mspace_i$ & Set of (finite) message sequences agent $i$ can receive; $\msg_{j,t}$ from neighbor $j$. \\
    $\Ispace_i$ & Internal (epistemic) state space; $\Istate_{i,t}\in\Ispace_i$. \\
    $\Ystateh_{i,t+1}$ & Predicted observation for time $t{+}1$. \\
    $\UEps$ & Set of epistemic actions (\textsc{refine}, \textsc{revise}, \textsc{explore}, \textsc{broadcast}, \textsc{hold}); $u^{\mathrm{epi}}_{i,t+1}\in\UEps$. \\
    $\varrho_i$ & Prediction map. \\
    $\epPi$ & Epistemic policy.\\
    $\psi_i$ & Internal transition. \\
    $\pi_i^{\mathrm{ext}}$ & External policy. \\
    $\timeD$ & Discrete time domain $\{0,1,2,\ldots\}$. \\
    $\pSet$ & Set of atomic propositions; $\lang_{\agentsize}^{\text{time}}(\pSet)$ temporal–epistemic language. \\
    $\glob_{[0,\beta)}$, $\until_{[0,\alpha]}$ & Temporal operators “globally” and “until”. \\
    $\knows_i$, $\possible_i$ & Knowledge and possibility operators. \\
    $\everyK_\agentSet \formula$ & Mutual knowledge over the full agent set $\agentSet$ (used in resilience formulas). \\
    $\kripke$ & Kripke structure $(\worlds,\{\rel_{i,t}\}_{i\in\agentSet},\val)$. \\
    $\run$ & Run $\run:\timeD\to\worlds$; $\worldwt$ the actual world at time $t$. \\
    $\resS$ & System-wide resilience specification, defined as $\resS \equiv \resE \land \resA$. \\
    $\resE$ & Epistemic resilience specification; see \eqref{resE}. \\
    $\resA$ & Action resilience specification; see \eqref{resA}. \\
    $\tvio$ & Violation time at which the environmental change (stressor) occurs. \\
    $\trecE$ & Epistemic recovery time: The first time step when the agent epistemically recovered. \\
    $\Delta \trecE$ & Epistemic recoverability interval: The time needed to epistemically recover. \\
    $\tdurE$ & End time of epistemic durability: The last time step of having accurate beliefs. \\
    $\Delta \tdurE$ & Epistemic durability interval: The time interval in which the agent has accurate beliefs. \\
    $\trecA$ & Action recovery time. \\
    $\Delta \trecA$ & Action recoverability interval The first time step when the agent recovered its policies. \\
    $\tdurA$ & End time of action durability: The last time step of executing optimal policies.  \\
    $\Delta \tdurA$ & Action durability interval: The time interval in which the agent executes optimal policies.\\
    $\maxRE$ & Maximum allowable epistemic recovery time bound (used in \eqref{resE}). \\
    $\minDE$ & Minimum required epistemic durability duration (used in \eqref{resE}). \\
    $\maxRA$ & Maximum allowable action recovery time bound (used in \eqref{resA}). \\
    $\minDA$ & Minimum required action durability duration (used in \eqref{resA}). \\
    $\pi_i^{\text{opt}}$ & Atomic proposition: agent $i$ is currently acting optimally under the current environment;
    $\pi^{\text{opt}} \equiv \bigwedge_{i\in\agentSet}\pi_i^{\text{opt}}$. \\
    \hline
  \end{tabular*}
\end{table*}
\textbf{Communication Model}. Consider a system of $\agentsize$ agents forming a communication graph, represented as an undirected graph $\graph = (\agentSet, \edges)$, where $\agentSet$ is the set of agents of cardinality $\agentsize$, and $\edges$ is the set of edges representing communication links between the agents. An edge between agents $i$ and $j$ in $\agentSet$ is denoted by the unordered pair $ij = ji \in \edges$. The neighbors of agent $i$ are represented by the set $\neighbors_i = \{j \in \agentSet : ij \in \edges \text{ and } j \ne i\}$. %\textcolor{red}{Note that, by convention, $i \in \neighbors_i$. check if this is needed.}
\begin{figure*}[ht]
  \centering
  % First row
  \subfloat[Proposed agent architecture.]{%
    \includegraphics[width=0.9\linewidth]{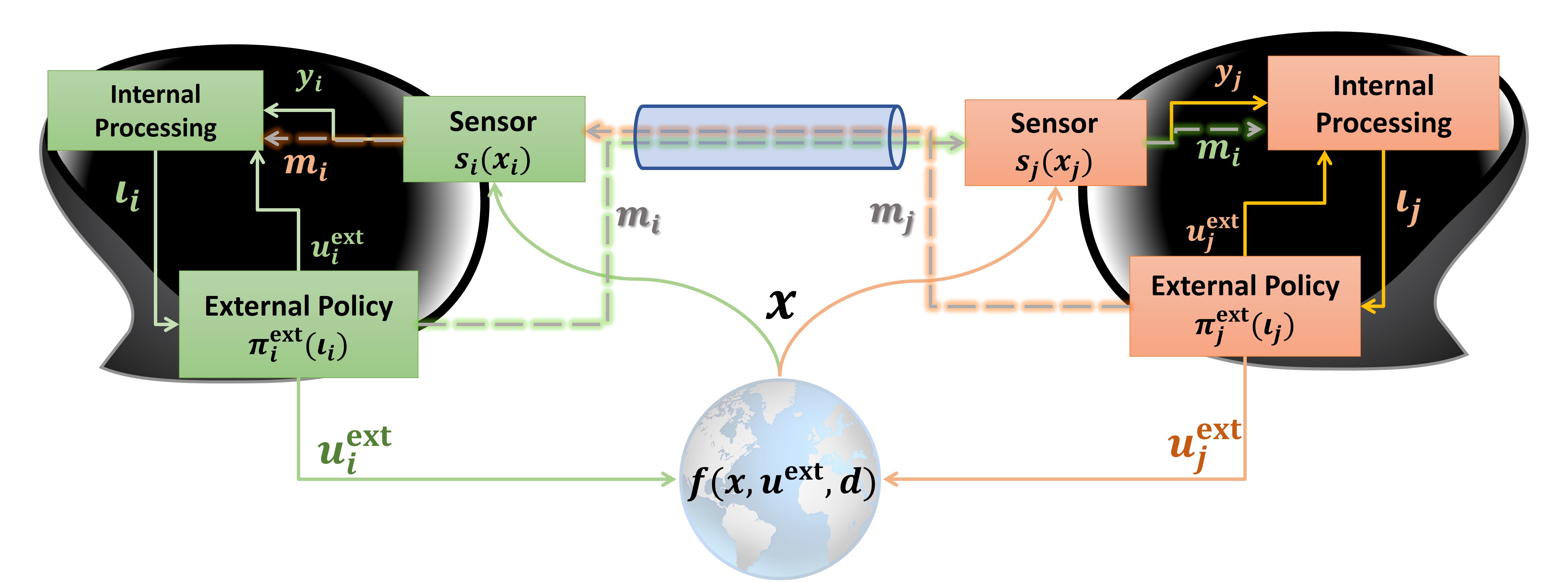}\label{fig:arch}}
  
  \par\medskip
  
  % Second row
  \subfloat[Detailed structure of agent~$i$'s `Internal Processing' unit within the proposed agent architecture.]{%
    \includegraphics[width=.8\linewidth]{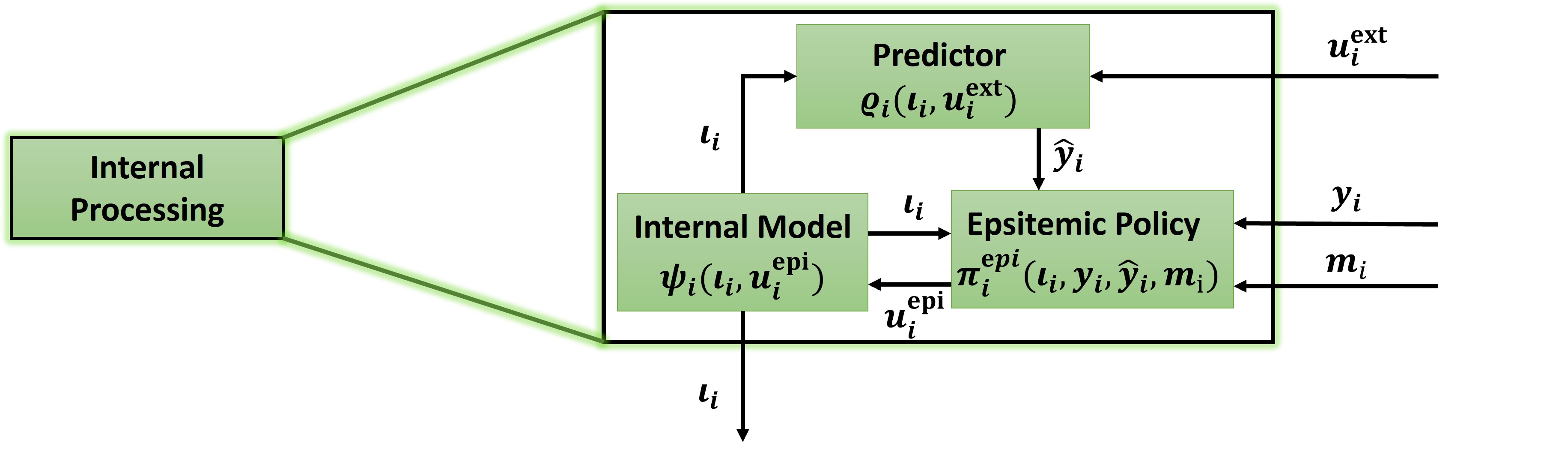}\label{fig:internal}}
  \hfill
    \caption{Proposed agent architecture illustrating two agents communicating with each other to refine their internal models. Dashed communicative arrows indicate that communication is not necessarily continuous and may occur as needed. Agent~$j$ has a similar structure for its `Ìnternal Processing' unit as the one depicted in (b).}
    \label{fig:system_model}
\end{figure*}

\textbf{Agent Architecture}. Building on this communication layer, we next specify the architecture of the agents. Inspired by the \emph{Information Transition System} (ITS) single-agent framework introduced in \cite{Basak, FRISTON1}, each agent is modeled as comprising two coupled subsystems: an \emph{external physical environment} and an \emph{internal information processing unit}, interacting via input-output channels. We extend the single-agent ITS framework in two key ways. First, the \emph{internal processing unit} now encompasses a structured internal model that captures the agent’s beliefs using epistemic logic and Kripke semantics, along with a predictor and epistemic policy to update these beliefs and maintain alignment with the environment. Second, we introduce a cooperative multi-agent ITS formulation where agents communicate to refine their internal models collectively. In contrast, the original single-agent ITS relied solely on a history of interaction trajectories to realize its internal model. To make these extensions concrete, we now turn to the details of the proposed architecture, which is depicted in Figure~\ref{fig:system_model}. The components are defined as follows:

\textbf{External Environment.} Let $\Xspace$ denote the external (physical) state space. Let $\UExti$ be its external action space. The joint action space is defined as $\UExt \;=\; \prod_{i=1}^{\agentsize}\UExti$,
so a joint external action $\uExt$ is an $\agentsize$-tuple $\uExt = (\uExto,\uExtt,\ldots,\uExtn)\in\UExt$. External dynamics are captured by
\[
\extmap:\Xspace\times \UExt\times \Dspace \to \Xspace,
\]
where $\Dspace$ models disturbances (stressors such as faults or attacks). Given the external state $\Xstate_t\in\Xspace$, a joint external action $\uExt_t\in\UExt$, and a disturbance $\dst_t\in\Dspace$, the next external state is $\Xstate_{t+1}=\extmap(\Xstate_t,\uExt_t,\dst_t)$. In the absence of disturbances, a default disturbance \(\dstd \in \Dspace\) is assumed so that the nominal dynamics are recovered; that is,  \(\extmap: \Xspace \times \UExt \rightarrow \Xspace\).
%\(\extmap(\Xstate_t, \uExt_t, \dstd)\) corresponds to the nominal external state transition, effectively reducing the function to \(\extmap: \Xspace \times \UExt \rightarrow \Xspace\). 
We note that the system is decentralized: at each time step $t$, agent $i$ selects its external action $\uExtit$ locally without centralized coordination. 

%The environment then evolves under the joint action $\uExt_t$ and any disturbance $\dst_t$ via the external transition map $\extmap$.

\medskip
\noindent\textbf{Perception and Communication.}
At each time step $t$, each agent $i$ gathers two types of inputs:
\begin{itemize}
  \item a physical observation $\Ystate_{i,t} = \senmap_i(\Xstate_t) \in \Yspace$ where $\Yspace$ denotes the space of observations, and $\senmap_i:\Xspace\!\to\!\Yspace$ is the sensor map of agent~$i$;
  \item a set of symbolic messages $(\msg_{j,t})_{j\in\neighbors_i}$ received from neighboring agents, conveying information about their current beliefs.
\end{itemize}
The precise logical form of these messages will be defined in Section~\ref{sec:kripke}, once the underlying epistemic language is introduced.

\medskip
\noindent\textbf{Internal Processing Unit.}
The internal processing unit encapsulates the agent’s belief representation, predictive capabilities, and epistemic adaptation. It comprises three key components: the internal model, a predictor, and an epistemic policy (see Figure~\ref{fig:internal}). We now describe each of these components in detail.

\medskip
\noindent\textit{Internal Model.}
Each agent maintains an internal representation of what it currently believes or considers possible about the environment. We denote agent~$i$’s internal state at time~$t$ by $\Istate_{i,t} \in \Ispace_i$, where $\Ispace_i$ is the set of admissible information (epistemic) states for that agent. For now, we view $\Istate_{i,t}$ abstractly as a structured summary of what the agent knows, does not know, and considers possible about the external environment. In Section~\ref{subsec:internal_kripke}, we make this notion concrete by instantiating $\Ispace_i$ with Kripke-style models. Under that interpretation, $\Istate_{i,t}$ corresponds to an accessibility structure that encodes which situations the agent regards as possible.

%It then updates its internal state using the actual observation and any received messages so that the internal epistemic state remains consistent with the new evidence. 

\medskip
\noindent\textit{Prediction, Epistemic Policy, and Internal Transition Map.}
To assess consistency between its internal model and the environment, agent~$i$ predicts the next observation using
\[
\varrho_i:\ \Ispace_i \times \UExti \to \Yspace,\qquad 
\Ystateh_{i,t+1} \;=\; \varrho_i\!\big(\Istate_{i,t},\,\uExt_{i,t+1}\big).
\]
A contradiction may arise between the predicted observation $\Ystateh_{i,t+1}$ and the actual observation $\Ystate_{i,t+1}$ if the prediction residual $\Upsilon\!\big(\,\Ystate_{i,t+1},\,\Ystateh_{i,t+1}\,\big)$ exceeds a pre-defined threshold $\varepsilon_1$ for at least $e_{\text{th}}$ time steps within the last $L$ steps, where $\Upsilon(\cdot,\cdot)$ is any distance measure on $\Yspace$. The agent updates its internal epistemic state $\Istate_{i,t}$ based on any raised contradiction, and any new information received via direct observation and/or messages from neighboring agents. These updates are modeled as epistemic actions
$u^{\mathrm{epi}}_{i, t+1} \in \UEps$, which operate on the epistemic state $\Istate_{i,t}$ to produce a revised state $\Istate_{i,t+1}$. Typical epistemic actions, collected in the set $\UEps$, include:
\begin{enumerate}
  \item \textsc{refine}: eliminate incompatible possibilities based on new information.
  \item \textsc{revise}: adjust the internal state to re-introduce new possibilities that accommodate changed conditions or evidence and resolve contradictions.
  \item \textsc{explore}: actively probe the environment to reduce uncertainty.
  \item \textsc{broadcast}: share selected information with neighbors.
  \item \textsc{hold}: leave the internal state unchanged when appropriate.
\end{enumerate}

\noindent These epistemic actions are selected by each agent $i$ according to an epistemic policy $\epPi$, which determines the appropriate epistemic action $u^{\mathrm{epi}}_{i, t+1}$ based on the agent’s current epistemic state $\Istate_{i,t}$, actual observations $\Ystate_{i,t+1}$, predicted observations $\Ystateh_{i,t+1}$, and received messages $(\msg_{j,t+1})_{j\in\neighbors_i}$. Let $\Mspace_i$ denote the set of (finite) message sequences agent~$i$ can receive (we defer their formal structure to Section~\ref{sec:kripke}). The \emph{epistemic policy} is a mapping
\[
\epPi:\ \Ispace_i \times \Yspace \times \Yspace \times \Mspace_i \to \UEps,
\]
that selects the action
\[
u^{\mathrm{epi}}_{i,t+1}
=\pi_i^{\mathrm{epi}}\!\big(\Istate_{i,t},\,\Ystate_{i,t+1},\,\Ystateh_{i,t+1},\,\Mspace_{i,t+1}\big).
\]
The internal transition map
\[
\psi_i:\ \Ispace_i \times \UEps \to \Ispace_i
\]
then produces the next internal state, $\Istate_{i,t+1}=\psi_i(\Istate_{i,t},\,u^{\mathrm{epi}}_{i, t+1})$.

Finally, to close the loop between beliefs and behavior, we define the external policy as follows.

\noindent\textbf{External Policy.}
Finally, agent $i$ converts its current epistemic state into a physical action via an an external policy
\[
\pi_i^{\mathrm{ext}}:\Ispace_i \to \UExti, 
\qquad \uExt_{i,t+1}=\pi_i^{\mathrm{ext}}(\Istate_{i,t}).
\]
Having established the agent architecture and its components, we now introduce the formal machinery we use to represent each agent’s internal model. Specifically, we present the \emph{temporal epistemic logic} framework and \emph{Kripke semantics} that underpin the structural foundation for agents' internal epistemic states and belief updates.

\section{Epistemic Logic and Semantic Communication}
\label{sec:kripke}

In the previous section, we introduced the agents’ internal models and described how their epistemic states evolve through prediction, observation, and communication. We now formalize these internal representations and communication mechanisms using tools from \emph{temporal epistemic logic} and \emph{Kripke semantics}. Temporal epistemic logic provides a symbolic framework for reasoning about what agents know, do not know, or consider possible \emph{over time}, while Kripke structures supply the semantic foundation that gives these notions a concrete interpretation. Within this framework, each agent~$i$’s internal model corresponds to a \emph{temporal-agent slice} of a Kripke structure that encodes its informational perspective on the world at a given moment. Communication among agents is then modeled \textbf{\textit{not merely as raw data exchange}}, but as the \textbf{\textit{transmission of logical statements}} whose truth values are evaluated with respect to these evolving Kripke structures; thus, capturing the \emph{semantic content} of messages rather than their syntactic form. 

This section first reviews the necessary background on temporal epistemic logic, introducing the logical language used to express formulas about knowledge and belief as they evolve over time. It then presents Kripke structures as formal models of knowledge and Kripke semantics as the rules that assign meaning to these logical formulas within such structures. Next, we define the agents’ internal epistemic spaces with respect to a Kripke structure, and finally describe how agent communication can be viewed as logic-based \emph{semantic} interaction.

\begin{figure}[t]
  \centering
  \subfloat[Environment setup.]{%
    \includegraphics[width=.48\linewidth]{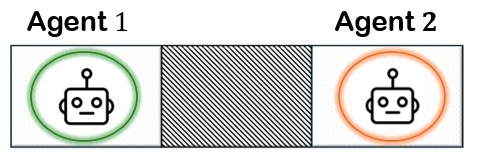}\label{fig:ex1env}}
  \hfill
  \subfloat[Kripke structure.]{%
    \includegraphics[width=.48\linewidth]{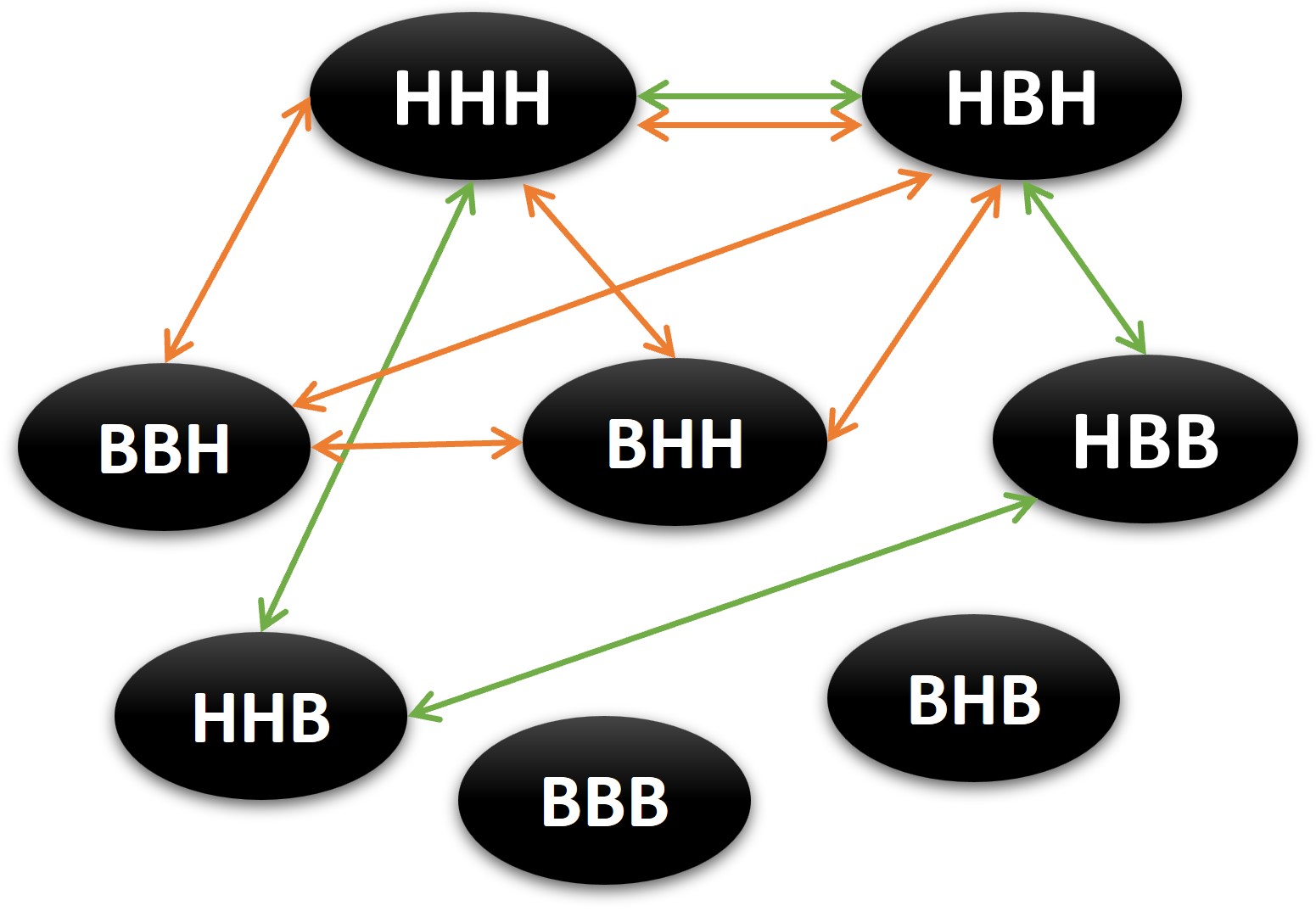}\label{fig:ex1kripke}}
    \caption{Example environment and its corresponding Kripke structure. In (a), agent~$1$ (green) is placed on cell~$1$, and agent~$2$ (orange) on cell~$2$. In (b), nodes represent possible worlds, with green and orange edges denoting the accessibility relations of agents~$1$ and~$2$, respectively. Reflexive arrows are omitted for clarity.}
    \label{fig:example1}
\end{figure}

\subsection{Background}
\label{subsec:background}
In multi-agent systems, the dynamic evolution of each agent’s beliefs about the environment is as important as the objective truths of the environment. Classical propositional logic evaluates statements as true or false but does not account for (i) different informational perspectives across agents (the epistemic aspect), or (ii) how those perspectives change over time (the temporal aspect). To address these limitations, we adopt a \emph{temporal epistemic logic} that combines multi-agent epistemic operators with temporal operators. This logic is built on a formal language for expressing formulas about knowledge and belief, and interpreted using Kripke structures and Kripke semantics.

\noindent\textbf{Atoms and Language.}
Let $\pSet$ be a nonempty set of atomic propositions, and let the discrete time domain be $\timeD=\{0,1,2,\ldots\}$. We also adopt the convention $\mathbb{N}=\{0,1,2,\ldots\}$. Although $\timeD$ and $\mathbb{N}$ denote the same underlying set, we use $\timeD$ for absolute time indices ($t \in \timeD$) and $\mathbb{N}$ (resp.\ $\mathbb{N}_{>0}$) for durations/step horizons; to keep timestamps distinct from lengths. With these conventions in place, the temporal–epistemic language $\lang_{\agentsize}^{\text{time}}(\pSet)$ is generated by the following grammar, where each expression $\formula$ denotes a well-formed formula of the language~\cite{TempoEpsLogic1}\cite{Epistemic1}:
\[
\formula \;::=\; \top \;\big|\; \p \;\big|\; \lnot \formula \;\big|\; \formula_1 \lor \formula_2 
\;\big|\; \glob_{[0,\beta)}\,\formula \;\big|\; \formula_1 \until_{[0,\alpha]} \formula_2 \;\big|\; \knows_i \formula,
\]
where the symbol $::=$ denotes a \emph{definition by syntactic formation},  commonly used in formal grammars to specify how well-formed formulas are constructed. It reads as ``is defined as'' or ``can be formed as,'' and the vertical bars $|$ separate alternative constructions. Moreover, the symbol $\top$ denotes a formula that is always true, $\p\in\pSet$, $\lnot$ denotes negation, $\lor$ denotes disjunction, $i\in\agentSet$, and $\alpha,\beta\in\mathbb{N}_{>0}$ denote step horizons. Standard connectives such as conjunction $\formula_1\land\formula_2$ and implication $\formula_1\!\rightarrow\!\formula_2$ are not primitive in the grammar but are defined using negation $\lnot$ and disjunction $\lor$ as follows:
\(\formula_1 \land \formula_2 \equiv \lnot(\lnot \formula_1 \lor \lnot \formula_2)\), and \(\formula_1 \rightarrow \formula_2 \equiv \lnot \formula_1 \lor \formula_2\). The above temporal and epistemic operators are interpreted as follows at time $t\in\timeD$:

\begin{itemize}
\item \textbf{Temporal \emph{Global} operator} $\glob_{[0,\beta)} \formula$: expresses that $\formula$ holds \emph{globally} over the bounded interval $[t,\,t+\beta)$.
  \item \textbf{Temporal \emph{Until} operator} $\formula_1 \until_{[0,\alpha]} \formula_2$: expresses that $\formula_1$ holds continuously \emph{until} $\formula_2$ becomes true, where $\formula_2$ must occur at some time $t'' \in [t,\,t+\alpha]$ and $\formula_1$ holds at all $t' \in [t,\,t'')$.
    \item \textbf{Epistemic \emph{Knowledge} operator}\footnote{While the syntax $\knows_i \formula$ contains no explicit time index, a formula $\formula$ is evaluated at the current time $t$ in our temporal setting. Intuitively, “agent $i$ knows $\formula$ at time $t$” means that, given the information available to $i$ at $t$, all situations $i$ still regards as possible make $\formula$ true. The formal satisfaction relation that makes this precise (including its time dependence) will be introduced when we define the Kripke structures and Kripke semantics.} $\knows_i \formula$: expresses that agent $i$ \emph{knows} $\formula$ at time $t$; that is, among \textbf{all} the possible scenarios consistent with agent $i$'s beliefs at time $t$, the formula $\formula$ is true in each one.
    %holds in \textbf{all} situations that agent $i$ considers possible at $t$
\end{itemize}

\noindent From the \textit{knowledge} operator $\knows_i \formula$, we can derive the dual operator which is the \emph{possibility} operator $\possible_i \formula \equiv \lnot \knows_i \lnot \formula$. This operator expresses that agent $i$ considers $\formula$ \emph{possible} at time $t$; that is,
there exist \textbf{some} scenario(s) that agent $i$ considers possible at time $t$ in which $\formula$ is true. Furthermore, let $\agentSet_{\text{0}} \subseteq \agentSet$ be a subset of agents. We can derive the \emph{ mutual knowledge operator} $\everyK_{\agentSet_{\text{0}}}  \formula$, which expresses the notion that ``everyone in group \(\agentSet_{\text{0}}\) knows \( \formula\)'' as follows:
\[
\everyK_{\agentSet_{\text{0}}}  \formula = \bigwedge_{i \in \agentSet_{\text{0}}} \knows_i  \formula.
\] 

\noindent\textbf{Kripke Structure.} To assign meaning to the formulas of the language $\lang_{\agentsize}^{\mathrm{time}}(\pSet)$ and to capture the evolution of agents’ beliefs over time, we employ a \emph{time-dependent Kripke structure} defined as~\cite{FHMV1995,Kripke1963SC}
\[
\kripke = \big(\worlds, \{\rel_{i,t}\}_{i \in\agentSet}, \val\big),
\]
where:
\begin{itemize}
  \item $\worlds$ is a nonempty set of \emph{possible worlds}, each representing a distinct situation (configuration) of the environment.
  \item For each agent $i \in \agentSet$ and time $t\in\timeD$, $\rel_{i,t} \subseteq \worlds \times \worlds$ is the \emph{epistemic accessibility relation} for agent~$i$ at time~$t$. Intuitively, if $(\worldw,\worldw')\in \rel_{i,t}$, meaning that $\worldw'$ is an \emph{accessible} world from $\worldw$, then at time $t$ agent $i$ cannot distinguish $\worldw'$ from $\worldw$; both are compatible with $i$’s information. The set of worlds accessible to an agent thus encodes the uncertainty in its knowledge at that time.
  \item $\val:\worlds \rightarrow 2^{\pSet}$ is the \emph{valuation function}, which specifies for each world $\worldw \in \worlds$ the set of atomic propositions that are true in $\worldw$.
\end{itemize}

\noindent\textbf{Kripke Semantics.} Given the Kripke structure $\kripke=(\worlds,\{\rel_{i,t}\}_{i\in\agentSet},v)$, the truth of a formula $\formula$ in the temporal–epistemic language $\lang_{\agentsize}^{\text{time}}(\pSet)$ is defined via a \emph{satisfaction relation} at evaluation points. An evaluation point is a triple $(\kripke,\run,t)$, where $\run:\timeD\to\worlds$ is a run; that is, a time-indexed trajectory
$\langle \run(0),\run(1),\ldots\rangle$ and $\worldw_t=\run(t)$ is the actual world at time $t$~\cite{temp_epis_run}. Intuitively, among all admissible trajectories, $\run$ is the actual one the system follows (e.g., the trace induced by the external dynamics and disturbances); hence, at each time $t$, it determines the current world $\worldw_t$. We write $(\kripke,\run,t)\models \formula$ to mean that \emph{$\formula$ is true at time $t$ along $\run$ in $\kripke$}. The satisfaction relation is then given inductively by the clauses below; taken together, these clauses constitute the \emph {Kripke semantics} for our temporal–epistemic language.

%we define the \emph{Kripke semantics} for the temporal–epistemic language $\lang_{\agentsize}^{\text{time}}(\pSet)$ by specifying how the truth of a formula $\formula$ is evaluated. 
%Formulas are evaluated at points of the form $(\kripke,\run,t)$, where $\run:\timeD\to\worlds$ is a run; that is, a time-indexed history
%$\langle \run(0),\run(1),\ldots\rangle$ and $\worldw_t=\run(t)$ is the actual world realized at time $t$~\cite{temp_epis_run}. Intuitively, among all admissible histories, $\run$ is the actual one the system follows (e.g., the trace induced by the external dynamics and disturbances); hence, at each time $t$, it determines the current world $\worldw_t$. The satisfaction relation $(\kripke,\run,t)\models\formula$ evaluates a formula by specifying when this formula $\formula\in\lang_{\agentsize}^{\text{time}}(\pSet)$ is true at the evaluation point $(\kripke,\run,t)$, and is defined inductively as follows. These rules collectively constitute the Kripke semantics for our temporal–epistemic language.

\begin{itemize}
  \item $(\kripke,\run,t)\models\top$ always holds.
  \item $(\kripke,\run,t)\models p$ iff $p\in v(\worldw_t)$.
  \item $(\kripke,\run,t)\models\lnot\formula$ iff $(\kripke,\run,t)\not\models\formula$.
  \item $(\kripke,\run,t)\models\formula_1\lor\formula_2$ iff $(\kripke,\run,t)\models\formula_1$ or $(\kripke,\run,t)\models\formula_2$.
  \item $(\kripke,\run,t)\models\glob_{[0,\beta)}\formula$ iff for all $t'$ with $t\le t'<t+\beta$ we have $(\mathcal{M}_{t'},\run,t')\models\formula$.
  \item $(\kripke,\run,t)\models\formula_1\until_{[0,\alpha]}\formula_2$ iff there exists $t''\in[t,t+\alpha]$ such that
        $(\mathcal{M}_{t''},\run,t'')\models\formula_2$ and for all $t'\in[t,t'')$ we have $(\mathcal{M}_{t'},\run,t')\models\formula_1$.
  \item $(\kripke,\run,t)\models\knows_i\formula$ iff for all $\worldw'\in\worlds$ with $(\worldw_t,\worldw')\in \rel_{i,t}$,
        \[
        (\kripke,\run[t\mapsto \worldw'],t)\models\formula,
        \]
        where $\run[t\mapsto \worldw']$ is the run that agrees with $\run$ at all times except that it maps $t$ to $\worldw'$.
\end{itemize}

\addb{In this paper, we will utilize a continuous example to elucidate the concepts under discussion.
\begin{example}[Kripke structure for a two-agent grid world]
\label{ex:example1}
Consider two agents placed in a $3\times 1$ grid with cells $c\in\{1,2,3\}$ (see Fig.~\ref{fig:ex1env}). Agent~$1$ (green) occupies cell~$1$, and Agent~$2$ (orange) occupies cell~$3$. Each agent $i$ observes only the color of its current cell $c$. Each cell is labeled by one of two mutually exclusive atomic propositions:
\[
H_c\ \text{ (“cell $c$ is white”)},\qquad B_c\ \text{ (“cell $c$ is black”)}.
\]
The set of all atomic propositions is $\pSet$. The environment can be in one of four configurations (i.e., possible worlds):
\begin{align}
\worlds=\big\{\,&\worldw^{(HHH)},\,\worldw^{(HBH)},\,\worldw^{(BHH)},\,\worldw^{(BBH)},\,\nonumber\\
&\worldw^{(HHB)},\,\worldw^{(HBB)},\,\worldw^{(BHB)},\,\worldw^{(BBB)}\big\} \nonumber
\end{align}
\newline
\noindent where the superscripts indicate the colors of cells $(1,2,3)$, e.g., $\worldw^{(HHB)}$ means cells~$1$ and $2$ are white and cell~$3$ is black.
The valuation function $\val$ lists which atoms are true in each world:
\[
\begin{aligned}
&v(w^{(HHH)})=\{H_1,H_2,H_3\},\, v(w^{(HBH)})=\{H_1,B_2,H_3\},\\
&v(w^{(BHH)})=\{B_1,H_2,H_3\},\, v(w^{(BBH)})=\{B_1,B_2,H_3\},\\
&v(w^{(HHB)})=\{H_1,H_2,B_3\},\, v(w^{(HBB)})=\{H_1,B_2,B_3\},\\
&v(w^{(BHB)})=\{B_1,H_2,B_3\},\, v(w^{(BBB)})=\{B_1,B_2,B_3\}.
\end{aligned}
\]
\noindent In this example, the actual world is assumed to be $w^{(HBH)}$. Figure~\ref{fig:ex1kripke} illustrates the corresponding Kripke structure: nodes represent possible worlds, and edges represent the accessibility relations $\{\rel_i\}_{i=1}^2$ that specify, for each agent $i$, its set of accessible worlds (the configurations that are consistent with agent $i$'s observations). Agent~$1$'s accessible worlds are $\{w^{(HHH)},w^{(HBH)},w^{(HBB)},w^{(HHB)}\}$, since these are the only worlds consistent with its observation. Because $H_1$ holds in \textbf{all} of Agent $1$'s accessible worlds, $\knows_1 H_1$ holds. Since Agent~$1$ does not know the color of cells $2$ and $3$, it considers the propositions \{$H_2$, $B_2$, $H_3$, $B_3$\} possible, as each holds in some of its accessible worlds. In other words, $\possible_1 B_2$, $\possible_1 H_2$, $\possible_1 H_3$, and $\possible_1 B_3$ hold. Likewise, since Agent~$2$ observes $c=3$, its accessible worlds are $\{w^{(HBH)},w^{(BBH)},w^{(HHH)},w^{(BHH)}\}$, and therefore $\knows_2 H_3$, $\possible_2 H_1$, $\possible_2 B_1$, $\possible_2 H_2$, and $\possible_2 B_2$ hold. 
\end{example}}

%In specific, the horizontal green edge connects $w^{HH}$ and $w^{HB}$ because agent~$1$ sees cell~$1$ white in both; the vertical orange edge connects $w^{HB}$ and $w^{BB}$ because agent~$2$ sees cell~$2$ black in both.

\subsection{Internal Epistemic Spaces}
\label{subsec:internal_kripke}
In Section~\ref{sec:system_model}, each agent’s internal state $\Istate_{i,t}$ was introduced as an abstract representation of what the agent knows or considers possible about the external environment. We now give this notion a formal grounding within the Kripke framework developed above. 

For each agent $i\in\agentSet$, the internal epistemic space is defined as
\[
\Ispace_i \;=\; \big\{\,\rel \subseteq \worlds \times \worlds \;\big|\; \rel \text{ satisfies chosen frame conditions}\,\big\},
\]
where $\rel_{i,t}\in\Ispace_i$ denotes the epistemic accessibility relation of agent~$i$ at time~$t$. Moreover, by \emph{frame conditions} we mean structural properties imposed on the accessibility relation to constrain epistemic behavior. Common examples include \emph{reflexivity} 
($\forall\,\worldw:\,(\worldw,\worldw)\in\rel_{i,t}$, meaning that the actual world is always possible) and \emph{transitivity} 
($( \worldw,\worldw')\in\rel_{i,t}$ and $(\worldw',\worldw'')\in\rel_{i,t}$ imply $(\worldw,\worldw'')\in\rel_{i,t}$, meaning indistinguishability composes). Intuitively, $(\worldw,\worldw')\in\rel_{i,t}$ means that, given its information at time~$t$, agent~$i$ considers world~$\worldw'$ possible when the actual world is~$\worldw$. Hence, $\rel_{i,t}$ captures the agent’s uncertainty or indistinguishability among worlds at that time. Thus, an agent’s internal state at time~$t$ is given by its accessibility relation,
\(
\Istate_{i,t} = \rel_{i,t} \in \Ispace_i,
\)
which encodes all worlds consistent with the agent’s current knowledge. Under this representation, $\Ispace_i$ provides the structural foundation for modeling how agents reason and update their knowledge over time.

\noindent
\textbf{Updates of the internal epistemic state.}
Recall from Section~\ref{sec:system_model} that agent $i$ updates its internal state in response to raised contradictions and new information received through observations and/or messages, via an epistemic action $u^{\mathrm{epi}}_{i,t+1}\in \UEps$. Under the Kripke representation, this becomes an update of the accessibility relation
\(
\Istate_{i,t}=\rel_{i,t}\ \in\ \Ispace_i
\) and is given by:
\[
\rel_{i,t+1}\;=\;\psi_i\!\big(\rel_{i,t},\,u^{\mathrm{epi}}_{i,t+1}\big),
\]
where $\psi_i:\Ispace_i\times \UEps\to \Ispace_i$ is the internal transition map. Typical actions instantiate $\psi_i$ as follows:

\begin{itemize}
  \item \textsc{Refine}: \emph{remove} accessibility relation to worlds inconsistent with new evidence, yielding a monotone decrease $\rel_{i,t+1}\subseteq \rel_{i,t}$.
  \item \textsc{Revise}:\ \emph{re-wire} accessibility relation among evidence-consistent worlds (may add new accessible worlds or remove existing ones) subject to minimal change.
  \item \textsc{Hold}/\textsc{Broadcast}/\textsc{Explore}:\ leave $\rel_{i,t}$ unchanged, inform neighbors, or trigger future evidence acquisition, respectively.
\end{itemize}
The update $\psi_i$ is required to (i) \emph{preserve frame conditions} (e.g., reflexivity, transitivity) so that $\rel_{i,t+1}\in\Ispace_i$; (ii) \emph{restore consistency} with the current evidence (observation at $t{+}1$ and received messages); and (iii) satisfy a \emph{minimal-change} principle (no unnecessary additions/removals of accessibility relations beyond what is needed for consistency). In this way, the abstract update policy of Section~\ref{sec:system_model} is realized concretely as an operation on Kripke accessibility relations.

\addb{\begin{remark}[Intuitive mapping between formal logic and operational algorithms]
\label{remark:connection}
To provide an intuitive link between the epistemic concepts introduced in Section~\ref{subsec:background} and the
operational procedures introduced here and used later in Section~\ref{sec:algorithms}, we briefly summarize the role of the Kripke
model in practical decision making. Each world represents a candidate
environmental condition, and the accessibility relation encodes which worlds
an agent still considers possible given its observations so far (referred to as accessible worlds). When new
information arrives that does not make any previously inaccessible world
possible, but instead rules out some of the currently accessible ones, the
agent removes exactly those worlds that are no longer compatible with the data
(\textsc{refine}). If, however, the agent's predictions under all currently
accessible worlds fail to explain the newly observed data, this produces a
\emph{prediction error} and prompts the agent to collect additional evidence
to discriminate between worlds (\textsc{explore}). As such evidence
accumulates and incompatible worlds are eliminated, the agent may reach a point
where its set of possible worlds must be reorganized; that is, when previously
inaccessible worlds become plausible again. In this case, the accessibility
relation is restructured (\textsc{revise}), and the agent communicates the
resulting belief change to its neighbors (\textsc{broadcast}). When neither
refinement nor revision is warranted, the agent simply keeps its current set
of possible worlds unchanged (\textsc{hold}). In this way, the abstract Kripke
semantics provides a conceptual basis for the detection, evidence-gathering,
revision, and communication behaviors implemented by the algorithms in later
sections.
\end{remark}}

\addb{\begin{remark}[Collisions as stressors]
Although collisions are not modeled explicitly in this paper, the framework can
naturally accommodate them. In this formalism, the behavior of other agents is
treated as part of the environment dynamics and is therefore encoded in each
agent’s epistemic state through the set of accessible worlds. If a collision
occurs (e.g., two agents select incompatible actions) the resulting observation
will contradict the predictions of all worlds in which such a collision should
not happen. This inconsistency is interpreted as a prediction error. The agent then updates its accessibility relation to retain only the worlds consistent with the observed
interaction, and its external policy realigns accordingly. In this way,
potential action collisions are handled as environmental stressors within the
same architecture developed in this section.
\end{remark}}

\subsection{Semantic Communication via Epistemic Logic and Kripke Structures}

Conventional communication systems primarily focus on the accurate transmission of symbols over a channel, optimizing metrics such as bit error rate and throughput. These systems treat information as a syntactic entity, largely independent of its meaning or relevance to the receiver. In contrast, semantic communication shifts the emphasis from symbol fidelity to meaning fidelity, prioritizing the successful conveyance of intended meaning and task-relevant information. Achieving semantic communication requires that (i) sender agents transmit compressed representations that preserve meaning, (ii) receiver agents interpret these messages within the appropriate context, and (iii) communication be driven by the task or objective at hand, rather than transmitting all available data indiscriminately. We realize semantic communication in a principled way using the temporal–epistemic language $\lang_{\agentsize}^{\mathrm{time}}(\pSet)$ and a shared Kripke structure with its semantics. Concretely:
(i) senders transmit messages that are logical formulas from $\lang_{\agentsize}^{\mathrm{time}}(\pSet)$, representing (possibly partial) snapshots of the sender’s epistemic state; that is, what the sender knows or considers possible.
(ii) receivers interpret the semantics (meaning) of these formulas by evaluating them in the shared Kripke structure, which provides the common context for correct interpretation, and
(iii) communication is task-driven, for example, when the task is \emph{achieving resilience}, agents communicate only to indicate that a change in the environment has occurred, allowing others to update their beliefs and policies accordingly. 
%More on this is discussed in Section~\ref{sec:algorithms}.
\begin{figure}[t]
  \centering
  % First row
  \subfloat[Environment setup.]{%
    \includegraphics[width=0.6\linewidth]{images/Kripke_ex_env.jpg}\label{fig:a}}
  
  \par\medskip
  
  % Second row
  \subfloat[Agent 1's slice of $\mathcal{M}_0$ at $t=0$ (prior to communication).]{%
    \includegraphics[width=.48\linewidth]{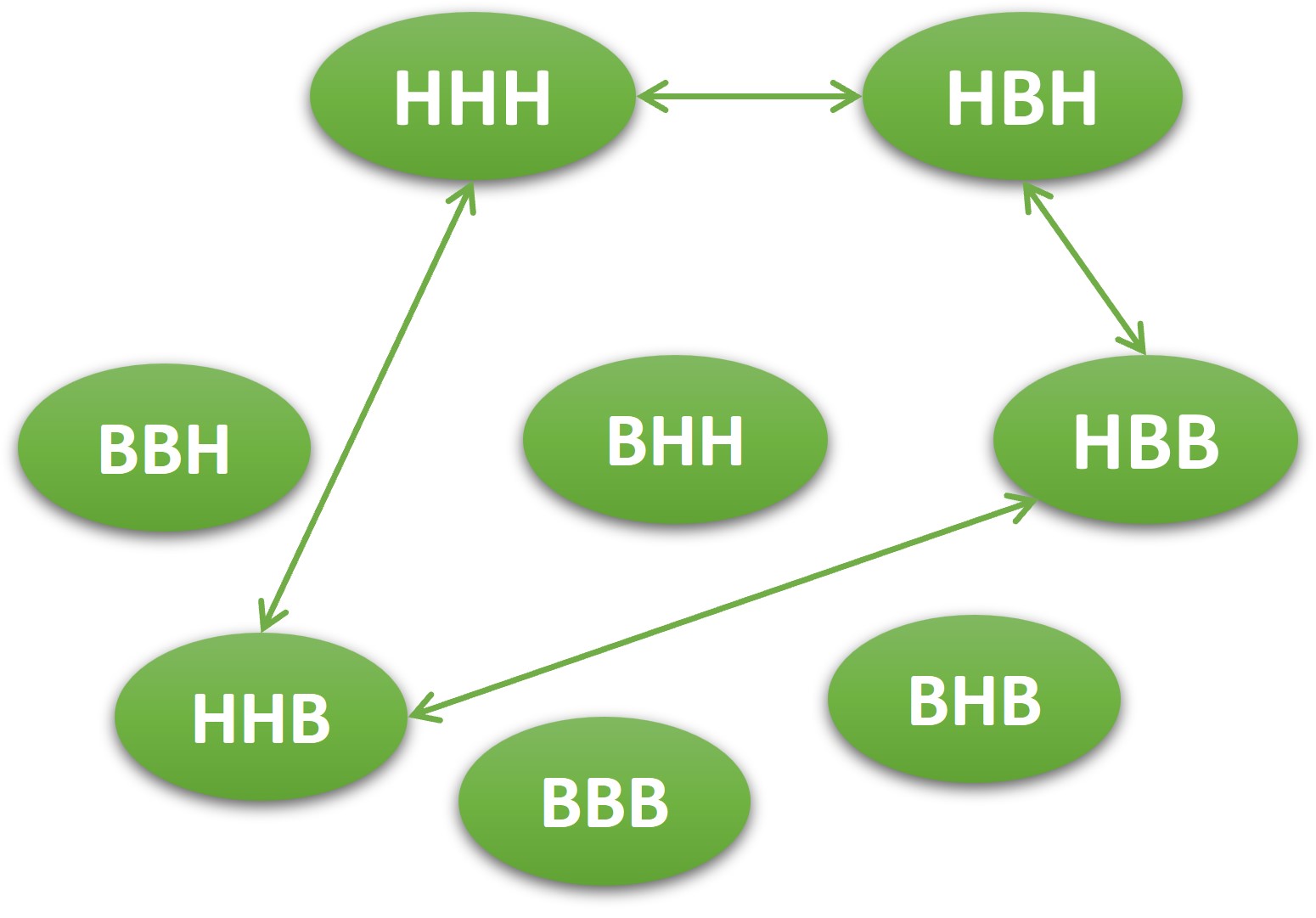}\label{fig:b}}
  \hfill
  \subfloat[Agent 2's slice of $\mathcal{M}_0$ at $t=0$ (prior to communication).]{%
    \includegraphics[width=.48\linewidth]{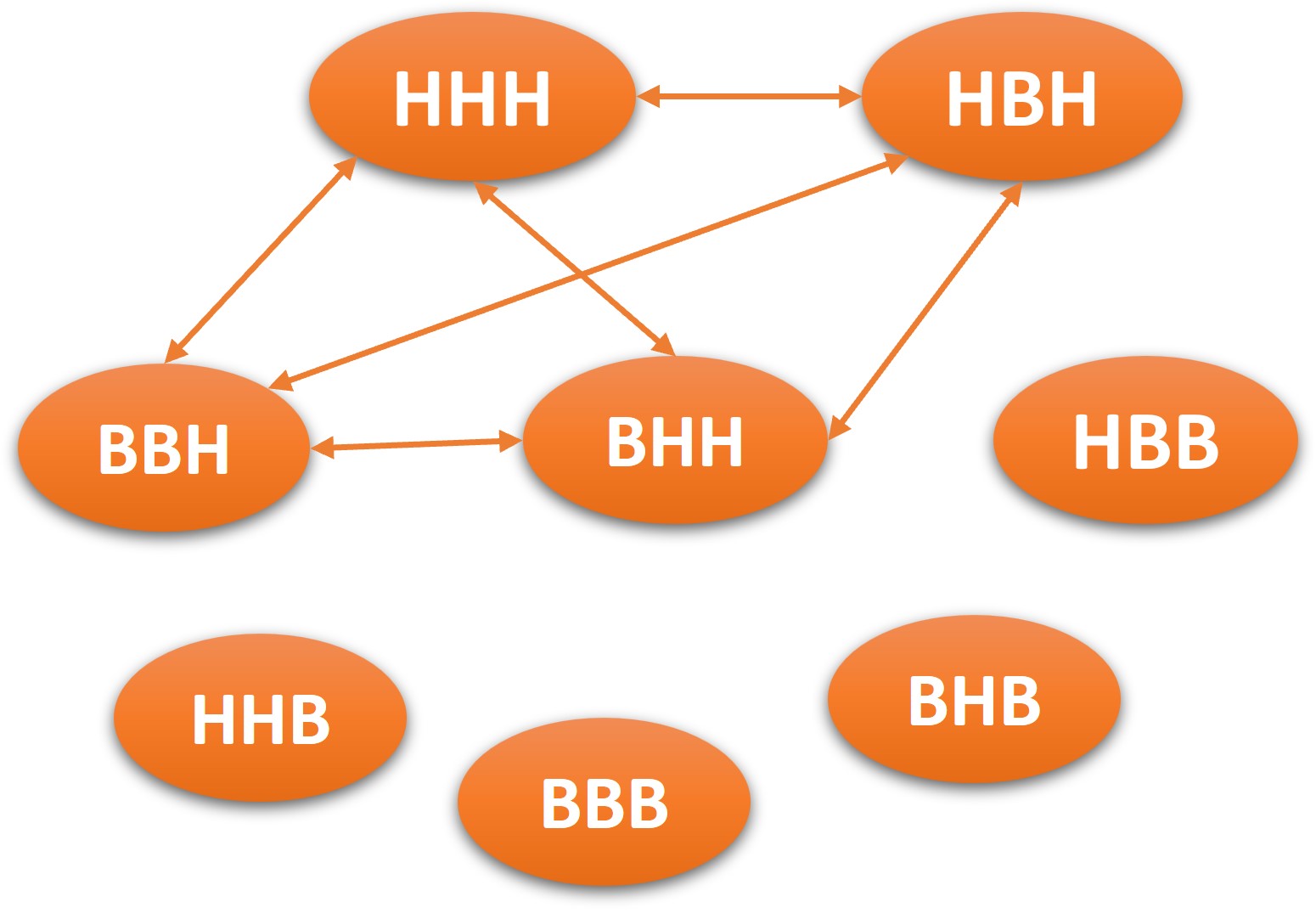}\label{fig:c}}

\par\medskip

  % Third row
  \subfloat[Agent 1's slice of $\mathcal{M}_1$ at $t=1$ (after communication).]{%
    \includegraphics[width=.48\linewidth]{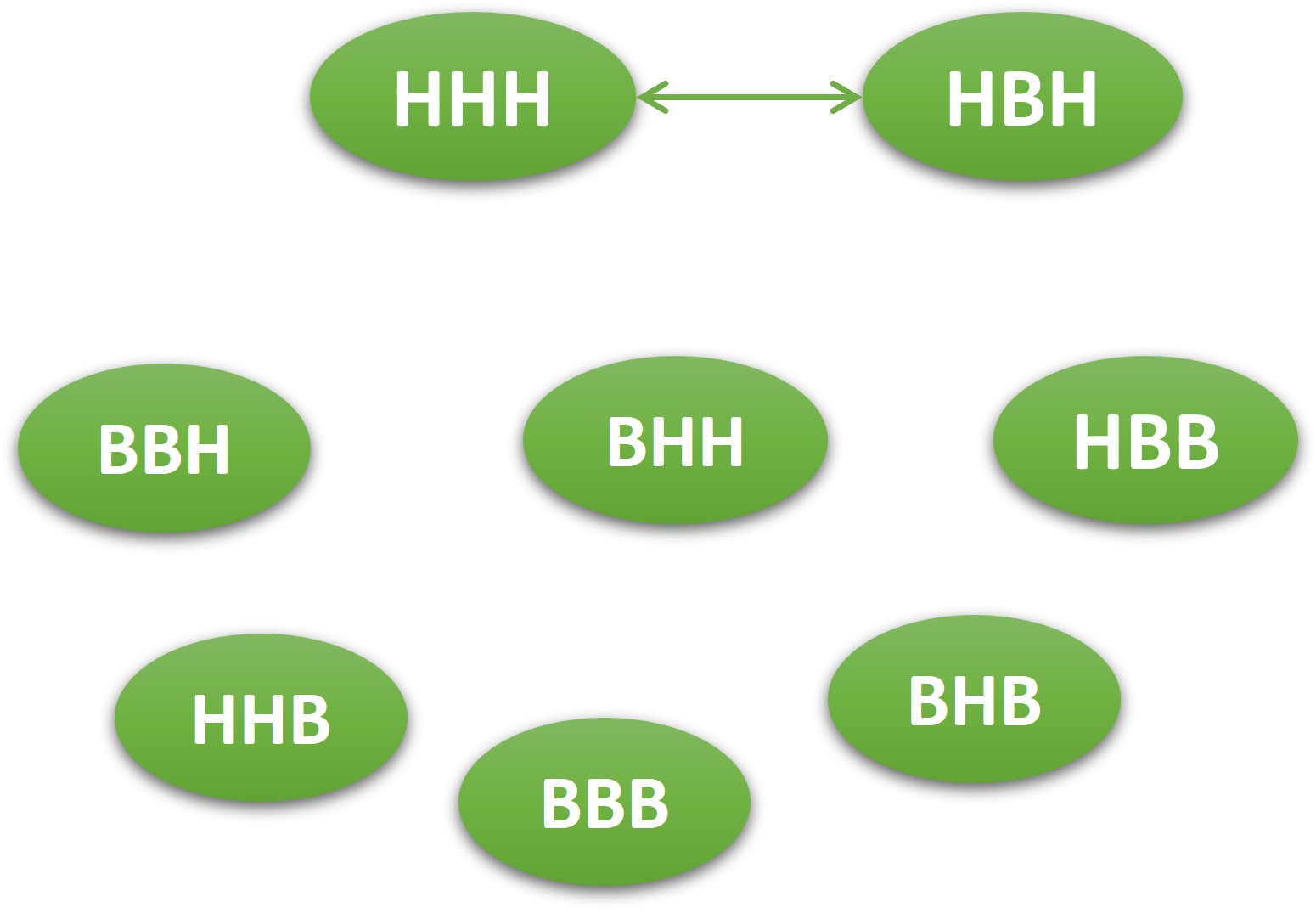}\label{fig:d}}
  \hfill
  \subfloat[Agent 2's slice of $\mathcal{M}_1$ at $t=1$ (after communication).]{%
    \includegraphics[width=.48\linewidth]{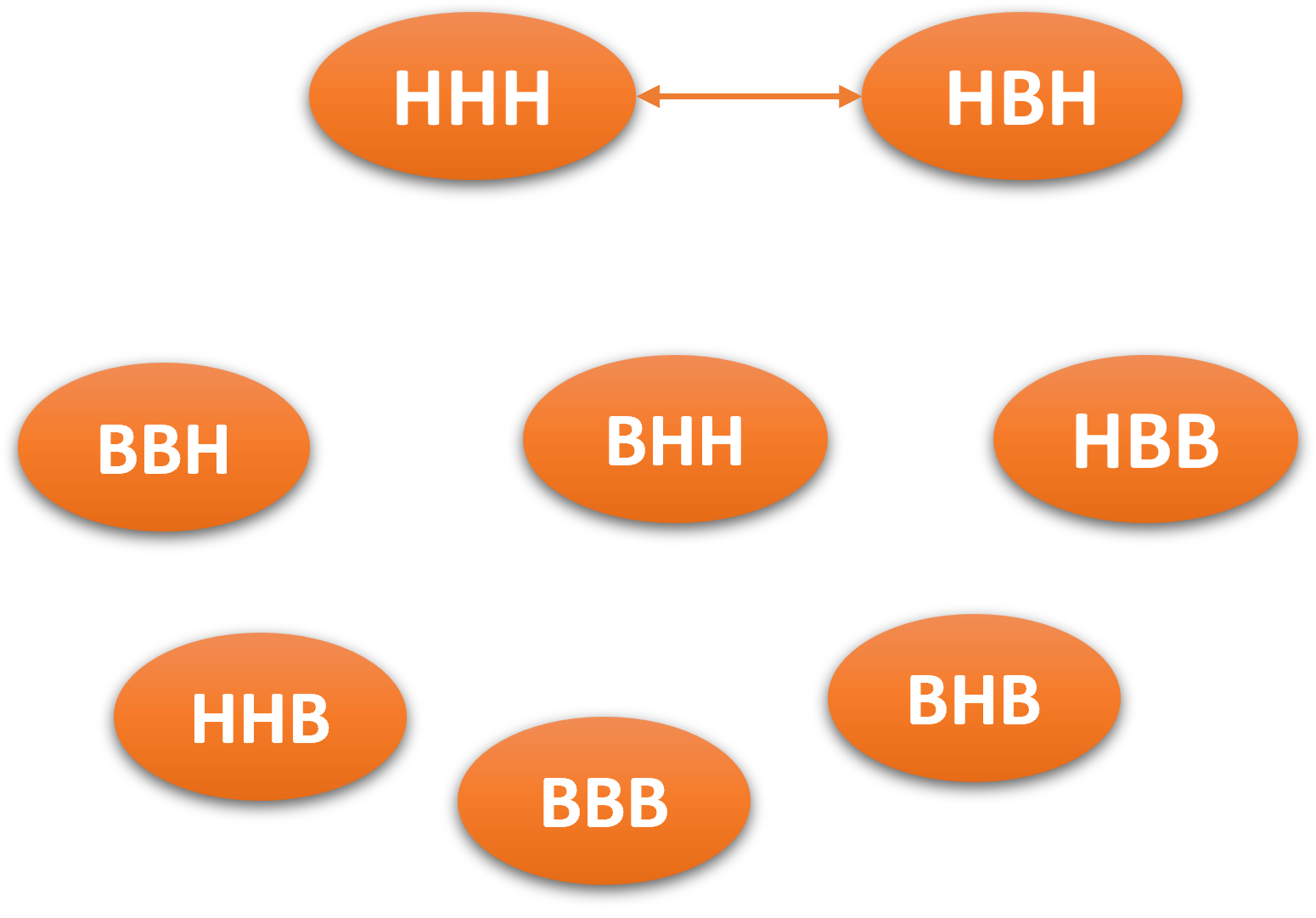}\label{fig:e}}
    
    \caption{Example environment and the corresponding per-agent slices of the shared Kripke structure, shown before and after communication. An agent’s slice is the triple $(\worlds,\rel_{i,t},\val)$ specific to agent $i$. Nodes represent possible worlds, and edges depict the agent-specific accessibility relations. Reflexive self-loops at worlds are omitted for clarity.}
    \label{fig:example2}
\end{figure}
%Building on Example~\ref{ex:example1}, consider now a $3\times1$ grid with cells $c\in\{1,2,3\}$ (Fig.~\ref{fig:example2}). Agent~$1$ (green) occupies cell~$1$, and agent~$2$ (orange) occupies cell~$3$. Each agent observes only the color of its own cell, as before, while cell~$2$ is unobserved. Each cell is labeled by one of two atomic propositions, exactly as in Example~\ref{ex:example1}. The actual world is $\worldw^{HBH}$, and its valuation is $(H_1,B_2,H_3)$. 

\addb{
\begin{example}[Epistemic state refinement through communication]
\label{ex:example2}
We now extend Example~\ref{ex:example1} by introducing communication between agents. \textbf{Prior to communication}, each agent’s epistemic state is represented by its accessibility relation $\rel_{i,t}$ on the shared Kripke structure $\kripke$.
Figures~\ref{fig:b}--\ref{fig:c} illustrate the induced slices of the Kripke structure for each agent, showing only what the agent has locally (its own accessibility relation) rather than the entire set of relations in the shared structure. Within each slice, nodes correspond to possible worlds, and edges represent the agent’s accessibility relation, i.e., pairs of worlds that are indistinguishable to that agent based on its local observation. Worlds connected by these relations are called accessible worlds. Since each agent directly observes only one cell, many worlds remain accessible. Each slice is color-coded to match the corresponding agent. During communication, Agent~$1$ sends Agent~$2$ the formula $\knows_1 H_1$ and Agent~$2$ sends Agent~$1$ formula $\knows_2 H_3$. \textbf{After communication}, each agent refines its epistemic state by incorporating
the information received from the other agent. Formally, this corresponds to the epistemic action (\textsc{refine}) which  removes accessible worlds inconsistent with the shared message by deleting the corresponding accessibility relations. 
The resulting accessibility relations, shown in Figs.~\ref{fig:d}--\ref{fig:e},
are strictly smaller: each agent’s uncertainty is reduced, and both agents’ updated Kripke slices are consistent with the true world $\worldw^{HBH}$. Now all $\knows_1 H_1 \land \knows_1 H_3$, $\knows_2 H_1 \land \knows_2 H_3$, $\possible_1 H_2 \land \possible_1 B_2$, and $\possible_2 H_2 \land \possible_2 B_2$ hold.
\end{example}}

\begin{figure}[t]
  \centering
  % First row
  \subfloat[\addb{Environment setup after an environmental change}.]{%
    \includegraphics[width=0.6\linewidth]{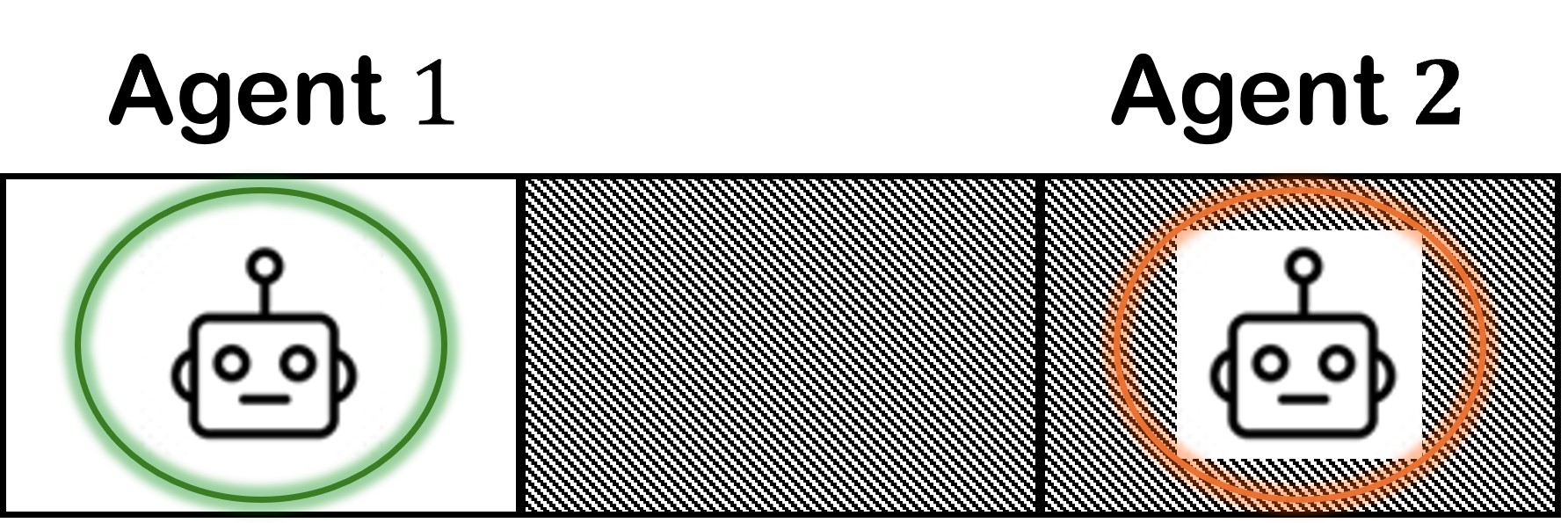}\label{fig:a_3}}
  
  \par\medskip
  
  % Second row
  \subfloat[\addb{Agent 1's slice of $\mathcal{M}_0$ after revision.}]{%
    \includegraphics[width=.48\linewidth]{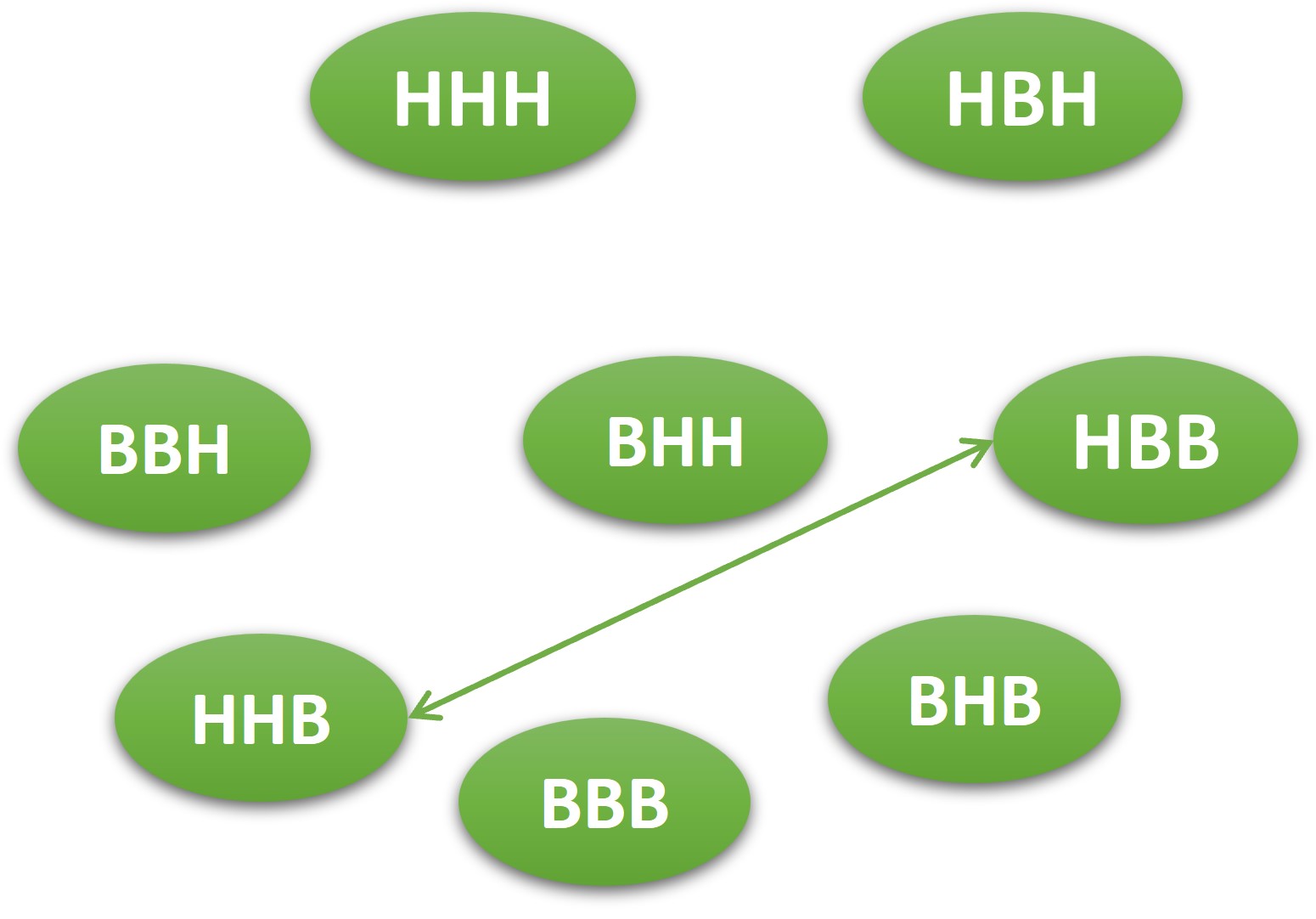}\label{fig:b_3}}
  \hfill
  \subfloat[\addb{Agent 2's slice of $\mathcal{M}_0$ after revision.}]{%
    \includegraphics[width=.48\linewidth]{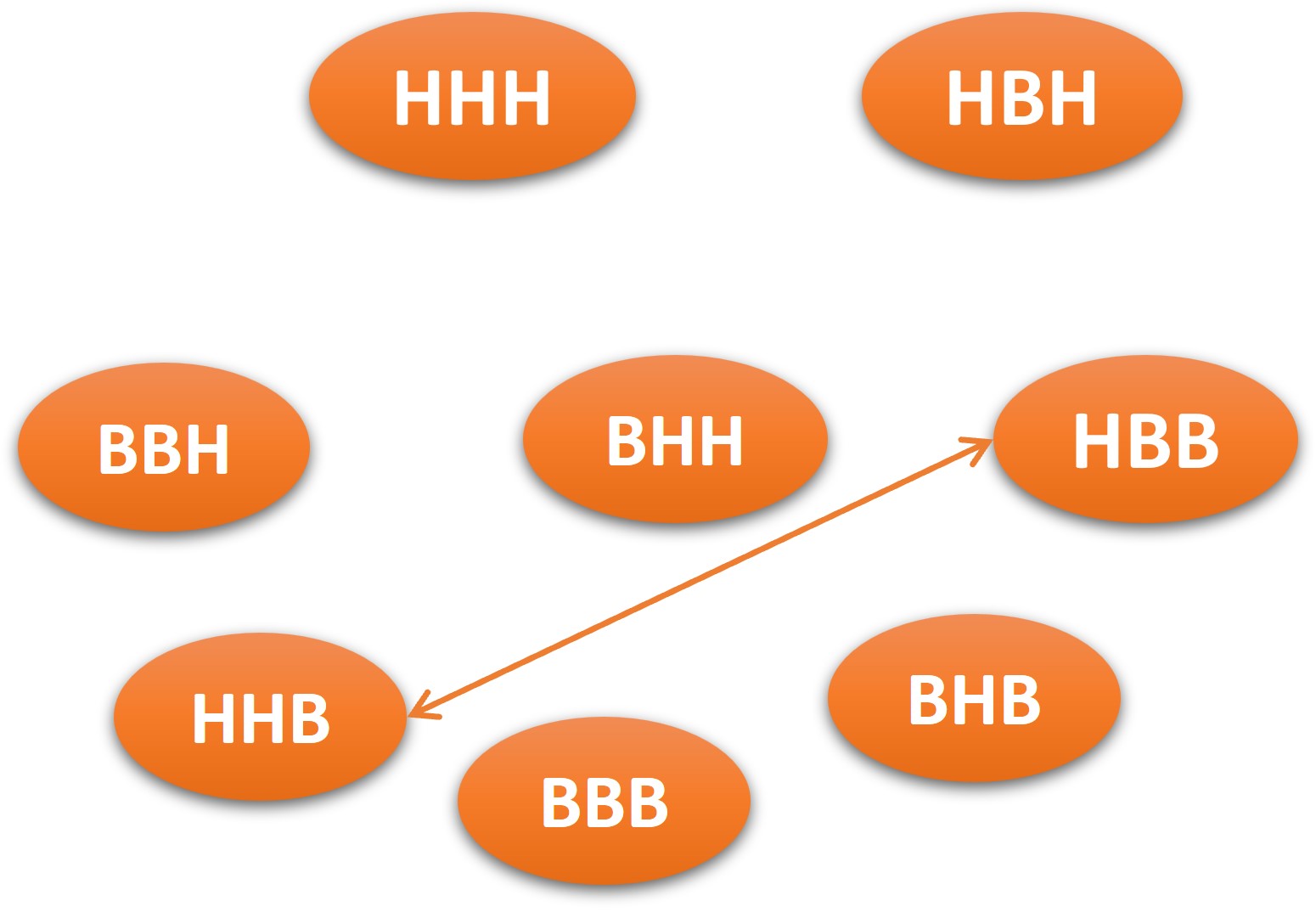}\label{fig:c_3}}

    \caption{\addb{Example environment after an environmental change (stressor) and the corresponding per-agent slices of the shared Kripke structure, shown after each agent executes the \textsc{revise} epistemic action. An agent’s slice is the triple $(\worlds,\rel_{i,t},\val)$ specific to agent $i$. Nodes represent possible worlds, and edges depict the agent-specific accessibility relations. Reflexive self-loops at worlds are omitted for clarity.}}
    \label{fig:example3}
\end{figure}

\addb{
\begin{example}[Epistemic state revision under a stressor]
\label{ex:example3}
We now consider an environmental change (stressor) affecting the setting in Examples~\ref{ex:example1} and~\ref{ex:example2}. Figure~\ref{fig:a_3} shows the environment after this change. Specifically, cell~$3$ transitions from white to black. Initially, Agent~$1$ receives no contradictory observations and therefore it executes the epistemic action \textsc{hold}. Meanwhile, Agent~$2$ locally detects a persistent mismatch between its predicted (via its knowledge) and observed color of cell~$3$. This triggers the epistemic action \textsc{explore}: Agent~$2$ collects discriminative observations to determine which environmental condition is now valid. Once sufficient evidence is collected, Agent~$2$ needs to update its epistemic state so that its knowledge is consistent with new evidence. This corresponds to executing the epistemic action \textsc{revise}, which updates Agent~$2$'s epistemic state so that $\knows_2 B_3$ holds and $\knows_2 H_3$ no longer holds. Concretely, Agent~$2$ removes accessibility links to worlds $HHH$ and $HBH$ (Figure~\ref{fig:e}), and add links to only those worlds that satisfy $B_3$, namely $HHB$ and $HBB$ (Figure~\ref{fig:c_3}). Agent~$2$ then communicates this update to Agent~$1$ via the formula $\knows_2 B_3$, which serves as an epistemic action \textsc{broadcast} that triggers a belief revision at Agent~$1$. In response, Agent~$1$ executes \textsc{revise} to remove previously accessible worlds violating $B_3$ (Figure~\ref{fig:d}) and re-align its epistemic state with the new publicly shared information (i.e., $\knows_1 B_3$ now holds and $\knows_1 H_3$ no longer holds). The resulting accessibility relations for Agent~$1$ is shown in Figure~\ref{fig:b_3}.
\end{example}}

\section{Resilience Metrics in Multi-Agent Systems}
\label{sec:metrics}
In critical situations, timely information and shared understanding are key to effective response. In our agent architecture, the success of agent $i$'s external policy $\pi_i^{\text{ext}}$ depends on the accuracy of its epistemic state $\Istate_{i,t}$, which the policy is conditioned on. Resilience thus requires agents to \textbf{detect} stressors, \textbf{recover} their epistemic states and policies after a stressor, and \textbf{maintain} accurate epistemic states and optimal policies post-recovery. To capture these aspects, we formalize resilience along two complementary dimensions:
\begin{itemize}
    \item \textbf{Epistemic resilience}: the ability to detect knowledge gaps, recover from discrepancies between the beliefs about the environment (epistemic state) and its actual state, and sustain accurate beliefs after recovery.
    \item \textbf{Action resilience}: the ability to re-align external actions with updated beliefs and maintain optimal external policies post-recovery.
\end{itemize}
These dimensions reflect not only individual capabilities but also the collective dynamics of multi-agent systems, where resilience is shaped by the formation of mutual knowledge; a shared, evolving understanding that enables coordinated adaptation and recovery. In light of this, we now formally introduce our resilience metric for multi-agent systems, followed by definitions of its two dimensions.

\begin{definition}[\textbf{System Resilience}]
\label{resS}
Let \(\resE, \resA \in \lang_{\agentsize}^{\text{time}}(\pSet)\) be temporal-epistemic formulas representing, respectively, epistemic resilience and action resilience specifications. Then, the system-wide resilience can be represented by the following formula: 
\begin{equation}
    \resS \equiv \resE \land \resA.
\end{equation}
This states that the overall system resilience is achieved if and only if both the epistemic and action resilience properties hold.
\end{definition}

\subsection{Epistemic Resilience}
Let \(\formula_1 \in \lang_\agentsize^{\text{time}}(\pSet)\) and \(\formula_2 \in \lang_\agentsize^{\text{time}}(\pSet)\) be mutually exclusive formulas (i.e., \(\formula_1 \rightarrow \lnot \formula_2\) and \(\formula_2 \rightarrow \lnot \formula_1\)) describing some aspect of the environment. Initially, \(\formula_1\) holds in the actual world, and all agents correctly believe that \(\formula_1\) is true. That is, at the initial time \(t_0\), for every agent \(i \in N\) and every \(\worldv \in W\) with \((\worldw,\worldv)\in \Istate_{i,t_0}\), we have \((\mathcal{M}_{t_0},\run,t_0)\models \formula_1\); equivalently,
\[
(\mathcal{M}_{t_0},\run,t_0)\models \bigwedge_{i\in N} \knows_i \formula_1 \quad\text{(i.e., }\everyK_\agentSet\,\formula_1\text{).}
\]

At some time $\tvio$, the environment changes (forming a stressor or a disturbance) such that \(\formula_1\) no longer holds and \(\formula_2\) becomes true. To achieve epistemic resilience, agents must: 
\begin{enumerate}[label=(\roman*)]
    \item Recover their epistemic states to the correct belief within a time bound \(\maxRE\), meaning that agents come to know that \(\formula_2\) now holds (\emph{recoverability}). That is, let recovery time be $\trecE$ where $\trecE - \tvio \le \maxRE$, for every agent \(i \in N\) and every \(\worldv \in W\) with \((\worldw,\worldv)\in \Istate_{i,\trecE}\), we have \((\mathcal M_{\trecE}, \run, \trecE)\models \formula_2\); equivalently,
\[
(\mathcal M_{\trecE},\run,\trecE)\models \bigwedge_{i\in N} \knows_i \formula_2 \quad\text{(i.e., }\everyK_\agentSet\,\formula_2\text{).}
\]
\item Maintain the recovered belief for at least a duration \(\minDE\) (\emph{durability}).
\end{enumerate}

The above conditions can be expressed more compactly in our temporal--epistemic language. 
In particular, epistemic resilience requires that all agents know the old fact $\formula_1$ until, within $\maxRE$ time units, they come to know the new fact $\formula_2$, 
and that this knowledge then persists for at least $\minDE$ time units. 
We formalize this as follows:

\begin{definition}[\textbf{Epistemic Resilience}]
Let \(\maxRE\) denote the maximum allowable recovery time and \(\minDE\) the minimum allowable durability duration. Let \(\resE\) denote a formula representing the epistemic resilience of the system. Then, we define epistemic resilience as
\begin{equation}
\label{resE}
    \resE \equiv \Bigl( \lnot \everyK_N \formula_2 \Bigr) \,\until_{[0,\maxRE]} \,\Bigl( \glob_{[0,\minDE)} \, \bigl( \everyK_N \formula_2 \bigr) \Bigr).
\end{equation}

where:
\begin{itemize}
  %\item \(\formula_1\) is the previous condition of the external world;
  %\item \(\formula_2\) is the new condition of the external world;
  %\item \(\everyK_N \formula_1\) and \(\everyK_N \formula_2\) denote that every agent in \(N\) knows \(\formula_1\) and \(\formula_2\), respectively;
  \item \(\until_{[0,\maxRE]}\) is the bounded until operator, ensuring that if the
agents are initially unaware of stressor, they will recover their epistemic states within $\maxRE$ time units;
  \item \(\glob_{[0,\minDE)}\) is the bounded globally operator, ensuring that
once the agents recover their epistemic states, these epistemic states are maintained for at least \(\minDE\) time units.
\end{itemize}
\end{definition} 
Durability means that the recovered epistemic states remain accurate enough to persist for at least \(\minDE\) time steps without having agents detecting any contradictions; assuming the environment does not change during this period. This highlights a fundamental \emph{recoverability–durability} trade-off: agents that revise their beliefs quickly may do so based on limited evidence, increasing the risk of incorrect updates and frequent revisions (i.e., lower durability). In contrast, agents that wait for more conclusive evidence may recover more slowly but are more likely to adopt a correct belief that remains stable (i.e., higher durability). To quantify these two factors, and make \eqref{resE} operational, we now extract two scalar quantities that
instantiate its two facets; \emph{epistemic recoverability} and \emph{epistemic durability}.

\begin{definition}[\textbf{Epistemic Recoverability}]
Let \(\timeV\) be the violation time at which the environment change (stressor) occurred, and let $\formula_2$ be the new environmental condition. Then, the \emph{epistemic recoverability interval} \(\Delta \trecE\) is defined as
\begin{equation}
\label{trecE}
    \Delta \trecE = \inf\{\,t \mid t \in \timeD,\; t > \timeV \text{ and } (\kripke,\run,t) \models \everyK_\agentSet \varphi_2\} - \timeV,
\end{equation}
This expression defines the epistemic recoverability duration as the smallest interval following the violation time $\timeV$ after which the agents' epistemic states satisfy the property that every agent in $\agentSet$ knows $\formula_2$. We have the convention that $\Delta \trecE= +\infty$ if the set is empty (i.e., epistemic accuracy is never restored after $\tvio$). The time at which this recovered epistemic property is first established is then given by:
\(
\trecE = \timeV + \Delta \trecE.
\)
\end{definition}

\begin{definition}[\textbf{Epistemic Durability}]
Let \(\timeV\) be the violation time at which the environment change occurred. Then, the \emph{epistemic durability interval} \(\Delta \tdurE\) is defined as
\begin{equation}
\label{tdurE}
\Delta \tdurE = \inf\{ t \mid t \in \timeD, t > \trecE \text{ \& } (\kripke,\run,t) \not\models \everyK_\agentSet \formula_2 \} - \trecE,
\end{equation}
where \(\everyK_N \formula_2\) denotes that every agent in \(N\) believes \(\formula_2\), and \((\kripke, \run,t) \not\models \everyK_\agentSet \formula_2\) indicates that this property no longer holds at time \(t\) in world \(\worldw\) and that agents have changed their beliefs. In other words, \(\Delta \tdurE\) is the smallest time interval after \(\trecE\) during which the recovered belief persists before being violated. We have the convention that $\Delta \tdurE =+\infty$ if the set is empty (i.e., epistemic accuracy is never violated after $\trecE$).
\end{definition}

While epistemic resilience captures the agents’ ability to recover and maintain accurate beliefs about the environment, resilience does not end at the epistemic level. For agents operating in dynamic environments, knowledge must be translated into appropriate external actions. Even if agents correctly revise their beliefs, a system is not truly resilient unless these beliefs lead to timely and effective adjustments in behavior. We therefore turn to the second dimension: 
\emph{action resilience}.

\subsection{Action Resilience}
\addb{Let $\Pi_i^{\mathrm{ext}}$ denote the set of admissible external policies for
agent $i$ and let $\pi_{i,t}^{\mathrm{ext}} \in \Pi_i^{\mathrm{ext}}$
denote the external policy currently implemented by agent $i$ at time $t$. Each agent $i$ is associated with a specific task-related performance objective $J_i$ (e.g., expected cumulative reward or cost). Given its finite candidate set of external policies
$\Pi_i^{\mathrm{ext}}$, this objective induces a time-dependent set of \emph{individually
optimal external policies}
\[
\Pi_{i,t}^{\mathrm{opt}} := \arg\max_{\pi_i \in \Pi_i^{\mathrm{ext}}} J_i(\pi_i; t),
\]
which contains the policies that are optimal for agent $i$ at time $t$ under the current environment condition. At the logical level, we introduce for each agent $i$ an atomic proposition
$\pi_i^{\mathrm{opt}} \in \lang_\agentsize^{\text{time}}(\pSet)$ that is interpreted as ``agent $i$ is
currently acting optimally with respect to $J_i$’’. That is, the proposition $\pi_i^{\mathrm{opt}}$ holds true iff agent $i$'s
current external policy $\pi_{i,t}^{\mathrm{ext}}$ belongs to $\Pi_{i,t}^{\mathrm{opt}}$. The group-level predicate is the conjunction
\[
\pi^{\mathrm{opt}} \equiv \bigwedge_{i\in\agentSet} \pi_i^{\mathrm{opt}},
\]
which holds precisely when every agent executes one of its individually
optimal external policies. No joint optimality or equilibrium concept is
assumed; optimality is evaluated individually for each agent.}
%**************

%Let $\pi_i^{\text{opt}} \in \pSet$ be an atomic proposition representing that agent $i$ currently executes an optimal policy with respect to achieving a given goal under the current environment conditions ; i.e., the current external policy $\pi^{\text{ext}}$ is optimal. Let \(\pi^{\text{opt}} \equiv \bigwedge_{i \in \agentSet} \pi_i^{\text{opt}}\) denotes that every agent in \(\agentSet\) is currently acting optimally.

%*************

Initially, before the stressor occurred at time step $\tvio$, all agents executed optimal actions with respect to the environment conditions. That is, at times \(t'< \tvio\), for every agent \(i \in N\) and every \(\worldv \in W\) with \((\worldw,\worldv)\in \Istate_{i,t'}\), we have \((\mathcal{M}_{t'},\run,t')\models \pi^{\text{opt}}\). Once the stressor occurs at time $\tvio$ and cause a change in agents' epistemic states at time $\trecE$, agents' external policies may no longer be optimal. That is, at $t'' \ge \tvio$, for every agent $i$ and every world $\worldw' \in \worlds$ with \((\worldw,\worldv)\in \Istate_{i,t''}\), we have \((\mathcal{M}_{t''},\run,t'')\not\models \pi^{\text{opt}}\). To achieve action resilience, agents must:
\begin{enumerate}[label=(\roman*)]
\item Recover their external policies to the optimal ones within a time bound \(\maxRA\), meaning that \(\pi^{\text{opt}}\) now holds (\emph{recoverability}). That is, let recovery time be $\trecA$ where $\trecA - \trecE \le \maxRA$, for every agent \(i \in \agentSet\) and every \(\worldv \in W\) with \((\worldw,\worldv)\in \Istate_{i,\trecA}\), we have \((\mathcal{M}_{\trecA}, \run,\trecA)\models \pi^{\text{opt}}\).
\item Maintain optimal policies for at least a duration \(\minDA\) (\emph{durability}).
\end{enumerate}

Similar to epistemic resilience, the above conditions can be expressed more compactly in our temporal--epistemic language. 
In particular, action resilience requires that all agents do not execute optimal policies with respect to current environment conditions ($\lnot \pi^{\text{opt}}$ holds)
until, within $\maxRE$ time units, they recover to execute optimal policies ($\pi^{\text{opt}}$ holds), and that they maintain executing optimal policies for at least $\minDA$ time units. We formalize this in the following definition.

%the formula \(\formulaA_1\) holds, meaning that the joint external policy of the agents does not match the optimal joint policy under the new environment condition.\footnote{This is because the current policy was optimal with respect to the previously believed environment conditions.} Formally,
%\[
%\bigwedge_{i\in N} \pi_i^{\text{ext}} \ne \bigwedge_{i\in N}  \pi_i^{\text{opt}}.
%\]
%At some point, however, agents infer based on the recovered mutual knowledge of the environment, that indicates an environmental change, that their current actions are not optimal. In response, they adjust their policies by undertaking appropriate actions to achieve optimality, thereby attaining \emph{action recoverability} by adopting the optimal action within a time bound \(\maxRA\). Consequently, the agents transition to executing the optimal action \(\formulaA_2\). Moreover, once the optimal action is adopted, it must persist for at least a duration \(\minDA\), ensuring \emph{action durability}; that is, the optimal behavior remains stable and is not subject to rapid, unwarranted changes. In other words, \textit{\textbf{action resilience}} requires that agents not only recover their optimal actions within a time bound \(\maxRA\) (recoverability) but also maintain these actions for at least a duration \(\minDA\) (durability). Again, this formulation reflects a trade-off: agents may choose rapid action adjustments (leading to frequent changes and lower durability) or wait for more conclusive evidence before revising their policies (ensuring longer-term stability).

\begin{definition}[\textbf{Action Resilience}]
Let \(\maxRA\) denote the maximum allowable recovery time for agents to adjust their external actions based on the recovered collective knowledge, let \(\minDA\) denote the minimum duration for which the optimal action must be maintained, and let $\pi^{\text{opt}}_i$ be a proposition indicating that agent $i \in \agentSet$ is currently acting optimally. Let \(\resA\) denote the action resilience of the system. Then, we define action resilience as
\begin{equation}
\label{resA}
    \resA \equiv \Bigl( \neg \pi^{\text{opt}} \Bigr) \,\until_{[0,\maxRA]} \,\Bigl( \glob_{[0,\minDA)} \, \bigl( \pi^{\text{opt}} \bigr) \Bigr),
\end{equation}

where:
\begin{itemize}
    \item \(\until_{[0,\maxRA]}\) is the bounded \textit{until} operator, ensuring that if the agents are initially not acting optimally, they will recover optimal actions within \(\maxRA\) time units;
    \item \(\glob_{[0,\minDA)}\) is the bounded \textit{globally} operator, ensuring that once the optimal actions are adopted, they are maintained for at least \(\minDA\) time units.
\end{itemize}
\end{definition}

This formulation reflects a \emph{recoverability–durability} trade-off: agents may choose rapid action adjustments (leading to frequent changes and lower durability) or explore enough before revising their policies (ensuring longer-term stability). As done previously, to quantify these two factors, and make \eqref{resA} operational, we now extract two scalar quantities that
instantiate its two facets; \emph{action recoverability} and \emph{action durability}.

\begin{definition}[\textbf{Action Recoverability}]
Let \(\timeV\) be the violation time at which the environment change occurred. Then, the \emph{action recoverability interval} \(\Delta \trecA\) is given by
\begin{equation}
\label{trecA}
    \Delta \trecA = \inf\{\, t  \mid t \in \timeD,\; t \ge \trecE \text{ and } (\kripke, \run,t) \models \pi^{\mathrm{opt}} \} - \trecE.
\end{equation}
This definition states that $\Delta \trecA$ is the smallest time interval after $\trecE$ until the system reaches the state where all agents in $\agentSet$ act optimally. We have the convention $\Delta \trecA=+\infty$ if the set is empty (i.e., optimality is never reached after $\trecE$). 
\end{definition}

\begin{definition}[\textbf{Action Durability}]
    Let \(\timeV\) be the violation time at which the environment change occurred. Define 
    \[
    \trecA = \trecE + \Delta \trecA,
    \]
    to be the time at which the optimal policy is reached. Then, the \emph{action durability interval} $\Delta \tdurA$ is given by
    \begin{equation}
    \label{tdurA}
        \Delta \tdurA = \inf\{ t  \mid t \in \timeD, \, t \geq \trecA \text{ and } (\kripke, \run,t) \not\models \pi^{\mathrm{opt}} \} - \trecA. 
    \end{equation}
    This definition states that $\Delta \tdurA$ is the minimal time interval after $\trecA$ until the system no longer satisfies the condition that every agent is acting optimally. In other words, it measures how long the recovered, optimal action is maintained before being violated. We have the convention $\Delta \tdurA =+\infty$ if the set is empty (i.e., optimality is never violated after $\trecA$). 
\end{definition}

Next, we present policies and algorithms that achieve epistemic and action resilience in multi-agent systems under the proposed metrics.

\section{Achieving Epistemic and Action Resilience: Policies and Algorithms}
\label{sec:algorithms}
Having defined the resilience metrics, we now specify mild system assumptions under which
information can flow and stressors are detectable, and then present policies and algorithms
that realize the metrics.

%\begin{assumption}[\textbf{Time Domain}]
%\label{time}
%    For simplicity, we assume a discretized time domain $\timeD$ as $\{0,1, \cdots\}$. 
%\end{assumption}
\begin{figure*}[ht]
    \centering
    \includegraphics[width=0.9\textwidth]{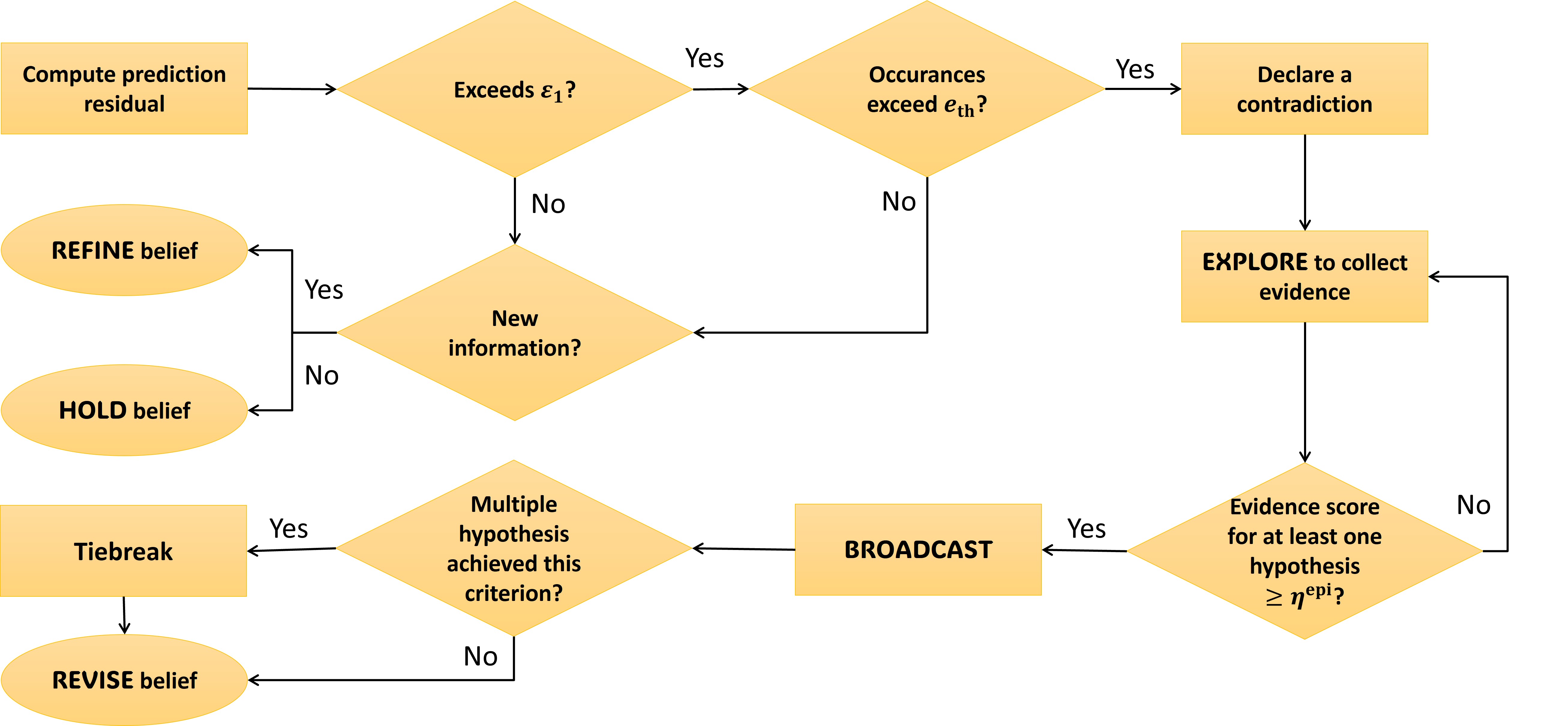} % Adjust the width as needed
    \caption{Flow diagram of the epistemic policy executed by every agent to achieve epistemic resilience.}
    \label{fig:epi_policy}
\end{figure*}

\begin{assumption}[\textbf{Communication Network Connectivity and Reliability}]
\label{assump:communication}
The communication network is a directed graph \(\graph=(\agentSet,\edges)\) that is
\emph{strongly connected}: for any \(i,j\in\agentSet\) there exists a directed path
\(i\leadsto j\). Define the (hop) diameter
\[
\lcomm \;:=\; \max_{i,j\in\agentSet} \link(i,j),
\]
where \(\link(i,j)\) is the length of a shortest directed path from \(i\) to \(j\). \addb{We additionally assume that messages transmitted along edges in \(\edges\) are delivered reliably and within one time step.} Thus, \(\lcomm\) serves as an upper bound on the time required to disseminate a contradiction or belief update to the entire network, thereby enabling the formation of mutual knowledge \(\everyK_\agentSet(\cdot)\).
\end{assumption}

\begin{assumption}[\textbf{Timely Local Detection of Contradictions}]
\label{assump:contradiction_detection}
Let \(\formula_1\) denote a proposition that becomes false in the environment at some time \(\tvio\). We assume that there exists at least one agent \(i \in \agentSet\) that can detect this change locally within at most \(\ndct\) time steps. Formally, 
%That is, there exists \(\tdet \in [\tvio, \tvio + \ndct]\) such that 
%\[
%(\kripke_{\tdet},\, \worldw) \models \knows_i \lnot \formula_0.
%\]
\[\exists i \in \agentSet, \,\exists \tdet \in [\tvio, \tvio + \ndct] \text{ s.t. } ({\mathcal M}_{\tdet}, \run,\tdet) \models \knows_i \lnot \formula_1.\]
The detection is assumed to be governed by an internal inconsistency condition, such as a threshold-based deviation between predicted and observed observations.
\end{assumption}

Assumption~\ref{assump:contradiction_detection} states that at least one agent can \emph{notice} a stressor from its own predicted observation and the actual observation mismatch. This is a reasonable assumption: if a stressor has no observable effect on any agent’s knowledge (no persistent residual beyond noise) or on task performance, then it is 
epistemically silent at that time and there is no ``meaningful recovery'' to perform.

\begin{assumption}[\textbf{Minimum Inter-Disturbance Interval}]
\label{assump:freq_dst}
Let \(\{\tvio^{(k)}\}_{k\ge1}\) be the sequence of disturbance (stressor) times.
There exists \(\drate>0\) such that for all \(k\),
\[
\tvio^{(k+1)} - \tvio^{(k)} \;\ge\; \drate.
\]
Equivalently, if a change occurs at time \(\tvio\), any subsequent change occurs at $\tvio'$ should satisfy
\(\tvio' \ge \tvio + \drate\). Thus, disturbances occur no more frequently than every $\drate$ time units.
\end{assumption}

Assumption~\ref{assump:freq_dst} prevents adversarially rapid, 
back-to-back stressors that would make recovery provably impossible. Feasibility, however, also requires that this inter-disturbance gap be large enough to
accommodate worst-case recovery and durability windows. The following bound captures this
necessary time budget.
\begin{remark}[Necessary time-budget for resilience]
\label{rem:time_budget}
Under Assumption~\ref{assump:freq_dst}, a necessary condition for the resilience
specification to be satisfiable is
\[
\drate \;>\; \max\big\{\ \maxRE+\minDE,\ \ \maxRE+\maxRA+\minDA\ \big\}.
\]
Indeed, within any inter-disturbance interval one must (i) complete epistemic recovery (at most $\maxRE$) and maintain it for at least $\minDE$, and (ii) after epistemic recovery (at most $\maxRE$), complete action recovery (at most $\maxRA$) and maintain it for at least $\minDA$; hence the interval must exceed both sums. This condition is \emph{necessary but not sufficient}: satisfying it does not by itself
guarantee resilience; appropriate epistemic/action policies and their success
properties are also required.
\end{remark}

For the reminder of this section, we work under Assumptions~\ref{assump:communication}–\ref{assump:freq_dst} and we assume that the time-budget bound in Remark~\ref{rem:time_budget} holds.

\subsection{Achieving Epistemic Resilience}
\label{subsec:achieve_epi_rec}
%\addr{Before detailing the algorithm in Figure~\ref{fig:epi_policy}, we briefly summarize how the epistemic notions
%from Section~\ref{sec:kripke} translate into the concrete epistemic actions introduced in Section~\ref{} and are used in the algorithm. Each world in the Kripke
%model corresponds to a candidate description of the environment, and the accessibility relation represents the set of worlds an agent still considers possible (accessible worlds). A \emph{prediction error} occurs
%when the current set of accessible worlds fails
%to explain the newly observed data. }
Let \(\formula_1 \in \lang_\agentsize^{\text{time}}(\pSet)\) be a formula representing an environmental condition that holds true prior to the violation time $\timeV$, but becomes false at $\timeV$. Let $\Phi= \{\varphi_2, \cdots, \varphi_\kpros\}$ be a set of $\kpros-1$ mutually exclusive hypotheses, each generated from the language $\lang_\agentsize^{\text{time}}(\pSet)$. Each hypothesis characterizes the same environmental aspect, with one hypothesis describing the new condition following a disturbance. Since these hypotheses are mutually exclusive, they cannot be true simultaneously.

To enable agents to detect such environmental changes (that is, to distinguish when the previously valid condition $\formula_1$ no longer holds and a new hypothesis in $\Phi$ becomes true) we quantify discrepancies between the agents’ internal predictions and the actual environment. As stated in Section~\ref{sec:system_model}, we introduce a measure of prediction inconsistency based on observation residuals. In particular, for any agent \(i \in \agentSet\), the prediction residual is $\Upsilon\!\big(\,\Ystate_{i,t},\,\Ystateh_{i,t}\,\big)$, with $\Ystate_{i,t}=\senmap_i(\Xstate_t)$, $\Ystateh_{i,t}=\varrho_i(\Istate_{i,t-1},\uExtit)$, and \(\Upsilon\) is any distance measure on \(\Yspace\) (e.g., the \(\ell_2\)-norm). Fix a threshold \(\varepsilon_1>0\), a window length \(L\in\mathbb{N}_{\ge 1}\), and an exceedance-count threshold \(e_{\text{th}}\in\{1,\ldots,L\}\). Define the count of threshold exceedances in the last $L$ steps as:
\begin{equation}
\label{eq:threshold}
e_{i,t} \;:=\; \sum_{t'=t-L+1}^{t} \mathbf{1}\{\, \Upsilon\!\big(\,\Ystate_{i,t'},\,\Ystateh_{i,t'}\,\big) > \varepsilon_1 \,\}.
\end{equation}
A contradiction is declared by agent $i$ at time $t$ precisely when
\(
e_{i,t}\ \ge\ e_{\text{th}}.
\)
\addb{This exceedance-count mechanism ensures that isolated noisy readings or transient prediction jitters do not trigger contradiction detection; only persistent, statistically meaningful deviations between predicted and sensed observations initiate evidence gathering and epistemic revision.} If no contradiction is detected (i.e., $e_{i,t} < e_{\text{th}}$), agent~$i$ selects either \textsc{refine} or \textsc{hold}, depending on whether new information has arrived via direct observations or messages (if any). Otherwise, agent \(i\) declares a contradiction at the first time $t$ with at least $e_{\text{th}}$ exceedances in the last $L$ steps ((i.e., $e_{i,t} \ge e_{\text{th}}$)), namely \[
\tdet^{(i)} \;:=\; \min\Big\{\, t>\tvio \;\Big|\; e_{i,t} \ge e_{\text{th}} \Big\}\footnote{We have the convention that \(\tdet^{(i)}=+\infty\) if the set is empty.
}.
\]
The earliest network detection time is
\(
\tdet \;:=\; \min_{i\in\agentSet}\, \tdet^{(i)}.
\)
The set of agents that detect the contradiction within the window
\([\tvio,\ \tvio+\ndct]\) is
\[
\agentSetdet \;:=\; \big\{\, i\in\agentSet \ \big|\ \tdet^{(i)} \in [\tvio,\ \tvio+\ndct] \big\}.
\]

By Assumption~\ref{assump:contradiction_detection}, this set is nonempty\footnote{In practice, choosing \(L \le \ndct\) ensures \(\tdet - \tvio \le \ndct\) provided there exists at least one agent whose post-change residual exceeds the threshold at least $e_{\text{th}}$ times per window.}. After confirming a contradiction, at times $t \ge \tdet$, each agent $i \in \agentSetdet$ picks $\uEpsit=\textsc{explore}$ to explore the environment to collect evidence that discriminates among the hypotheses in \(\Phi\). During this exploration phase, each new observation is mapped to a pairwise evidence increment. Specifically, for each pair of hypotheses $\formula_\kpro,\formula_\lpro \in \Phi$, each agent $i \in \agentSetdet$ maintains a \emph{cumulative pairwise evidence} score $\sco_{i,t}^{\kpro,\lpro}$, which quantifies the evidence in favor of hypothesis $\formula_\kpro$ versus hypothesis $\formula_\lpro$, defined as:
\[
\scos_{i,t}^{\kpro,\lpro} = \sum_{t' = \tdet +1}^{t} \sco_{i,t'}^{\kpro,\lpro}, \qquad \kpro \neq \lpro,
\]
where the per-step score $\sco_{i,t'}^{\kpro,\lpro}$ is bounded and has a positive mean gap under the true hypothesis (e.g., log-likelihood ratio, discriminative margin, or consistency vote). 
\noindent
Based on these cumulative scores, agent \(i \in \agentSetdet\) selects its current best hypothesis $\formula_{\hat{\kpro}_i(t)}$ via a maximin rule:
\[
\formula_{\hat{\kpro}_i(t)} \;\in\; \argmax_{\kpro:\formula_\kpro\in\Phi}\ \m_{i,t}^{\kpro},
\quad
\text{where}\quad
\m_{i,t}^{\kpro} \;:=\; \min_{\lpro: \formula_{\lpro}\in\Phi\setminus \formula_{\kpro}}\ \scos_{i,t}^{\kpro,\lpro}.
\]
\noindent Note that any fixed tie–breaking rule may be used. With these maximin scores in place, exploration proceeds until the leading hypothesis is uniformly ahead by a predefined margin $\etaE$. Agent \(i\) stops collecting evidence at the first time the chosen hypothesis is
sufficiently separated from all competitors:
\[
\tau_i \;:=\; \min\Big\{\, t \ge \tdet^{(i)} \ \Big|\ \m^{\,\hat{\kpro}_i(t)}_{i,t} \;\ge\; \etaE \Big\}.
\]
The threshold \(\etaE\) is selected to attain a desired error bound, e.g.
\(\Pr\!\big\{\hat{\kpro}_i(\tau_i) \neq \kpro^\star\big\} \le \epsilon\),
where \(\kpro^\star\) is the index of the true hypothesis.
Once the separation condition is met, exploration terminates and the agent announces its current best hypothesis. At time $t=\tau_i$, agent $i$ \textsc{broadcasts} a signed message
\[
\msg_{i,\tau_i}\;=\;\bigl(\tau_i,\ \knows_i \varphi_{\hat{\kpro}_i(\tau_i)},\ \m^{\hat{\kpro}_i(\tau_i)}_{i,\tau_i}\bigr)
\]
to all other agents with time-to-live $\mathrm{TTL}=\lcomm$. If multiple agents in $\agentSetdet$ broadcast conflicting hypotheses, the network resolves the disagreement by selecting the hypothesis with the highest cumulative evidence among all broadcasts received up to time $t$ (with any fixed tie–breaking rule). Once the most plausible hypothesis is selected, the goal shifts from evidence gathering to belief alignment. Each agent $i \in \agentSet$ sets $\uEpsit=\textsc{revise}$ to revise its epistemic state $\Istate_{i,t}$ so that all currently accessible worlds satisfy \(\varphi_{\hat{\kpro}}\). We then define the epistemic recovery time $\trecE$ as the first $t$ such that all agents knows $\varphi_{\hat{\kpro}}$:
\[
\trecE \;:=\; \min\{\, t > \tvio \mid (\kripke,\run,t)\models  \everyK_\agentSet \varphi_{\hat{\kpro}} \,\}.
\]

%\((\mathcal{M}_{\trecE}, \run,\trecE)\models \everyK_\agentSet \varphi_{\hat{\kpro}}\). 

\noindent Therefore,
\begin{align}
\Delta \trecE(\etaE) \;&=\; \trecE - \tvio \nonumber \\
\;&= \underbrace{(\tdet - \tvio)}_{\text{detection delay}}
+
\underbrace{(\taumax - \tdet)}_{\text{evidence aggregation}}
+
\underbrace{\lcomm}_{\text{message propagation}} \nonumber\\
\;&\le\; \ndct \;+\; (\taumax - \tdet) \;+\; \lcomm, \label{eq:boundEp}
\end{align}

\noindent where \(\taumax := \max_{i \in \agentSetdet} \tau_i\) is the time stamp of the last report. Note that the dependency of $\trecE$ and $\taumax$ on $\etaE$ is left implicit for ease of presentation. Figure~\ref{fig:epi_policy} depicts the proposed epistemic policy executed by each agent and the associated information flow.

\noindent\textbf{Meeting epistemic recoverability specifications.} To meet the epistemic recoverability specification, the evidence threshold $\etaE$ should be chosen so that
\begin{equation}
\label{eq:recEq}
\Delta \trecE(\etaE)=\trecE-\tvio \;\le\; \maxRE.
\end{equation}

\noindent Using the broadcast/commit rule, Equations~\eqref{eq:boundEp} and~\eqref{eq:recEq}, and Assumptions~\ref{assump:communication}-\ref{assump:contradiction_detection} a sufficient condition for recovering within at most $\maxRE$ steps is
\[
\tau_{\max}-\tdet \;\le\; \maxRE-\ndct-\lcomm,
\]
(deterministically or with high probability). This requires that $\maxRE \ge \ndct +\lcomm$. If this fails, the pair $(\maxRE,\etaE)$ is infeasible under the current sensing/communication regime.

\noindent\textbf{Meeting epistemic durability specifications.} With worst–case disturbance spacing $\drate$ (Assumption~\ref{assump:freq_dst} and Remark~\ref{rem:time_budget}), the achieved epistemic durability is lower bounded by
\[
\Delta\tdurE(\etaE)\ \ge\ \drate-\Delta\trecE(\etaE),
\]
provided the post-change hypothesis is correctly identified and remains valid until the next disturbance. Indeed, the threshold $\etaE$ trades off the probability of a correct commit,
\(p(\etaE):=\Pr\{\hat{\kpro}(\trecE)=\kpro^\star\}\),
against recovery time $\Delta\trecE(\etaE)$ and in turns the durability time $\Delta\tdurE(\etaE)$. A practical tuning rule is:
\begin{align}
\label{eq:obj_eta_epi}
\begin{aligned}
\max_{\etaE}\quad & p(\etaE)\,\bigl(\drate-\Delta\trecE(\etaE)\bigr) \\
\text{s.t.}\quad & 0 \le \Delta\trecE(\etaE)\ \le\ \min\{\maxRE,\ \drate-\minDE\},
\end{aligned}
\end{align}
so that the durability budget is nonnegative and at least meets $\minDE$. Intuitively: gather enough evidence to ensure correctness (so
the belief is likely to persist) but not so much that recovery
consumes the inter-disturbance interval. In practice, pick $\etaE$ to meet a target error while satisfying recoverability $(\maxRE)$ and durability $(\minDE)$ for nominal parameters (e.g., $\ndct,\lcomm$). In turn, this reduces to a one–dimensional line search over a feasible interval $[\etaE_{\min},\etaE_{\max}]$; if the interval is empty, the specification is \emph{infeasible}.

\addb{\subsection{Achieving Action Resilience}

After epistemic recovery at time $\trecE$, agents must re-align their external
policies to the newly identified condition $\varphi_{\hat{\kpro}}$ and sustain
optimal performance. Thus, once beliefs are revised, each agent $i$ runs a
best-policy identification loop over its finite candidate set
$\Pi_i^{\mathrm{ext}}$. At times $t=\trecE,\trecE+1,\ldots$, agent $i$:

\begin{enumerate}
  \item \textbf{Roll out.} Execute a short rollout of a candidate policy
        $\pi_i\in\Pi_i^{\mathrm{ext}}$ and record its task cost.

  \item \textbf{Update.} For each $\pi_i\in\Pi_i^{\mathrm{ext}}$, maintain the
        empirical mean cost $\hat J_{i,t}(\pi_i)$ and a confidence radius
        $\conf_{i,t}(\pi_i)$ calibrated to confidence level $\etaA$; the radius
        shrinks as more rollouts are collected. Formally, $\hat J_{i,t}(\pi_i)$
        is an estimator of the true performance value $J_i(\pi_i; t)$ introduced
        in Section~\ref{sec:metrics}, and the confidence interval
        $[\hat J_{i,t}(\pi_i) - \conf_{i,t}(\pi_i),\,
          \hat J_{i,t}(\pi_i) + \conf_{i,t}(\pi_i)]$
        is constructed so that, with high probability, it contains
        $J_i(\pi_i; t)$ for all $\pi_i\in\Pi_i^{\mathrm{ext}}$.

  \item \textbf{Select/stop.} Explore policies with the smallest lower confidence
        bound and stop as soon as one policy is statistically separated from all
        others; adopt a policy $\pi_i^{*}$ if
        \[
          \hat J_{i,t}(\pi_i^{*}) + \conf_{i,t}(\pi_i^{*})
          \;<\;
          \min_{\pi_i\in\Pi_i^{\mathrm{ext}}\setminus\{\pi_i^{*}\}}
          \big[\hat J_{i,t}(\pi_i) - \conf_{i,t}(\pi_i)\big],
        \]
        i.e., the worst plausible cost of $\pi_i^{*}$ is below the best plausible cost of every other policy. Under this condition, $\pi_i^{*}$ is, with
        high probability, an element of the optimal set $\Pi_{i,t}^{\mathrm{opt}}$
        associated with $J_i(\cdot; t)$.
\end{enumerate}
%Recall that $\Pi_{i,t}^{\mathrm{opt}}$ denotes
%the set of individually optimal external policies for agent $i$ at time $t$
%under objective~$J_i$.
Having selected such a policy $\pi_i^{*}$, we now formalize when the system has
completed \emph{action} recovery. The atomic proposition $\pi_i^{\mathrm{opt}}$ is defined
to hold exactly when the agent’s current external policy belongs to the set $\Pi_{i,t}^{\mathrm{opt}}$:
\[
(\kripke,\run,t)\models \pi_i^{\mathrm{opt}}
\qquad\Longleftrightarrow\qquad
\pi^{\mathrm{ext}}_{i,t}\in \Pi_{i,t}^{\mathrm{opt}}.
\]
The group-level predicate is the conjunction
\[
\pi^{\mathrm{opt}} \equiv \bigwedge_{i\in\agentSet} \pi_i^{\mathrm{opt}},
\]
which holds precisely when every agent executes one of its individually optimal
external policies. No joint optimality or equilibrium concept is assumed;
optimality is evaluated individually for each agent.

The \emph{action recovery time} is therefore
\[
\trecA := \min\{\, t \ge \trecE \mid (\kripke,\run,t)\models \pi^{\mathrm{opt}} \,\},
\]
and the recovery duration is
\[
\Delta\trecA(\etaE,\etaA) = \trecA - \trecE,
\]
where the dependence on $\etaA$ is omitted for readability.
}

\noindent\textbf{Meeting action recoverability specifications.} To meet the action recoverability specification, the confidence level $\etaA$ should be chosen so that $\Delta \trecA(\etaA) \;\le\; \maxRA$. If the best policy cannot be statistically separated before the deadline \(\maxRA\) in accordance with $\etaE$ and $\etaA$, then the
specification \((\maxRA,\etaE,\etaA)\) is \emph{infeasible} under the current candidate set
\(\Pi_i^{\mathrm{ext}}\) and noise/rollout budget. In that case one must adjust \(\maxRA\), \(\etaE\), and/or \(\etaA\), or inject stronger priors.

\noindent\textbf{Meeting action durability specifications.} With worst–case inter–disturbance spacing \(\drate\) (Assumption~\ref{assump:freq_dst} and Remark~\ref{rem:time_budget}), the \emph{achieved} action durability satisfies
\[
\Delta\tdurA(\etaE,\etaA) \;\ge\; \drate \;-\; \Delta\trecE(\etaE) \;-\; \Delta\trecA(\etaE,\etaA),
\]
provided that the post–change hypothesis and the optimal policy are correctly identified. Using \(\Delta\trecE(\etaE)\le \maxRE\) gives 
\[
\Delta\tdurA(\etaE,\etaA) \;\ge\; \drate \;-\; \maxRE \;-\; \Delta\trecA(\etaE,\etaA).
\]
The confidence level \(\etaA\) trades off between the probability of correct policy identification $p(\etaA)$, against the recovery time \(\Delta\trecA(\etaE,\etaA)\) and the durability time \(\Delta\tdurA(\etaE,\etaA)\). A practical tuning rule is
\begin{align}
\label{eq:obj_eta_act}
\begin{aligned}
\max_{\etaA}\quad & p(\etaA)\,\Big(\drate - \maxRE - \Delta\trecA(\etaA)\Big) \\
\text{s.t.}\quad  & 0 \le \Delta\trecA(\etaE,\etaA)\le \min\{\,\maxRA, \,\drate-\maxRE - \minDA\}.
\end{aligned}
\end{align}
so the guaranteed durability budget is nonnegative and at least meets \(\minDA\).
Intuitively: \emph{explore policies enough to ensure correctness (so the chosen policy is likely to persist)
 but not so much that recovery consumes the inter-disturbance interval.} In practice, \(\etaA\) is chosen to meet a target error, and and satisfy $(\maxRA,\minDA)$ under nominal sensing parameters. This reduces to a one–dimensional line search over a feasible interval
\([\etaA_{\min},\,\etaA_{\max}]\); if the interval is empty, the specification is \emph{infeasible}.

\addb{
\noindent \textbf{Computational complexity.}
The proposed architecture separates epistemic and action layers, and the
computational overhead can be assessed per agent.
On the epistemic side, each agent maintains a finite hypothesis set~$\Phi$
and the associated cumulative pairwise evidence scores.
Updating the residual $\Upsilon(\Ystate_{i,t},\Ystateh_{i,t})$ requires
$O\!(\dim \Yspace)$ computations, and maintaining the sliding-window
threshold–exceedance counter requires $O\!(1)$ computation per step.
During the \textsc{explore} phase, updating all pairwise evidence scores and
computing the maximin statistic $\m_{i,t}^{\kpro}$ require
$O\!(|\Phi|^{2})$ computations per update step. On the action side, each agent $i$ performs best-policy identification over its
finite candidate set~$\Pi_i^{\mathrm{ext}}$.
Updating empirical means and confidence radii, together with evaluating the
stopping condition, requires $O\!(|\Pi_i^{\mathrm{ext}}|)$ computations per update step. Overall, the worst-case per-agent per-step computational complexity of the framework is
\[
  O\!\big( \dim \Yspace \;+\; |\Phi|^{2} \;+\; |\Pi_i^{\mathrm{ext}}| \big),
\]
that is, linear in the observation dimension and in the number of candidate
external policies, and quadratic in the number of epistemic hypotheses. The framework is compatible with resource-constrained deployments because
its dominant computational terms depend on the cardinalities of the
hypothesis set~$\Phi$, the external policy set~$\Pi_i^{\mathrm{ext}}$, and the
dimension of the observation space~$\Yspace$. These quantities can be made small
by design (e.g., coarse-graining hypotheses, restricting candidate policies,
or reducing observation representations), thereby allowing system designers to
trade model resolution for computational footprint.
%Communication overhead remains modest: the epistemic layer uses event-triggered
%broadcasts only when contradictions are detected or hypotheses are committed;
%otherwise, no messages are exchanged.
}

\section{Formal Verification of Resilience Properties}
\label{sec:analytical}
%\subsection{}
We evaluate formulas along an actual run \(\run\) using \((\kripke,\run,t)\models\cdot\) as defined in Section~\ref{sec:kripke}.
A disturbance at time \(\timeV\) is a change in the environment such that a pre-change fact \(\varphi_1\)
holds for all \(t<\timeV\) along \(\run\), and the post-change fact \(\varphi_2\) holds for all \(t\ge \timeV\)
along \(\run\); agents’ relations \(\rel_{i,t}\) evolve with \(t\).

\begin{theorem}[\textbf{Soundness of the Resilience Specifications}]
\label{thm:soundness}
If \((\mathcal{M}_{\timeV},\run,\timeV)\models \resE \land \resA\), then along \(\run\):
\begin{enumerate}
  \item The epistemic recovery time \(\Delta\trecE\) defined in~\eqref{trecE} satisfies \(\Delta\trecE\le \maxRE\), and the epistemic durability time defined in~\eqref{tdurE} satisfies \(\Delta\tdurE\ge \minDE\).
  \item The action recovery time \(\Delta\trecA\) defined in~\eqref{trecA} satisfies \(\Delta\trecA\le \maxRA\), and the action durability time defined in~\eqref{tdurA} satisfies \(\Delta\tdurA\ge \minDA\).
\end{enumerate}
\end{theorem}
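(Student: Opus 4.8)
The plan is to prove soundness by directly unfolding the Kripke semantics of the bounded \emph{until} and \emph{globally} operators (Section~\ref{sec:kripke}) and then matching the resulting witness times against the infimum-based metric definitions \eqref{trecE}--\eqref{tdurA}. By the satisfaction clause for $\lor$ (hence for the derived $\land$), the hypothesis $(\mathcal{M}_{\timeV},\run,\timeV)\models\resE\land\resA$ splits into $(\mathcal{M}_{\timeV},\run,\timeV)\models\resE$ and $(\mathcal{M}_{\timeV},\run,\timeV)\models\resA$, so I would treat the epistemic and action claims separately using the same template.

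First I would handle the epistemic claim. Applying the \emph{until} clause to \eqref{resE} at $\timeV$ yields a witness $t''\in[\timeV,\timeV+\maxRE]$ with $(\mathcal{M}_{t''},\run,t'')\models\glob_{[0,\minDE)}(\everyK_\agentSet\formula_2)$ and $(\mathcal{M}_{t'},\run,t')\models\lnot\everyK_\agentSet\formula_2$ for every $t'\in[\timeV,t'')$. Because the modeling convention that agents have not yet recovered at the instant of disturbance forces $\lnot\everyK_\agentSet\formula_2$ at $\timeV$, this false-until segment is nonempty and certifies that \emph{no} time strictly between $\timeV$ and $t''$ satisfies $\everyK_\agentSet\formula_2$; combined with $\everyK_\agentSet\formula_2$ holding at $t''$ (the first point of the \emph{globally} window), this pins $t''$ as exactly the infimum in \eqref{trecE}, so $\trecE=t''$ and $\Delta\trecE=t''-\timeV\le\maxRE$. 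For durability I would unfold the \emph{globally} clause: $\everyK_\agentSet\formula_2$ holds throughout $[\trecE,\trecE+\minDE)$, hence the set $\{t>\trecE:(\mathcal{M}_{t},\run,t)\not\models\everyK_\agentSet\formula_2\}$ contains no point below $\trecE+\minDE$, so its infimum is $\ge\trecE+\minDE$, which is precisely $\Delta\tdurE\ge\minDE$ by \eqref{tdurE}.

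The action claim follows the identical template applied to \eqref{resA}, with $\pi^{\mathrm{opt}}$ replacing $\everyK_\agentSet\formula_2$, $\maxRA$ replacing $\maxRE$, and $\minDA$ replacing $\minDE$. Unfolding the \emph{until} at $\timeV$ gives a witness $s''\in[\timeV,\timeV+\maxRA]$ with $\pi^{\mathrm{opt}}$ false on $[\timeV,s'')$ and holding throughout $[s'',s''+\minDA)$. The one genuine wrinkle is a reference-point mismatch: $\resA$ is evaluated at $\timeV$, whereas $\Delta\trecA$ in \eqref{trecA} is measured from $\trecE$. I would close this gap using the temporal ordering $\timeV\le\trecE\le s''$, which is not purely logical but follows from the architecture of Section~\ref{sec:system_model}: since the external policy is conditioned on the epistemic state, $\pi^{\mathrm{opt}}$ under the new condition cannot hold before the agents know $\formula_2$, so the first optimal instant cannot precede $\trecE$. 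Given $\trecE\le s''$, the formula forces $\pi^{\mathrm{opt}}$ false on $[\trecE,s'')\subseteq[\timeV,s'')$ and true at $s''$, so the infimum in \eqref{trecA} equals $s''$, giving $\Delta\trecA=s''-\trecE\le s''-\timeV\le\maxRA$ (in fact $\le\maxRA-\Delta\trecE$). Durability then follows exactly as in the epistemic case from the \emph{globally} window $[s'',s''+\minDA)$, yielding $\Delta\tdurA\ge\minDA$ via \eqref{tdurA}.

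I expect the main obstacle to be this reference-point reconciliation in the action part, namely rigorously justifying $\trecE\le s''$ so that the single until-witness obtained at $\timeV$ correctly instantiates a recovery interval measured from $\trecE$; a secondary care-point is the boundary behaviour (strict versus non-strict time indices in the several infima) where I rely on the disturbance convention to guarantee $\lnot\everyK_\agentSet\formula_2$ and $\lnot\pi^{\mathrm{opt}}$ at $\timeV$, thereby excluding the degenerate witness $t''=\timeV$ and fixing each infimum at the intended integer time. All remaining steps are routine applications of the semantic clauses, so once the ordering and boundary issues are settled the four bounds drop out directly.
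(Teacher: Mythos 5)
Your proposal is correct and follows essentially the same route as the paper's own proof: unfold the Kripke semantics of the bounded $\until$ and $\glob$ operators at $\timeV$, obtain a witness time inside the $[\timeV,\timeV+\maxRE]$ (resp.\ $[\timeV,\timeV+\maxRA]$) window carrying a $\minDE$- (resp.\ $\minDA$-) length globally segment, and read the four bounds off against the infimum-based definitions \eqref{trecE}--\eqref{tdurA}. You are in fact more careful than the paper at precisely the two points you flag: the paper's proof simply \emph{names} its until-witnesses $\trecE$ and $\trecA$, thereby silently assuming both the witness-equals-infimum identification (which you justify via the until's first argument forcing $\lnot\everyK_\agentSet\formula_2$, resp.\ $\lnot\pi^{\mathrm{opt}}$, on the pre-witness segment--an identification genuinely needed for the durability claims) and the ordering $\trecE\le\trecA$ needed to reconcile $\resA$'s anchoring at $\timeV$ with $\Delta\trecA$'s anchoring at $\trecE$ in \eqref{trecA}, which you correctly note is not a purely logical fact but rests on the architectural coupling of external policies to epistemic states.
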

\begin{proof}
By the semantics of bounded \(\until\) and \(\glob\),
\((\mathcal{M}_{\timeV},\run,\timeV)\models \resE\) implies there exists \(\trecE\in\{\timeV,\ldots,\timeV+\maxRE\}\)
such that \((\mathcal{M}_{\trecE},\run,\trecE)\models \everyK_\agentSet \varphi_2\) and
\((\mathcal{M}_{t'},\run,t')\models \everyK_\agentSet \varphi_2\) for all
\(t'\in\{\trecE,\ldots,\trecE+\minDE-1\}\). Hence \(\Delta\trecE\le \maxRE\) and \(\Delta\tdurE\ge \minDE\). Similarly, \((\mathcal{M}_{\timeV},\run,\timeV)\models \resA\) yields \(\trecA\in\{\timeV,\ldots,\timeV+\maxRA\}\)
such that \((\mathcal{M}_{\trecA},\run,\trecA)\models \pi^{\mathrm{opt}}\) and
\((\mathcal{M}_{t'},\run,t')\models \pi^{\mathrm{opt}}\) for all
\(t'\in\{\trecA,\ldots,\trecA+\minDA-1\}\). Thus \(\Delta\trecA\le \maxRA\) and \(\Delta\tdurA\ge \minDA\).
\end{proof}

\begin{corollary}[\textbf{Bounded-Horizon Completeness (Finite Counterexample)}]
\label{cor:completeness}
If \((\mathcal{M}_{\timeV},\run,\timeV)\not\models \resE \land \resA\), then there exists a finite time
\(t \in [\timeV,\ \timeV+B]\) witnessing a violation, where
\[
B \;:=\; \max\{\, \maxRE+\minDE,\ \maxRA+\minDA \,\}.
\]
Equivalently, to refute \(\resE \land \resA\) it suffices to examine at most \(B\) steps
after the disturbance.
\end{corollary}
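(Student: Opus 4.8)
The plan is to prove the statement directly from the bounded semantics of $\until$ and $\glob$, exploiting that both resilience specifications are \emph{bounded-horizon} formulas whose truth at $\timeV$ depends only on finitely many time points. First I would observe that since $\resE \land \resA$ fails at $(\mathcal{M}_{\timeV},\run,\timeV)$, at least one conjunct fails; I would treat the two cases symmetrically and focus on $\resE$, the argument for $\resA$ being identical with $(\maxRA,\minDA,\pi^{\mathrm{opt}})$ in place of $(\maxRE,\minDE,\everyK_\agentSet \formula_2)$.

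The key step is a \emph{finite-lookahead lemma}: the satisfaction of $\resE$ at $\timeV$ is determined entirely by the truth values of the single observable $\everyK_\agentSet \formula_2$ over the window $[\timeV,\ \timeV+\maxRE+\minDE-1]$. To see this I would unfold the semantics. By the clause for bounded $\until$, $(\mathcal{M}_{\timeV},\run,\timeV)\models\resE$ holds iff there is some $t''\in[\timeV,\timeV+\maxRE]$ at which the inner formula $\glob_{[0,\minDE)}(\everyK_\agentSet \formula_2)$ holds while $\lnot\everyK_\agentSet \formula_2$ holds at every $t'\in[\timeV,t'')$; and by the clause for bounded $\glob$, this inner formula at $t''$ references $\everyK_\agentSet \formula_2$ only at the points $t'',\ldots,t''+\minDE-1$. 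Since $\knows_i$ (hence $\everyK_\agentSet$) quantifies over accessible worlds \emph{at the current time} and does not advance the temporal index, and since $\formula_2$ is a state formula, each $\everyK_\agentSet \formula_2$ evaluation is temporally local. Taking $t''$ at its maximal value $\timeV+\maxRE$, the latest referenced time is $\timeV+\maxRE+\minDE-1$, so the Boolean value of $\resE$ at $\timeV$ is a function solely of $(b_t)_{t\in[\timeV,\ \timeV+\maxRE+\minDE-1]}$, where $b_t:=\mathbf{1}[(\mathcal{M}_t,\run,t)\models\everyK_\agentSet \formula_2]$.

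With the lemma in hand, the conclusion follows by negating the bounded-$\until$/bounded-$\glob$ conditions unfolded in the proof of Theorem~\ref{thm:soundness}. If $\resE$ fails at $\timeV$, then for every candidate commit time $t''\in[\timeV,\timeV+\maxRE]$ either the prefix condition ($\lnot\everyK_\agentSet \formula_2$ on $[\timeV,t'')$) or the persistence condition ($\everyK_\agentSet \formula_2$ on $[t'',t''+\minDE)$) is broken; every time point implicated in these failures lies in $[\timeV,\ \timeV+\maxRE+\minDE-1]$, so one can exhibit a concrete $t$ in this window at which the relevant observable takes the wrong value, and this $t$ is the finite witness. The analogous window for $\resA$ is $[\timeV,\ \timeV+\maxRA+\minDA-1]$. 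Taking $B:=\max\{\maxRE+\minDE,\ \maxRA+\minDA\}$ bounds both windows (and absorbs the $-1$ from the half-open $\glob$), so the witness lies in $[\timeV,\ \timeV+B]$, which is exactly the claim; equivalently, inspecting $B$ steps past $\timeV$ suffices to refute $\resE\land\resA$.

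The main obstacle I anticipate is making the finite-lookahead lemma fully rigorous: one must carry out a short structural induction showing that the \emph{temporal horizon} of a formula composes additively under nesting of $\until_{[0,\alpha]}$ and $\glob_{[0,\beta)}$ (contributing $\alpha$ and $\beta-1$ respectively) while $\knows_i$ contributes $0$, and to justify cleanly that $\formula_2$ and $\pi^{\mathrm{opt}}$ are temporally atomic in the resilience setting, so that the horizons of $\resE$ and $\resA$ are exactly $\maxRE+\minDE-1$ and $\maxRA+\minDA-1$. The remaining bookkeeping — the case split on which conjunct fails and the extraction of the explicit witness time — is routine.
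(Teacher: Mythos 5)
Your proposal is correct and follows essentially the same route as the paper's proof: a case split on which conjunct fails, unfolding the bounded $\until_{[0,\cdot]}$/$\glob_{[0,\cdot)}$ semantics to place every implicated time point within $\maxRE+\minDE$ (resp.\ $\maxRA+\minDA$) steps of $\timeV$, and taking the maximum to obtain $B$. Your ``finite-lookahead lemma'' is simply a more explicit formalization of the window-determinacy fact that the paper's case analysis uses implicitly, so it adds rigor but not a genuinely different argument.
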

\begin{proof}
Consider the epistemic part \(\resE \equiv (\neg\,\everyK_\agentSet\varphi_2)\ \until_{[0,\maxRE]}\ 
(\glob_{[0,\minDE)}\,\everyK_\agentSet\varphi_2)\).
If \(\resE\) fails at \(\timeV\), then either:
(i) no \(\trecE \le \timeV+\maxRE\) exists with \((\mathcal{M}_{\trecE},\run,\trecE)\models \everyK_\agentSet\varphi_2\),
so the failure is witnessed within \(\maxRE\) steps; or
(ii) some such \(\trecE\) exists but \(\glob_{[0,\minDE)}\everyK_\agentSet\varphi_2\) fails, so there is
an \(t' \in \{\trecE,\ldots,\trecE+\minDE-1\}\) with \((\mathcal{M}_{t'},\run,t')\models \neg\,\everyK_\agentSet\varphi_2\),
hence a witness within \(\maxRE+\minDE\) steps.
An identical case split for \(\resA\) yields a witness within \(\maxRA+\minDA\) steps.
Taking the maximum of the two bounds gives \(B\).
\end{proof}
Formal verification entails checking, at design time or at runtime, whether a multi-agent system satisfies the resilience properties under the stated assumptions. In particular, by Theorem~\ref{thm:soundness} and Corollary~\ref{cor:completeness}, verification reduces to bounded reasoning: it suffices to examine at most \(B=\max\{\maxRE+\minDE,\ \maxRA+\minDA\}\) steps after a disturbance. \addb{Since the verification is confined to a finite horizon, the resilience
specifications become compatible with standard bounded model-checking approaches,
such as SAT/SMT-based unrolling, explicit-state exploration over a window of
length \(B\), and lightweight runtime monitors that track only the last \(B\)
steps. Thus, the formal results established in this section enable 
computationally tractable certification of recovery protocols, knowledge-update rules, and coordination strategies before deployment, as well as efficient runtime monitoring in dynamic environments.}

\section{Numerical Results and Analysis}
\label{sec:numerical}
To show the effectiveness of our proposed framework, we study a distributed multi-agent decision-making under abrupt network stressors usecase. In particular, we study a group of networked agents $\agentSet$ repeatedly selecting among $A = |\UExt|$ actions while the environment undergoes changes (stressors). All agents are assumed to share the same set of external actions, i.e., $\UExt_1 = \cdots = \UExt_\agentsize = \UExt$. Each action, denoted $\arm$ for ease of presentation where $\arm = \uExt \in \UExt$, yields a Gaussian reward with variance $\var^2$ and a mean $\mean_\arm(t)$ that might change due to environmental stressors. This models operational regimes where external events suddenly alter performance; e.g., the activation of a strong interferer, mobility-driven shadowing, or the appearance/disappearance of a hotspot. \addb{The group’s goal is to maximize total reward (equivalently, minimize cumulative
regret) over the horizon~$\timeD$. In the setting considered here, this
is equivalent to each agent $i$ maximizing its own expected reward $J_i$, since
all agents face the same action set and reward structure.
}%The group's objective is to maximize the total reward (equivalently, minimize cumulative regret) over a horizon $\timeD$, using their own observations and  neighbor communication. 
This abstraction, commonly referred to in the literature as the multi-agent multi-armed bandit, encapsulates several canonical tasks in networked communication systems where fast, decentralized adaptation is required. In this case, external actions are also referred to as arms.
%In distributed spectrum access, access points or radios select channels whose quality can shift abruptly due to incumbent activity or traffic bursts. In directional link maintenance (e.g., mmWave/THz), blockages or reflector reconfigurations abruptly reorder beam/codebook quality. In edge offloading, mobile users select among servers whose service rates change with sudden load spikes or failures. In all cases the system benefits from (i) low-latency recover after a shift and (ii) high-durable performance after a recovery.

Classical solutions to the multi-arm bandit problem typically rely on statistical algorithms such as Upper Confidence Bound (UCB)~\cite{UCB_MAMAB}, Thompson Sampling~\cite{Thompson_MAMAB}, or epsilon-greedy strategies~\cite{greedy_MAMAB}. These methods balance exploration and exploitation using statistical estimates of action rewards, confidence bounds, or posterior distributions. However, these strategies inherently assume that reward deviations are purely stochastic and unbiased. This assumption fails under adversarial or non-stationary conditions. To handle non-stationarity, one standard strategy is passive adaptation via fading memory. This approach, exemplified by Discounted-UCB, which down-weight stale data to track piecewise-constant drifts~\cite{Discount_UCB}. 

In this work, we employ UCB as the underlying algorithm for all baselines and for our approach. In particular, for each action $a$, every agent $i$ maintains two running statistics (i) \textit{count}: the number of times agent $i$ has pulled action $a$ up to current time step, and (ii) \textit{sum}: the cumulative sum of rewards agent $i$ has observed from action $a$ up to current time step. These two quantities are the sufficient statistics needed to form the empirical mean of action $a$; $\mean_\arm(t)$. Our implementation uses the Gaussian-UCB algorithm found in~\cite[Chapter~7]{UCB_Eq}. In the cooperative baseline, agents do not share raw reward histories; they exchange only the per-action counts and sums with their neighbors and average them via a linear consensus step. After mixing, each agent scales the per-agent network averages by the group size $\agentsize$ to approximate global counts and sums, forming a globalized empirical mean for use inside the UCB decision rule. 

\textbf{Baselines:} We simulate two standard schemes. In the \textbf{Independent Discounted-UCB}, each agent runs a discounted-UCB policy using only its own observations, where an exponential forgetting rate enables adaptation to non-stationary rewards. In the \textbf{Cooperative Discounted-UCB}, agents are connected through an undirected communication graph and apply a consensus-based discounted-UCB, sharing sufficient statistics with neighbors to accelerate adaptation.

\textbf{Proposed approach:} We consider two variants of our epistemic (Kripke) framework. In the first variant, as \textbf{Light-Coop Kripke}, agents follow independent UCB dynamics and communicate only when announcing a detected change in their knowledge or belief state. In the second variant, referred to as \textbf{Cooperative Kripke}, agents operate under cooperative UCB from the outset; they not only exchange belief-change messages but also jointly perform cooperative recovery, pooling both evidence and action statistics to identify and realign to the new correct hypothesis. 

%\addr{arm = external action}
\textbf{Simulation Setup:} We simulate a network of $\agentsize = 12$ agents operating synchronously over a time horizon of $\timeD = 2500$ steps. The action set consists of $|\UExt| = 16$ actions (arms), each arm reward's true mean is randomly picked from interval $[0.1,1.2]$. Agents are connected through an undirected communication graph instantiated as a ring topology. The environment follows a piecewise-stationary structure: each arm’s true mean remains constant within a segment and changes abruptly at a predefined change-point (unknown to agents) representing an external stressor. In the main experiment, the stressor occurs at $t=1400$, causing a reordering of the optimal arms. For cooperative algorithms, agents exchange local statistics with neighbors using Metropolis–Hastings weights and perform one consensus round per decision epoch. All methods are initialized with identical random seeds to ensure that they experience the same stochastic realizations of the environment across trials. Each configuration is repeated over $10$ ndependent trials, and reported performance curves show the sample mean together with $95\%$ confidence intervals computed over these trials. To improve visualization, the curves are lightly smoothed using a moving average with a window length of $50$ time steps. Moreover, to examine resilience under different noise conditions, we conduct simulations for two noise levels: $\sigma \in \{0.5, 1.0\}$. This allows us to compare how independent and cooperative exploration balance under varying uncertainty. 

\addb{
The resilience bounds $(\maxRE,\minDE,\maxRA,\minDA)$ in our experiments are not algorithmic hyperparameters but designer-specified requirements on recovery and durability (see Section~\ref{sec:metrics} and Section~\ref{sec:algorithms}). Since the simulation environment contains only a \emph{single} stressor, these specifications are chosen so that resilience remains feasible within the fixed horizon $T$. In contrast, the detection thresholds $(\varepsilon_1, L, e_{\text{th}})$ govern when contradictory evidence is considered persistent rather than transient. These are selected from pre-change residual statistics (empirically ensuring that the false-alarm probability remains negligible). Finally, the epistemic evidence threshold $\etaE$ and action confidence threshold $\etaA$ are tuned using the one-dimensional rules in Equations~\eqref{eq:obj_eta_epi} and~\eqref{eq:obj_eta_act}, respectively. 
%All parameter values used in Section~VII follow these rules and require no problem-specific tuning beyond the prescribed statistical criteria, supporting ease of reproducibility for practitioners.
}

We model the bandit environment within our epistemic logic as follows. The set of
possible worlds is
\(
W=\{w_1,\ldots,w_H\},
\)
where each world \(w_h\) corresponds to a full vector of arm means
\(\mu^{(h)} \in \mathbb{R}^A\). At any given time, exactly one world is the
true environment condition.

We introduce atomic propositions evaluated world-by-world:

\begin{itemize}
  \item \textbf{Per-arm mean atoms:}
  \[
    p_{a,h}: \;\mu_a = \mu^{(h)}_a
    \qquad (a=1,\dots,A;\;h=1,\dots,H),
  \]
  \item \textbf{Actionable atoms:}
  \[
    b_a:\; a \;\text{is an optimal arm},
    \quad\text{ i.e., }\quad
    a \in \arg\max_{a'} \mu_{a'}.
  \]
\end{itemize}

The \emph{world-identity} formula
\[
\varphi_h \;=\; \bigwedge_{a=1}^{A}\, p_{a,h}
\]
characterizes world \(w_h\). Its actionable consequence is the formula
\[
\psi_h \;=\; \bigvee_{a\in\arg\max_j \mu^{(h)}_j} b_a,
\]
which states that the selected arm is optimal in world \(w_h\).
Log-likelihood ratios (LLRs) provide the cumulative pairwise evidence scores used
to discriminate among \(\{\varphi_h\}_{h=1}^H\).

\addb{In our simulations, the environment remains in a single world
$w_{h}$ for a contiguous time interval, and switches deterministically
to another world only at predefined disturbance (stressor) times. We instantiate the abstract optimal-action proposition for agent \(i\) as
\[
(\mathcal{M}_t,\run,t)\models \pi_i^{\mathrm{opt}}
\quad\Longleftrightarrow\quad
a_{i,t} \in \arg\max_a \mu_a(t),
\]
i.e., agent \(i\)’s chosen arm is optimal under the true world at time \(t\). Here, the external policy \(\pi_{i,t}^{\mathrm{ext}}\) refers to the behavior that
directly determines the arm selected by agent \(i\) at time \(t\), so that
\(a_{i,t} = \pi_{i,t}^{\mathrm{ext}}\).
Thus, \(\pi_i^{\mathrm{opt}}\) holds exactly when the agent’s external policy \(\pi_{i,t}^{\mathrm{ext}}\) selects an arm with the highest mean reward under the current environment condition. In this setting, the set $\Pi_{i,t}^{\mathrm{opt}}$ consists of all external policies that would be optimal under the current environment at time $t$. Therefore, the expected reward of agent $i$, \(J_i\), is therefore determined by its external
policy together with the time-varying sequence of worlds
\(\bigl(w_{h(t)}\bigr)_{t\ge 0}\). 
} Finally, performance is assessed through standard metrics: fraction of agents selecting the optimal action, total and cumulative rewards, and cumulative regret over time. These shared settings provide a uniform benchmark for contrasting the independent, cooperative, and epistemic (Kripke-based) learning schemes in terms of adaptability, recovery speed, and overall performance under abrupt environmental shifts. 

% In your preamble:
% \usepackage{graphicx}
% \usepackage{subcaption}
% \usepackage{caption} % (already loaded by subcaption, optional)

\begin{figure*}[t]
  \centering

  % Row 1
  \subfloat[Total reward over time.\label{fig:sigma05:total_reward}]{
    \includegraphics[width=0.48\textwidth]{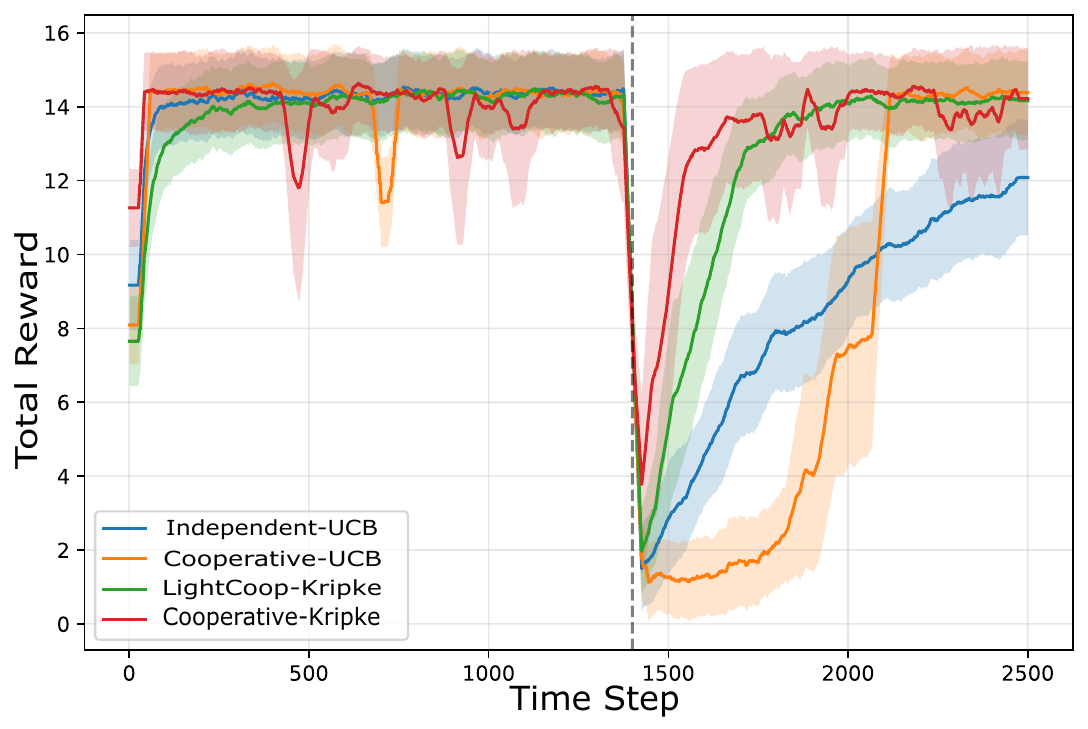}
  }\hfill
  \subfloat[Cumulative total reward.\label{fig:sigma05:cum_reward}]{
    \includegraphics[width=0.48\textwidth]{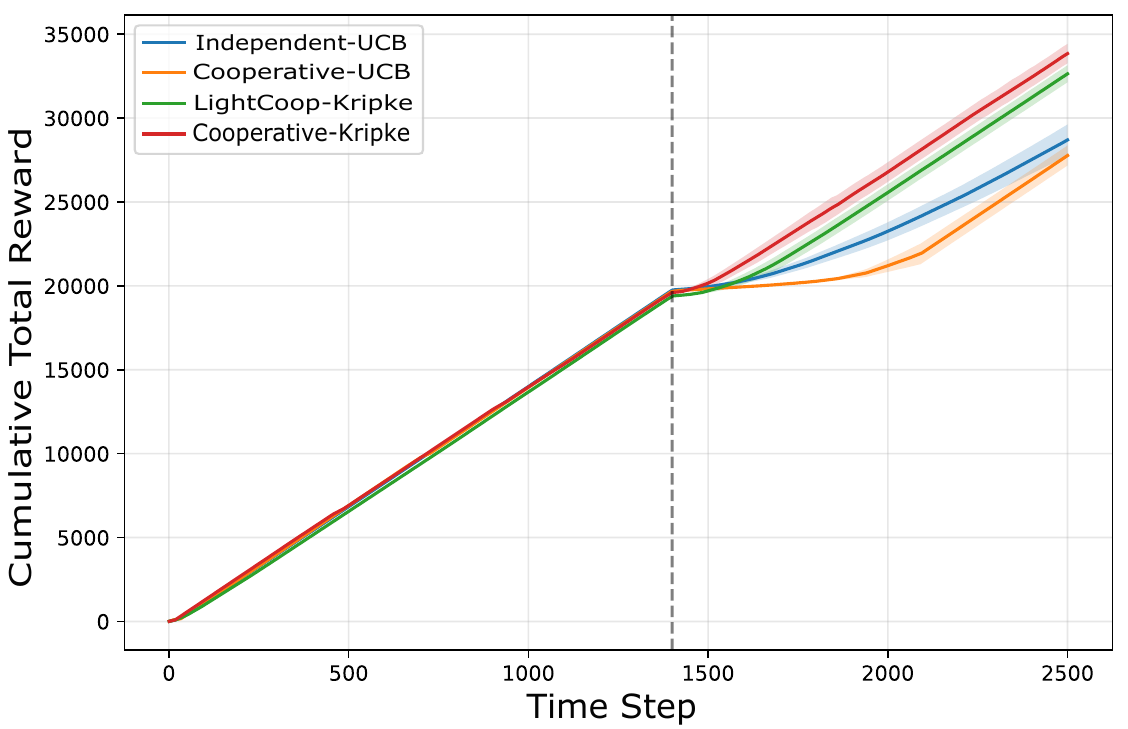}
  }

  \vspace{0.6em}

  % Row 2
  \subfloat[Cumulative regret over time.\label{fig:sigma05:cum_regret}]{
    \includegraphics[width=0.48\textwidth]{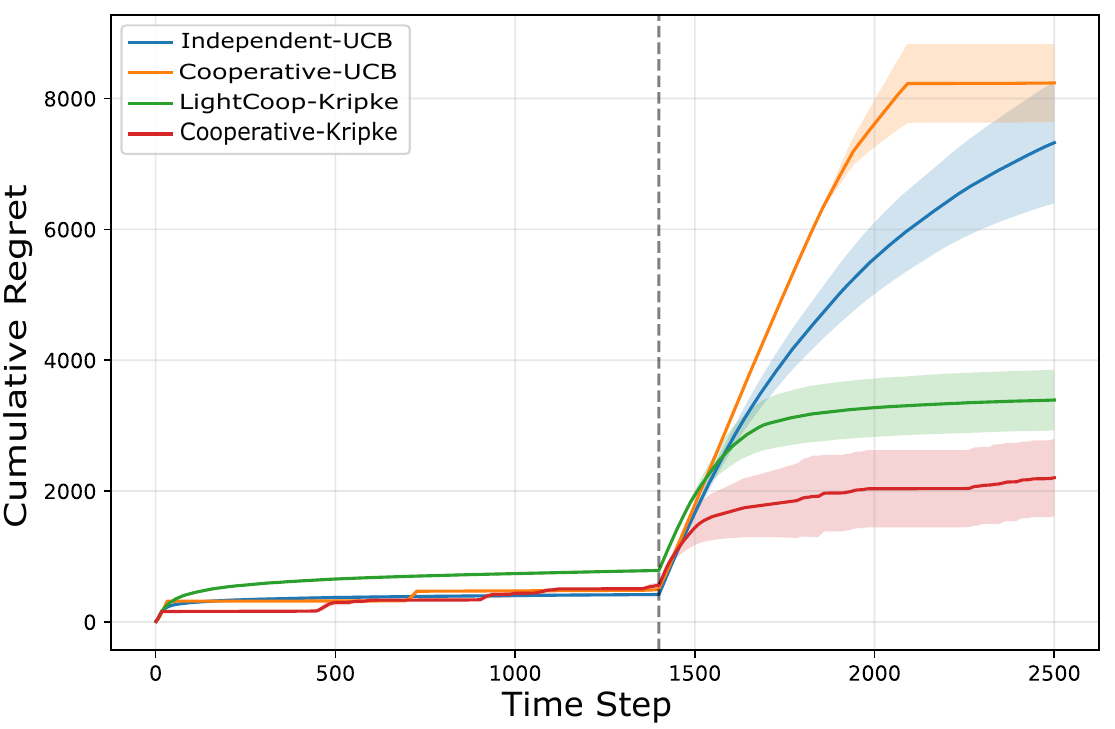}
  }\hfill
  \subfloat[Fraction of agents selecting the optimal action.\label{fig:sigma05:fract_optimal}]{
    \includegraphics[width=0.48\textwidth]{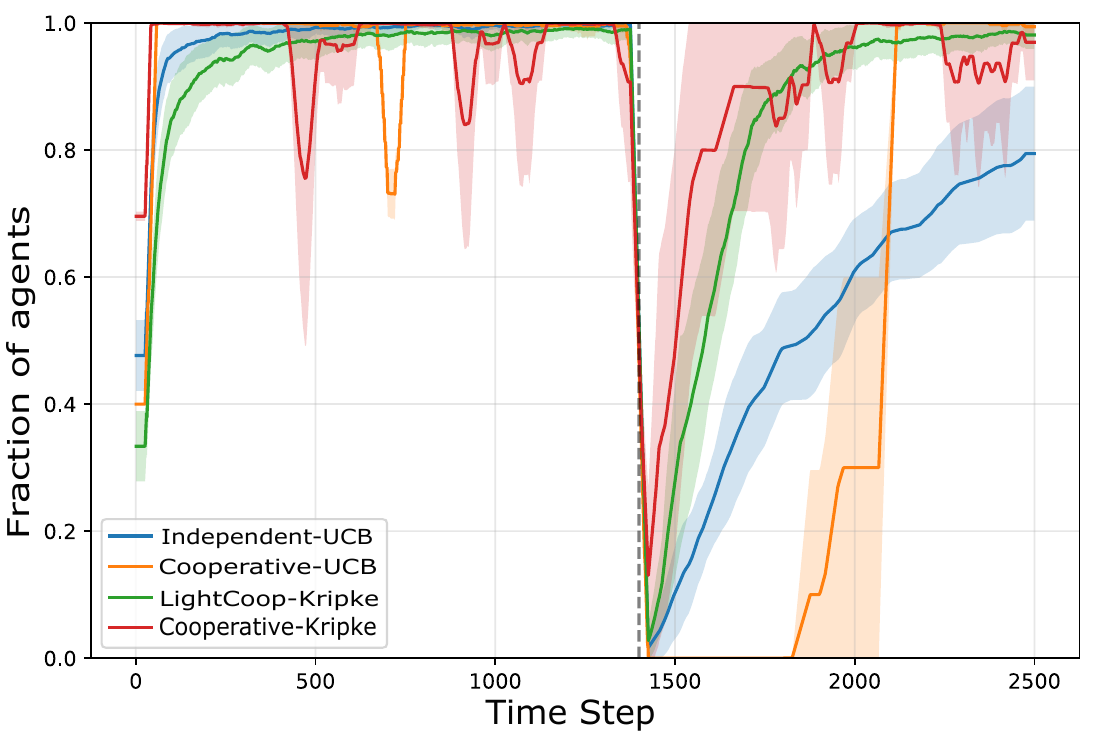}
  }

  \caption{Performance under noise level ($\sigma=0.5$). Each panel reports the mean across 10 trials with 95\% confidence intervals; the dashed vertical line marks the change-point at $t=1400$. For discounted-UCB baselines, the forgetting rate is set to $0.998$.}
  \label{fig:sigma05:main}
\end{figure*}

\begin{figure*}[t]
  \centering

  % Row 1
  \subfloat[Total reward over time.\label{fig:sigma100:total_reward}]{
    \includegraphics[width=0.48\textwidth]{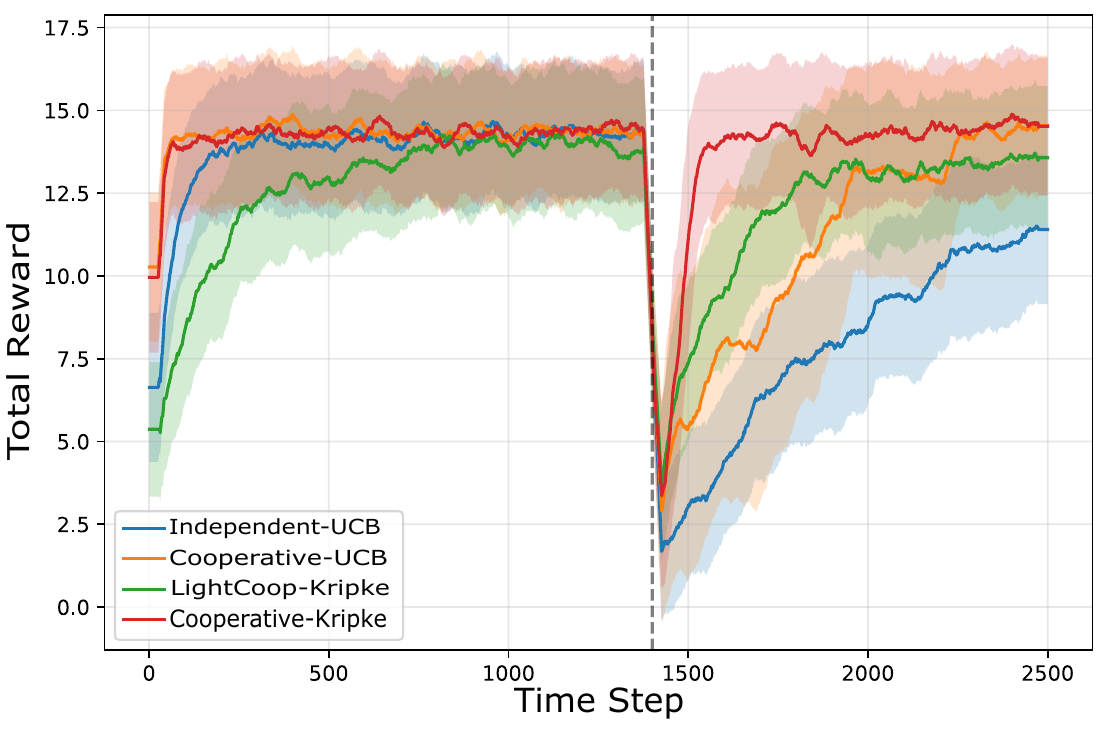}
  }\hfill
  \subfloat[Cumulative total reward.\label{fig:sigma100:cum_reward}]{
    \includegraphics[width=0.48\textwidth]{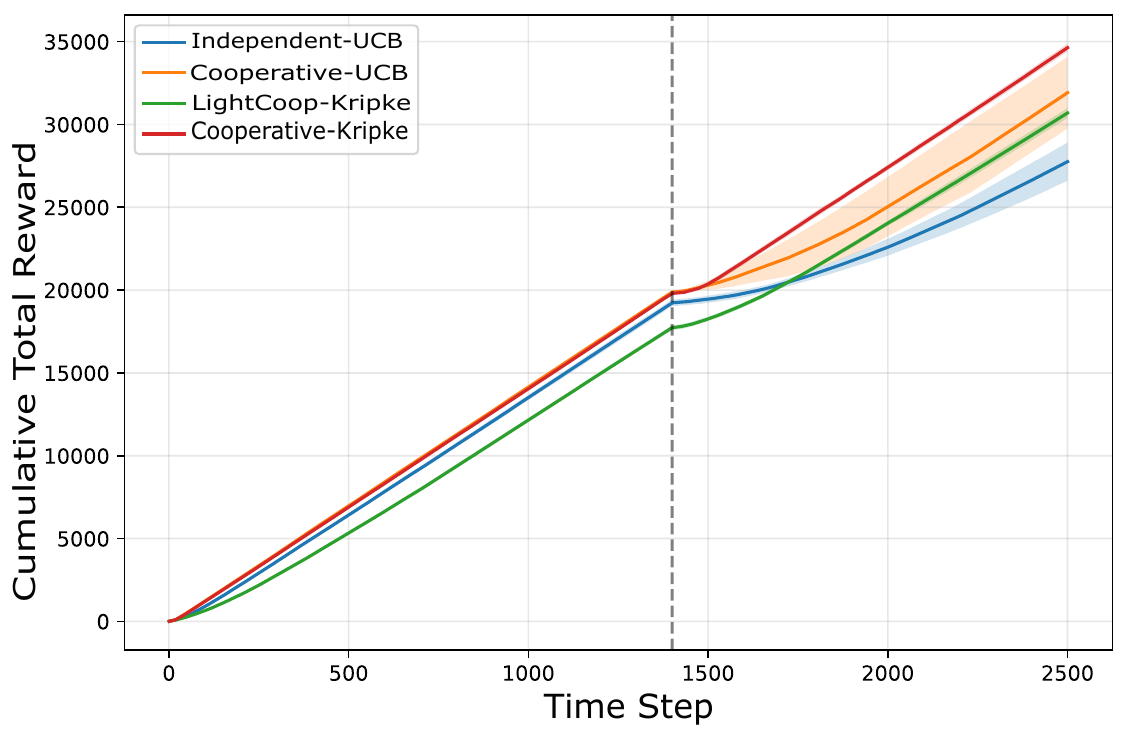}
  }

  \vspace{0.6em}

  % Row 2
  \subfloat[Cumulative regret over time.\label{fig:sigma100:cum_regret}]{
    \includegraphics[width=0.48\textwidth]{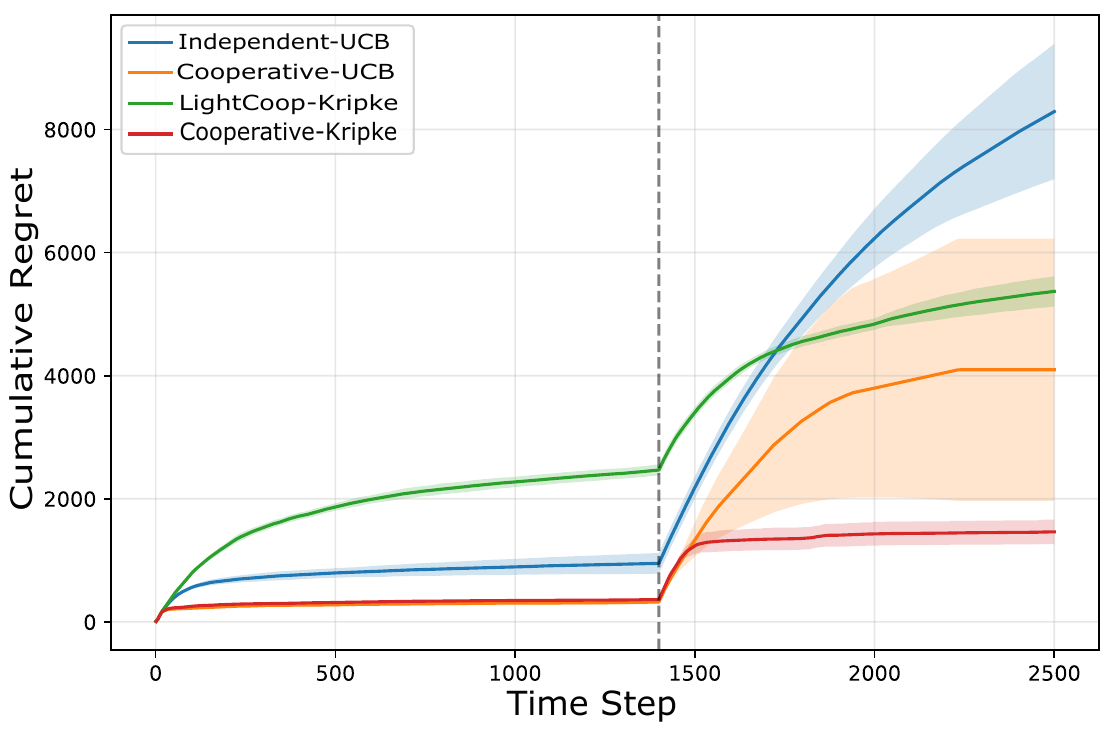}
  }\hfill
  \subfloat[Fraction of agents selecting the optimal action.\label{fig:sigma100:fract_optimal}]{
    \includegraphics[width=0.48\textwidth]{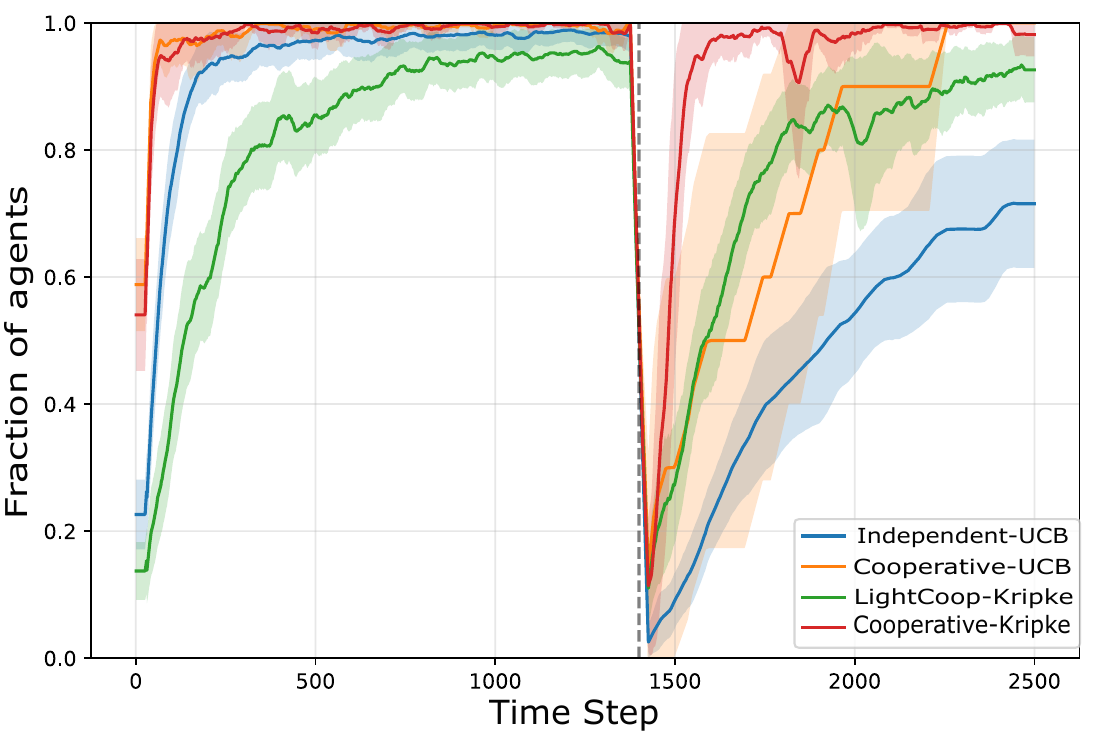}
  }

  \caption{Performance under noise level ($\sigma=1.0$). Each panel reports the mean across 10 trials with 95\% confidence intervals; the dashed vertical line marks the change-point at $t=1400$. For discounted-UCB baselines, the forgetting rate is set to $0.998$.}
  \label{fig:sigma10:main}
\end{figure*}

\begin{figure*}[t]
  \centering

  % Row 1
  \subfloat[Total reward over time.\label{fig:add:total_reward}]{
    \includegraphics[width=0.48\textwidth]{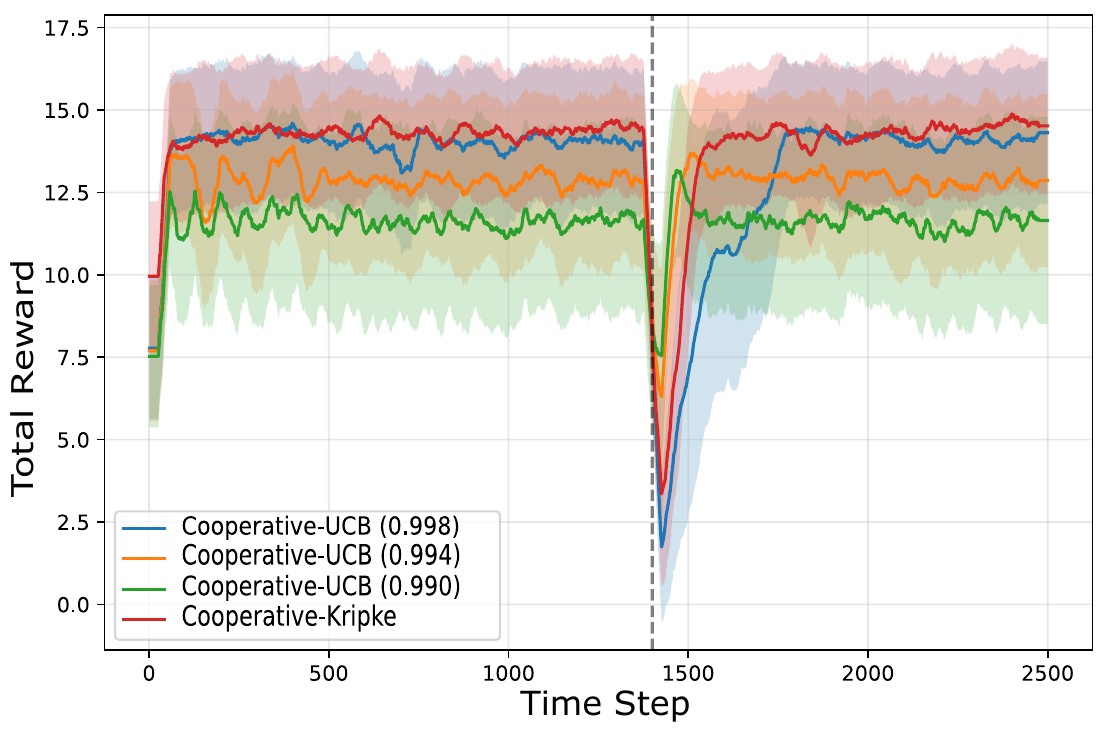}
  }\hfill
  \subfloat[Cumulative total reward.\label{fig:add:cum_reward}]{
    \includegraphics[width=0.48\textwidth]{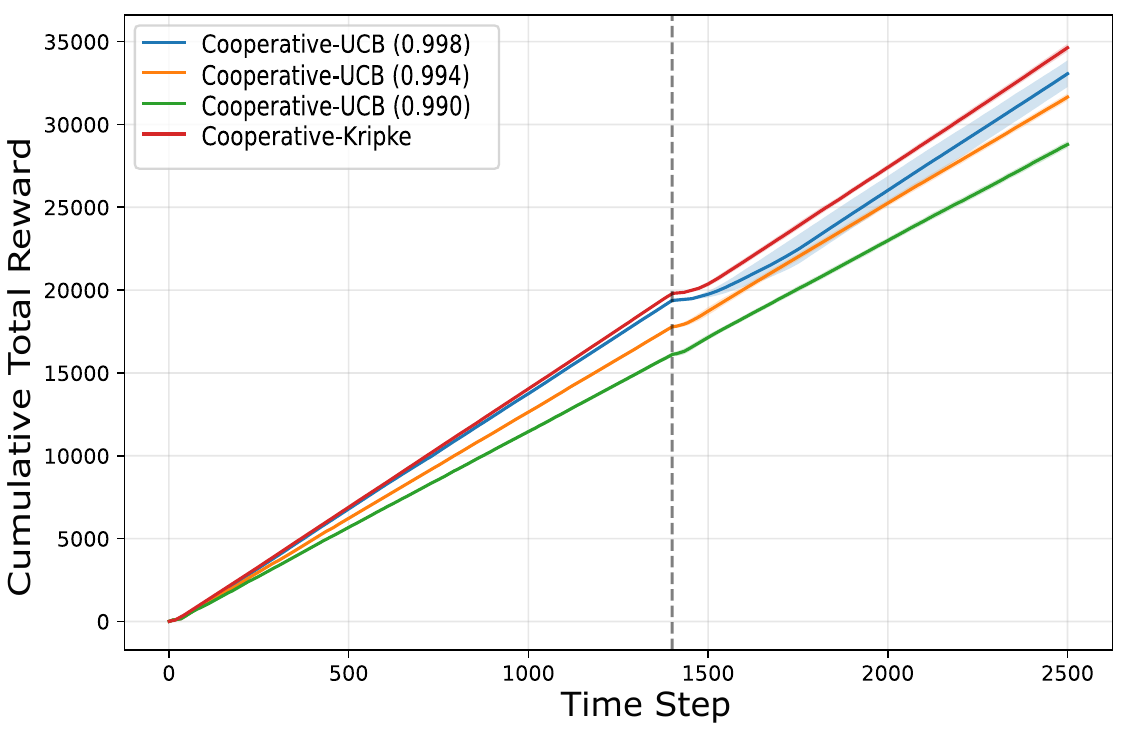}
  }

  \vspace{0.6em}

  % Row 2
  \subfloat[Cumulative regret over time.\label{fig:add:cum_regret}]{
    \includegraphics[width=0.48\textwidth]{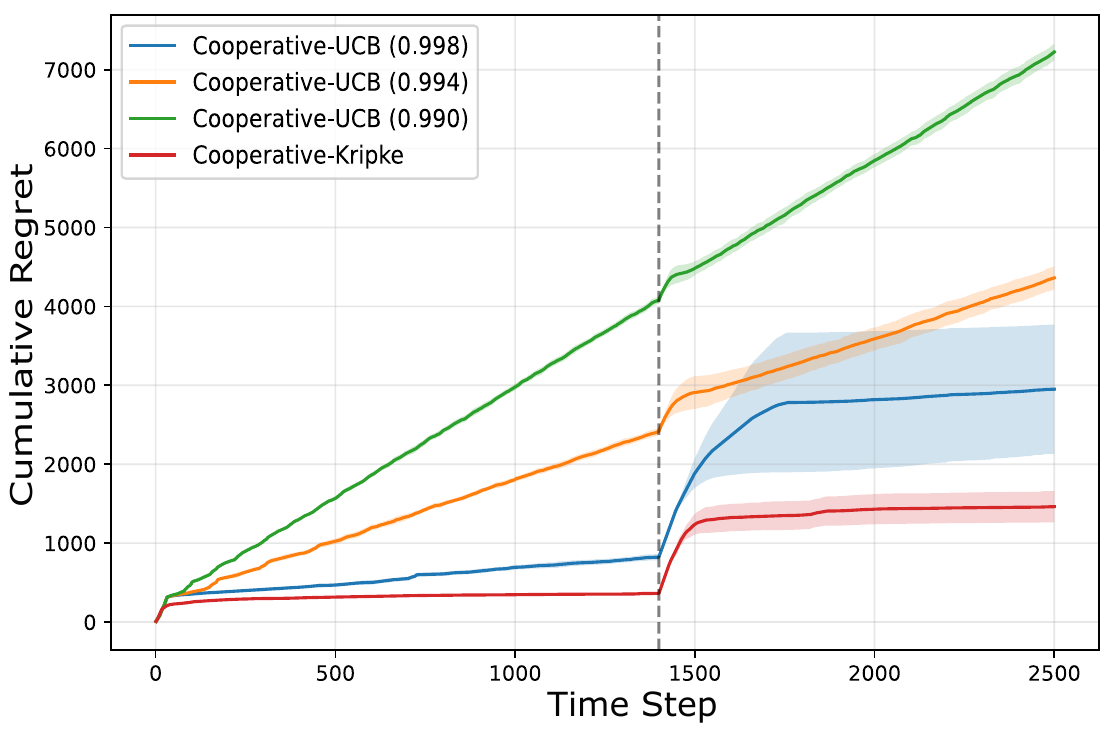}
  }\hfill
  \subfloat[Fraction of agents selecting the optimal action.\label{fig:add:fract_optimal}]{
    \includegraphics[width=0.48\textwidth]{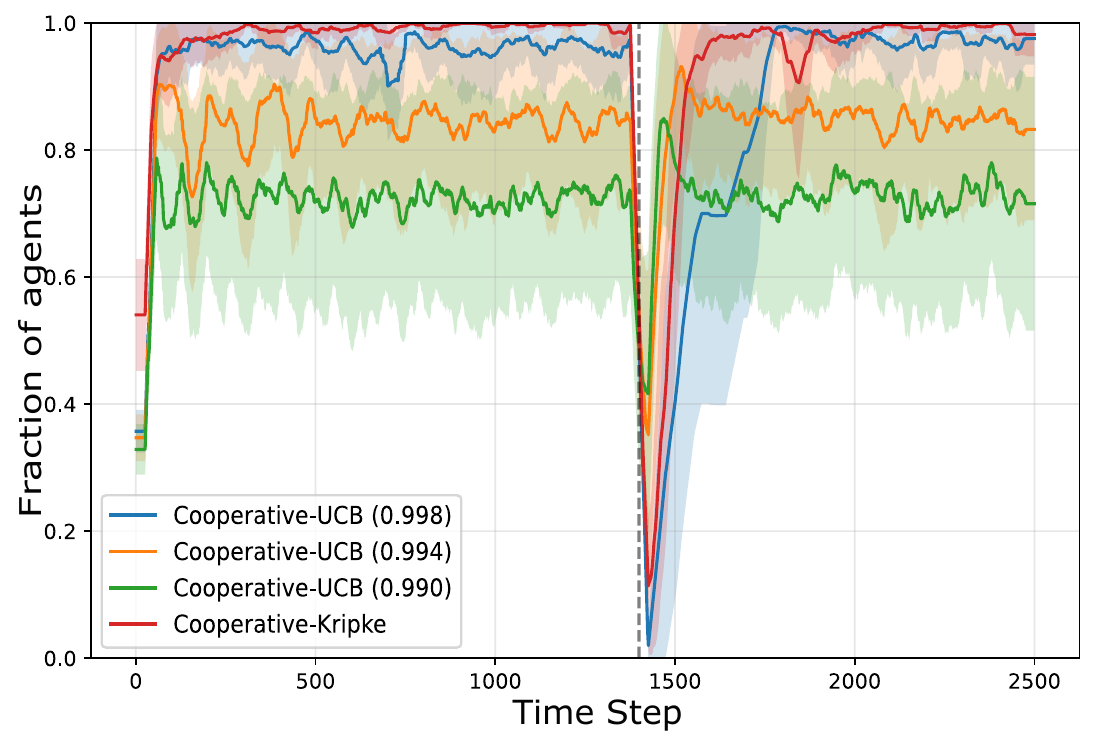}
  }

  \caption{Performance under noise level ($\sigma=1.0$). Each panel reports the mean across 10 trials with 95\% confidence intervals; the dashed vertical line marks the change-point at $t=1400$.}
  \label{fig:sigma10ADD:main}
\end{figure*}

\textbf{Moderate noise level}. Figure~\ref{fig:sigma05:main} summarizes the performance of all learning modes; Independent Discounted-UCB, Cooperative Discounted-UCB, and the proposed Independent-Kripke and Cooperative-Kripke frameworks when $\sigma = 0.5$. For simplicity, the term \textit{discounted} is omitted from the figure legends, although it applies to Independent and Cooperative UCB baselines. For this setup, the parameters for both Kripke learning mode is $\varepsilon_1 = 1.4$, $e_{\text{th}} = 12$, $L=30$, $\etaE =13$, $\maxRE = 550$, $\minDE = 600$, $\maxRA = 174$, and $\minDA = 436$. In cooperative baselines, agents perform one consensus update per time step. 
Before the stressor occurs $(t < 1400)$, all methods rapidly converge to the optimal action, achieving nearly identical cumulative total rewards. The differences become more evident in the cumulative regret curves, where LightCoop-Kripke exhibits the highest regret. This is primarily because each agent explores independently to resolve epistemic contradictions, rather than exploiting shared information as in the cooperative modes or committing more to exploitation as in Independent-UCB, resulting in a minor increase in cumulative regret. Looking at total reward over time plot in~\ref{fig:sigma05:total_reward}, we can see that the cooperative-UCB and cooperative-Kripke agents exhibit faster convergence in the very beginning, owing to shared statistics through consensus. Nevertheless, both Cooperative-UCB and Cooperative-Kripke exhibit minor transient dips in reward before the stressor. In Cooperative-Kripke, these deviations correspond to occasional false detections of regime change triggered by the residual-based gate. Because agents share information through consensus, false alarms can become temporally correlated, leading multiple agents to enter evidence-gathering simultaneously and momentarily reduce total reward. \addb{Although these pre-stressor deviations are visually noticeable, their magnitude and duration are limited. In Cooperative-Kripke, each false-alarm event produces an instantaneous reward drop of approximately $10-25\%$ and lasts for about $40-50$ time steps. As evidenced in the cumulative reward (Figure~\ref{fig:sigma05:cum_reward}) and cumulative regret (Figure~\ref{fig:sigma05:cum_regret}), these short dips have negligible long-term effect. That is, Cooperative-Kripke accumulates about the same pre-change total reward as the other methods and maintains the lowest cumulative regret. Thus, the transient dips reflect brief exploratory corrections rather than meaningful performance degradation.} In contrast, LightCoop-Kripke experiences no visible pre-change degradation since its detections occur asynchronously and do not synchronize the agents’ probing actions. A similar dip in Cooperative-UCB arises from consensus-induced bias: limited communication rounds cause agents’ local estimates to temporarily align toward a suboptimal arm, resulting in a short-lived collective exploration phase. Independent-UCB remains stable in this period because exploration errors across agents are uncorrelated and average out in aggregate performance. Immediately after the stressor at $t=1400$, all learning modes experience a dip in  total reward, fraction of agents selecting the optimal action, and higher levels of cumulative regret compared to levels before the occurrence of stressor. Specifically, both independent-UCB and cooperative-UCB baselines experience a pronounced degradation in all metrics as forgetting only is not sufficient to enforce timely re-exploration to identify the new optimal action. Nevertheless, Cooperative-UCB baseline experiences a deeper but shorter dip in reward and consequently higher cumulative regret and lower cumulative total reward compared to Independent-UCB. This occurs because, immediately after the change, the shared consensus variables cause all agents to temporarily rely on outdated discounted global estimates, which delays independent re-exploration. Once a few agents begin to update toward the new optimal action, the consensus mechanism rapidly propagates this information through the network, leading to a sharp collective recovery. In contrast, Independent-UCB agents respond asynchronously: each agent explores on its own and gradually discovers the new optimum at different times. This de-synchronization produces a smoother aggregate recovery with smaller instantaneous loss but slower overall adaptation.

%The independent baseline adapts more slowly because each agent learns in isolation, whereas cooperative UCB benefits from information exchange and recovers faster. 

In contrast, the Kripke-based learners demonstrate significantly faster epistemic and action recovery. The LightCoop-Kripke (independent learning with epistemic communication) quickly detects the regime shift and begins realigning, but requires additional time for all agents to stabilize on the new optimum since action learning remains independent. The fully Cooperative-Kripke mode, which integrates both epistemic and cooperative action layers, achieves the shortest recovery window and the smallest transient regret after the change. This mode rapidly rebuilds global consensus on the new hypothesis and synchronizes agent actions towards optimal actions. As shown in subfigures~\ref{fig:sigma05:cum_reward} and~\ref{fig:sigma05:cum_regret}, its cumulative reward curve closely tracks the pre-change trend with minimal slope reduction, and its cumulative regret saturates at the lowest level among all methods. Subfigure~\ref{fig:sigma05:fract_optimal} further confirms faster re-synchronization of the LightCoop-Kripke and Cooperative-Kripke agents, while independent and cooperative baselines exhibit slower transitions. Overall, the results validate that epistemic coordination significantly improves both recovery speed and stability after abrupt environmental shifts. Notably, even with limited communication (used only during epistemic recovery) the LightCoop-Kripke agents outperform baselines that rely on continuous communication for exchanging full statistics. It also performs slightly lower than the fully Cooperative-Kripke framework in which agents collaboratively recover while also sharing statistics. This means that  LightCoop-Kripke offers notable performance while being communication-efficient (see Table~\ref{table:scale}).

\textbf{High noise level}. Figure~\ref{fig:sigma10:main} summarizes the performance of all learning modes; Independent Discounted-UCB, Cooperative Discounted-UCB, and the proposed Independent-Kripke and Cooperative-Kripke frameworks when $\sigma = 1.0$. For this setup, the parameters for both Kripke learning mode is $\varepsilon_1 = 1.6$, $e_{\text{th}} = 13$, $L=30$, and $\etaE =10$, $\maxRE = 550$, $\minDE = 600$, $\maxRA = 174$, and $\minDA = 436$. At the higher noise level, the general performance ordering among the methods remains consistent with the moderate-noise case, but several important differences emerge. Notably, both Cooperative-Kripke and Cooperative-UCB no longer exhibit the small pre-stressor dips observed at lower variance. The increased noise effectively smooths out small residual fluctuations, reducing the likelihood of false detections and consensus-induced oscillations. After the stressor, the Independent-UCB shows markedly higher regret and lower cumulative reward than Cooperative-UCB, indicating that communication becomes more beneficial as uncertainty increases. In this regime, cooperation helps agents average out noisy reward observations and accelerates re-estimation of the new means, leading to a faster collective recovery. The Cooperative-Kripke again achieves the fastest and most stable recovery, while the LightCoop-Kripke maintains intermediate performance that is slightly below the fully cooperative mode but still outperforming both UCB baselines. This again suggest that LightCoop-Kripke offers comparable performance while achieving communication efficiency (see Table~\ref{table:scale}). Overall, these results suggest that communication and epistemic coordination provide greater advantage in high-variance environments, where shared information mitigates noise and enhances resilience to abrupt regime shifts.

\textbf{Effect of forgetting rate}: Figure~\ref{fig:sigma10ADD:main}  compares the Cooperative-Kripke framework with the Cooperative Discounted-UCB baselines under different forgetting rates. All other simulation parameters are identical to those used in Figure~\ref{fig:sigma10:main}. Across all metrics, Cooperative-Kripke consistently achieves the highest cumulative reward and the lowest cumulative regret, demonstrating both rapid and complete recovery after the change. In contrast, the performance of the cooperative UCB baselines depends strongly on the forgetting rate. Larger forgetting rates (e.g., $0.998$) lead to slower adaptation because past samples dominate the estimates, causing delayed response to the new reward means. Smaller forgetting rates (e.g., $0.990$) accelerate recovery by discounting old information more aggressively but at the cost of higher post-recovery variance and a lower steady-state reward level, as agents over-weight recent noisy observations. Thus, increasing the forgetting rate improves responsiveness but reduces asymptotic accuracy. The Cooperative-Kripke framework avoids this trade-off by detecting regime shifts explicitly and resetting beliefs only when epistemic evidence supports a change, enabling both fast recovery and convergence to the new optimum without sacrificing steady-state performance.

\begin{figure*}[t]
  \centering
  % Row 1
  \subfloat[Epistemic Recovery Time vs $\etaE$.\label{fig:epi_rec}]{
    \includegraphics[width=0.48\textwidth]{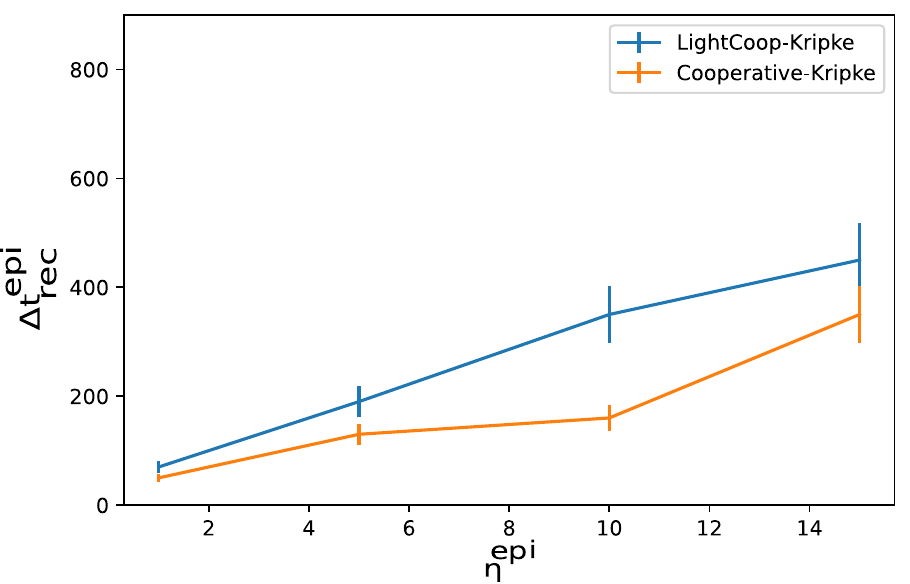}
  }\hfill
  \subfloat[Epistemic Durability Time vs $\etaE$.\label{fig:epi_dur}]{
    \includegraphics[width=0.48\textwidth]{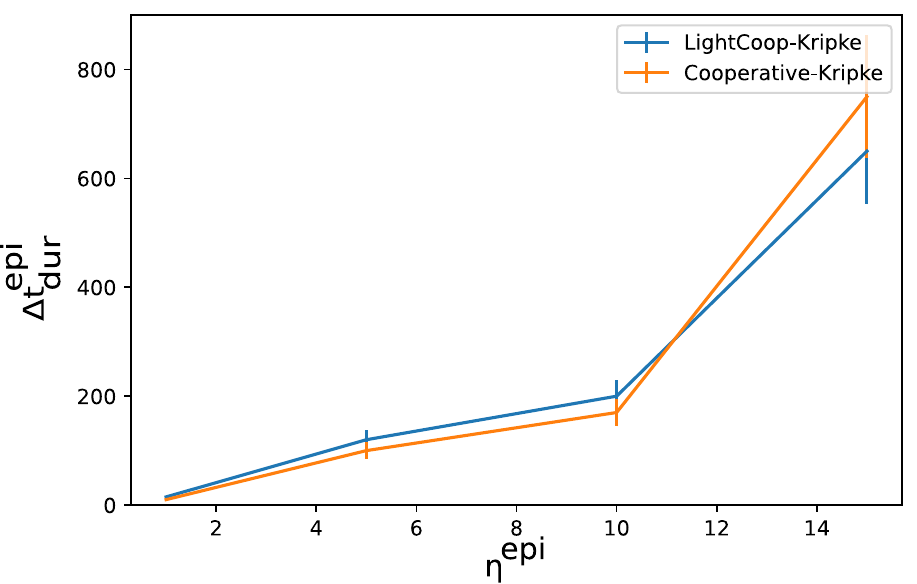}
  }
  \caption{Epistemic recovery and durability time vs different levels of $\etaE$ under noise level ($\sigma=1.0$). Each panel reports the mean across 10 trials with 95\% confidence intervals.}
  \label{fig:epi_tradeoff_main}
\end{figure*}

\textbf{Recoverability-Durability Tradeoff}: Figure~\ref{fig:epi_tradeoff_main} illustrates the intrinsic trade-off between epistemic recovery and durability as the evidence threshold $\etaE$ increases. A larger threshold imposes stricter evidence requirements before committing to a new world hypothesis, leading to a longer evidence accumulation phase and hence larger $\Delta\trecE$. At the same time, the higher threshold also enhances confidence in the committed hypothesis, yielding greater durability and longer $\Delta\tdurE$ after recovery. This relationship reflects the theoretical bounds introduced in Section~\ref{subsec:achieve_epi_rec}, where $\etaE$ effectively tunes the balance between the maximum admissible recovery time $\maxRE$ and the minimum required durability time $\minDE$. Across all thresholds, LightCoop-Kripke exhibits consistently longer recovery times than Cooperative-Kripke, as each agent must gather sufficient local evidence before reaching consensus on a change. The cooperative mode benefits from pooled information and thus attains faster epistemic alignment. Conversely, LightCoop-Kripke achieves slightly higher durability once recovered, since its individual decisions are less susceptible to occasional network-wide perturbations. At high $\etaE$, both methods remain stable until the end of the simulation horizon; consequently, the apparent reversal in durability ordering simply reflects that the method recovering earlier maintains its new belief for a longer remaining period. Overall, the results confirm the predicted recoverability–durability trade-off and demonstrate that cooperative Kripke provides a favorable operating point; achieving fast and stable epistemic adaptation.

Furthermore, in our experiments, the action recovery time $\Delta \trecA$ is few time steps after epistemic recovery (i.e., the first decision immediately following $\trecE$). This behavior arises because agents in both Kripke modes reinitialize their UCB learners using the committed epistemic states, causing the UCB indices of the optimal action to dominate immediately. We also observe that action durability closely tracks epistemic durability, differing only by a small offset corresponding to the short action recovery phase. Once the group’s epistemic states remain aligned with the correct world, the UCB bonuses shrink rapidly and the optimal action stays strictly preferred. Departures from the optimal action occur only if the epistemic layer re-enters evidence mode due to a false alarm or if a new stressor is introduced. In detail, for all our experiments, we instantiate $\Pi_i^{\mathrm{ext}}$ with the action set and realize best-policy identification via Gaussian-UCB. The confidence level $\etaA$ is implicit in the UCB exploration bonus and the reinitialization at $\trecE$; hence, the statistical separation condition is satisfied almost immediately.
the separation condition is satisfied almost immediately. Moreover, our implementation uses online bandit updates where the reinitialization step at commit is equivalent to performing a small number of virtual rollouts per action,

Across all regimes, epistemic coordination is the key driver of resilience: our proposed epistemic policy yields shorter recovery windows and higher post-recovery stability than fading-memory baselines. The cooperative-Kripke mode achieves the fastest recovery and lowest regret by pooling both epistemic knwoledge and action statistics. Meanwhile, LightCoop-Kripke (the event-triggered variant) still outperforms independent/cooperative-UCB baselines while using far less communication. It also performs very close to the Cooperative-Kripke while achieving communication efficiency. We also show that higher noise amplifies the value of cooperation and that  forgetting-rate tuning in discounted-UCB faces a speed–stability trade-off that the epistemic layer largely avoids by revising only when supported by evidence. 

\begin{table*}[h]
\centering
%\color{blue}
\caption{\addb{Performance comparison under Ring vs. Small-world topology. Total recovery time is
measured as the total resilience delay
$\Delta\trecE + \Delta\trecA$, i.e., the first time after the disturbance at
which the average reward returns to its steady-state maximum level (within a
small tolerance).}}
\label{table:scale}
\begin{tabular}{c c c c c c c c}
\toprule
\addb{\makecell{Number \\ of agents}} & \addb{Algorithm} & \multicolumn{3}{c}{\addb{Ring Topology}} & \multicolumn{3}{c}{\addb{Small-world Topology}} \\
\cmidrule(lr){3-5} \cmidrule(lr){6-8}
 & & \makecell{Total \\ recovery \\ time} & \makecell{Avg reward \\ after \\ recovery} & \makecell{Total \\ communication\\ (per-agent \\ per-step)} & \makecell{Total \\ recovery \\ time} & \makecell{Avg reward \\ after \\ recovery} & \makecell{Total \\ communication \\(per-agent \\ per step)} \\
\midrule
\multirow{5}{*}{$10$} 
 & \addb{Independent-UCB}              & $1100$ & $0.95$ &  $-$ & $1100$ & $0.95$ & $-$ \\
 & \addb{Cooperative-UCB}               & $600$  & $\boldsymbol{1.20}$ & \makecell{$50,000(2)$} & $500$  & $1.18$ & $100,000(4)$ \\
 & \addb{LightCoop-Kripke}      & $400$  & $1.19$ & $\boldsymbol{20 (8 \times 10^{-4})}$ & $360$  & $1.19$ & $\boldsymbol{24(9.6 \times 10^{-4})}$ \\
 & \addb{LightCoop-Kripke-fast} & $350$  & $1.19$ & $\boldsymbol{20(8 \times 10^{-4})}$ & $350$  & $1.19$ & $\boldsymbol{24(9.6 \times 10^{-4})}$ \\
 & \addb{Cooperative-Kripke}            & $\boldsymbol{160}$  & $\boldsymbol{1.22}$ & \makecell{$50,010(\sim 2)$} & $\boldsymbol{150}$  & $\boldsymbol{1.20}$ & $100,020(\sim 4)$ \\
\addlinespace[3pt]
\midrule[\heavyrulewidth]
\multirow{5}{*}{$150$} 
 & \addb{Independent-UCB}              & $1100$ & $0.95$ & $-$ & $1100$ & $0.95$& $-$ \\
 & \addb{Cooperative-UCB}              & $650$  & $\boldsymbol{1.20}$ & $750,000(30)$ & $500$  & $1.18$ & $1,500,000(60)$ \\
 & \addb{LightCoop-Kripke}     & $750$  & $1.00$ & $\boldsymbol{15,750(0.63)}$ & $600$  & $1.10$ & $\boldsymbol{4,860(0.19)}$ \\
 & \addb{LightCoop-Kripke-fast} & $400$  & $1.15$ & $\boldsymbol{15,750(0.63)}$ & $400$  & $1.18$ & $\boldsymbol{4,860(0.19)}$\\
 & \addb{Cooperative-Kripke}           & $\boldsymbol{150}$  & $\boldsymbol{1.20}$ & $765,750(30.6)$ & $\boldsymbol{120}$  & $\boldsymbol{1.20}$ & $1,505,100(60.2)$ \\
\addlinespace[3pt]
\midrule[\heavyrulewidth]
\multirow{5}{*}{$300$} 
 & \addb{Independent-UCB}              & $1100$ & $0.95$ & $-$ & $1100$ & $0.95$ & $-$\\
 & \addb{Cooperative-UCB}               & $650$  & $\boldsymbol{1.20}$ & $1,500,000(60)$ & $500$  & $\boldsymbol{1.18}$ & $3,000,000(120)$\\
 & \addb{LightCoop-Kripke}      & $950$  & $1.00$ & $\boldsymbol{54,000(2.16)}$& $800$  & $1.10$ & $\boldsymbol{13,086(0.52)}$\\
 & \addb{LightCoop-Kripke-fast} & $410$  & $1.15$ & $\boldsymbol{54,000(2.16)}$ & $400$  & $1.15$ &$\boldsymbol{13,086(0.52)}$ \\
 & \addb{Cooperative-Kripke}            & $\boldsymbol{150}$  & $\boldsymbol{1.20}$ & $1,581,000(63.24)$ & $\boldsymbol{130}$  & $1.17$ &$3,017,280(120.7)$ \\
\bottomrule
\end{tabular}
\end{table*}

\addb{\textbf{Scalability}: To evaluate scalability, we repeated the same setup used in Figure~\ref{fig:sigma10:main} under increasing network sizes
$|\agentSet|\in\{10,150,300\}$, while maintaining a ring communication
topology. Our analysis in Section~\ref{subsec:achieve_epi_rec} ( see Equation~\eqref{eq:boundEp}) indicates the recovery delay
$\Delta\trecE$ of LightCoop-Kripke increases with the diameter of the communication graph $\lcomm$. This means that in ring topology, the recovery time of LightCoop-Kripke $\Delta\trecE$ grows noticeably with the number of agents. This behavior is a direct consequence of the ring’s
hop diameter $\ell_{\mathrm{comm}}=\Theta(|\agentSet|)$: epistemic commits
must propagate sequentially around the network, slowing mutual knowledge
formation as the graph expands. In contrast, Cooperative-Kripke remains
largely unaffected by scaling, because its continuous consensus mixing
maintains synchronized beliefs and prevents long commit-propagation
phases.}

\addb{To mitigate the topology-induced delay, we consider two improvements.
First, we replace the ring with a small-world network, which preserves
sparsity but has a diameter that grows only logarithmically in
$|\agentSet|$. Such topologies present more realistic neighbor-connectivity patterns than a ring topology. Second, we test
a LightCoop-Kripke-fast variant that allows agents to commit immediately
once their local LLR surpasses $\etaE$, while broadcasting without waiting for
full horizon confirmation; if a higher-evidence hypothesis later arrives,
the agent revises accordingly. This reduces reliance on $\ell_{\mathrm{comm}}$
for timely epistemic alignment.}

\addb{Table~\ref{table:scale} examines scalability of the five methods in terms of total recovery time $\Delta\trecE +\Delta\trecA$, average reward after recovery (i.e., steady-state maximum level within a small tolerance), and total communication cost (with normalized per-agent per-step cost in parentheses). Total communication cost measures the overall number of messages transmitted over graph edges across all agents and all rounds, including both consensus-updates (if any) and broadcast floods used for epistemic commits (if any). This captures the full communication overhead of the algorithm. Independent-UCB incurs no communication overhead but consistently exhibits the slowest recovery and lowest steady-state reward, indicating poor resilience in dynamic environments. Cooperative-UCB achieves fast recovery and high average reward after recovery, with recovery time stays flat as the network grows; however, this comes at the expense of very high communication overhead, which increases further in small-world topologies due to higher node degree. That is, in our setup, each agent in the ring has degree $2$ (two neighbors), whereas in the small-world graph it has degree $4$ on average. Because consensus messages are exchanged with every neighbor at each step, this doubling of node degree roughly doubles the per-step communication load, which explains why the total communication cost in the small-world topology is about twice that of the ring. Cooperative-Kripke requires slightly more communication than Cooperative-UCB because of epistemic evidence sharing but achieves substantially faster recovery that stays flat as well with growing number of agents and high average reward after recovery. When communication efficiency is a priority, LightCoop-Kripke offers a favorable trade-off, achieving significant reductions in communication cost even at $\agentsize = 300$, but its total recovery time grows with number of agents under ring topology (since $\lcomm$ grows as well. LightCoop-Kripke-fast eliminates the diameter-waiting delay by allowing agents to commit immediately once sufficient local evidence is gathered, maintaining low communication cost while maintaining recovery time nearly flat with growing number of agents and achieving near high average reward after recovery. Note that LightCoop-Kripke and LightCoop-Kripke-fast incur essentially the same communication cost because they use the same Kripke evidence layer: agents only transmit hypothesis-broadcast messages when their local log-likelihood margin crosses the epistemic threshold, and each such broadcast is forwarded once along each edge. The ``fast'' variant only changes when agents decide to commit (they no longer wait $\lcomm$ steps before updating their world model), but it does not change how many broadcast messages are sent, so the total communication load remains effectively identical. Because small-world topologies dramatically reduce the diameter of the network, information about new hypotheses propagates to all agents much more rapidly. As a result, agents are less likely to initiate redundant broadcasts since they learn of the change from others faster resulting in substantially reduced communication overhead compared to ring topologies. These results demonstrate that the observed slowdown of LightCoop-Kripke on a ring is not a fundamental limitation of epistemic coordination, but a topology-induced effect: under realistic small-world connectivity, both recovery time and communication cost remain well-controlled even at $\agentsize = 300$, with the FAST variant delivering scalable resilience with minimal communication overhead.} 

%\addr{As shown in Table~\ref{tab:scalability}, both modifications substantially improve scalability. LightCoop–Kripke-fast cuts recovery delays by more than $50\%$ on the ring, and small-world connectivity further stabilizes performance as $|\agentSet|$ grows. Cooperative-Kripke maintains fast recovery in all cases, confirming that continuous epistemic coordination yields graceful scaling to large networks.}

%when they commit to a new world where new hypotheses holds, they rebuild their UCB estimates with pseudo-observations derived from the committed hypothesis. 

\section{Conclusion and Future Work}
\label{sec:conclusion}
This work presented a novel formal framework for quantifying and designing multi-agent resilience in dynamic systems. We introduced a Kripke-based learning architecture that integrates epistemic resilience (the ability to detect, share, and recover accurate knowledge of world conditions) with action resilience (the ability to realign and sustain optimal behavior after disruptions). Our formulation links these dimensions through measurable recovery and durability times, providing a principled definition of resilience that bridges learning, communication, and control. Through extensive simulations, we demonstrated that the proposed Cooperative-Kripke and LightCoop-Kripke architectures outperform classical baselines such as independent and cooperative discounted-UCB. The results show that epistemic coordination significantly improves both recovery speed and post-change stability, even under limited communication. In particular, Cooperative-Kripke achieves rapid epistemic and action recovery while maintaining high cumulative rewards and minimal regret, whereas LightCoop-Kripke achieves slightly lower performance but with substantially reduced communication load. The experiments also verified the recoverability–durability trade-off, showing that stricter evidence thresholds prolong the evidence collection phase albeit longer stability after recovery.

A promising research avenue lies in formulating the resilience design problem as a Mixed-Integer Linear Program (MILP). Such a formulation can encode the logical constraints of epistemic transitions, the temporal budgets on recovery and durability ($\maxRE, \minDE, \maxRA, \minDA$), communication and computation budgets as linear and integer constraints, enabling exact computation of feasible resilience specifications or optimal parameter selections ($\etaE,\etaA$) under communication and sensing limits. Solving this MILP provides a systematic method for synthesizing resilient agent configurations and verifying resilience guarantees before deployment. This optimization perspective complements the learning-based approach by offering design-time verification and run-time resilience in a unified framework.

\addb{
An additional direction for future work is the incorporation of explicit
collision-aware dynamics. In the present formulation, interactions among agents
are treated as part of the environment and therefore enter each agent’s
epistemic state. This allows potential collisions to be interpreted as
unexpected observations that trigger the same epistemic-update mechanisms used
for other stressors. However, extending the framework with explicit
collision models (such as contention-aware reward structures, joint-action
feasibility constraints, or coordination protocols) would enable proactive
avoidance and richer multi-agent interactions in collision-prone settings.
}\addb{Finally, while our algorithms incorporate safeguards against noisy observations through residual thresholds, exceedance counting, and evidence margins, modeling unreliable communication remains an important future extension.}

%\begin{table}[!t]
%\caption{An Example of a Table\label{tab:table1}}
%\centering
%\begin{tabular}{|c||c|}
%\hline
%One & Two\\
%\hline
%Three & Four\\
%\hline
%\end{tabular}
%\end{table}

\bibliographystyle{IEEEtran}
\bibliography{references}

\end{document}